\newcolumntype{R}{>{\raggedleft\arraybackslash}X}
\newcolumntype{L}{>{\raggedright\arraybackslash}X}
\newcolumntype{C}{>{\centering\arraybackslash}X}
\newcolumntype{A}{>{\columncolor{gray!25}}C}
\newcolumntype{a}{>{\columncolor{gray!25}}c}
\newcolumntype{.}{D{.}{.}{-1}}
\renewcommand\p@subfigure{\arabic{figure}.}
\def\addlegendimage{\csname pgfplots@addlegendimage\endcsname}
\setlist[itemize]{leftmargin=3\parindent}
\setlist[enumerate]{leftmargin=2\parindent}
\theoremstyle{plain}
\newtheorem{corollary}{Corollary}
\newtheorem{lemma}{Lemma}
\newtheorem{proposition}{Proposition}
\newtheorem{theorem}{Theorem}
\theoremstyle{definition}
\newtheorem{definition}{Definition}
\newtheorem{example}{Example}
\theoremstyle{remark}
\newtheorem{remark}{Remark}
\def\keywords{\vspace{.5em} % Add keywords
{\textit{Keywords}:\,\relax%
}}
\def\JEL{\vspace{.5em} % Add keywords
{\textbf{\emph{JEL} classification number}:\,\relax%
}}
\def\AMS{\vspace{.5em} % Add keywords
{\textbf{\emph{AMS} classification number}:\,}}
\author{L\'aszl\'o Csat\'o\thanks{e-mail: laszlo.csato@uni-corvinus.hu} }
\affil{Department of Operations Research and Actuarial Sciences \\ Corvinus University of Budapest \\ MTA-BCE ''Lend\"ulet'' Strategic Interactions Research Group \\ Budapest, Hungary}
\title{On the additivity of preference aggregation methods\thanks{~We thank S\'andor Boz\'oki for his comments on earlier versions of the manuscript. \newline
The research was supported by OTKA grant K-77420. \newline
This research was supported by the European Union and the State of Hungary, co-financed by the European Social Fund in the framework of TÁMOP 4.2.4. A/1-11-1-2012-0001 'National Excellence Program'.}}
\date{\today}
\begin{document}

\maketitle

\begin{abstract}
The paper reviews some axioms of additivity concerning ranking methods used for generalized tournaments with possible missing values and multiple comparisons. It is shown that one of the most natural properties, called consistency, has strong links to independence of irrelevant comparisons, an axiom judged unfavourable when players have different opponents.
Therefore some directions of weakening consistency are suggested, and several ranking methods, the score, generalized row sum and least squares as well as fair bets and its two variants (one of them entirely new) are analysed whether they satisfy the properties discussed. It turns out that least squares and generalized row sum with an appropriate parameter choice preserve the relative ranking of two objects if the ranking problems added have the same comparison structure.

\JEL{D71}

\AMS{15A06, 91B14}

\keywords{Preference aggregation; tournament ranking; paired comparison; additivity; axiomatic approach}
\end{abstract}

\section{Introduction}

Paired-comparison based ranking emerges in many fields such as social choice theory \citep{ChebotarevShamis1998a}, sports \citep{Landau1895, Landau1914, Zermelo1929}, or psychology \citep{Thurstone1927}. Here the most general version of the problem, allowing for different preference intensities (including ties) as well as incomplete and multiple comparisons among the objects, is addressed.

The paper contributes to this field by the investigation of additivity: how the ranking changes by adding two independent tournaments. We get a certain impossibility theorem, either total additivity or independence of irrelevant comparisons should be sacrificed in order to get a meaningful ranking method. Therefore some directions of weakening additivity are studied. 

Due to the investigation of the performance of ranking methods with respect to the additive properties, the current paper can also be regarded as a supplement to the findings of \citet{ChebotarevShamis1998a} and \citet{Gonzalez-DiazHendrickxLohmann2013} by analysing new methods and axioms.

Throughout the paper, we concentrate on the scoring procedures listed below:
\begin{itemize}[label=$\bullet$]
\item
\textbf{Score}: a natural method for binary tournaments (for characterizations on restricted domains, see \citet{Young1974, HanssonSahlquist1976, Rubinstein1980, NitzanRubinstein1981, Bouyssou1992}).

\item
\textbf{Least squares}: a well-known procedure in statistics and psychology (see \citet{Thurstone1927, Gulliksen1956, KaiserSerlin1978}).

\item
\textbf{Generalized row sum}: a parametric family of ranking methods resulting in the score and least squares as limits (see \citet{Chebotarev1989, Chebotarev1994}).

\item
\textbf{Fair bets}: an extensively studied method in social choice theory as well as a procedure for ranking the nodes of directed graphs (see \citet{Daniels1969, MoonPullman1970, SlutzkiVolij2005, SlutzkiVolij2006, SlikkerBormvandenBrink2012}).

\item
\textbf{Dual fair bets}: a scoring procedure obtained from fair bets by 'reversing' an axiom in its characterization (see \citet{SlutzkiVolij2005}).

\item
\textbf{Copeland fair bets}: a novel method introduced in this paper by applying the idea of \citet{HeringsvanderLaanTalman2005} for fair bets.
\end{itemize}
%Their performance with respect to the additive properties will be extensively studied.

A main, somewhat unexpected result is that one natural axiom of additivity, consistency -- which requires the relative ranking of two objects to remain the same if it agrees in both ranking problems -- seems to be a surprisingly severe condition.
First, among the procedures analysed, only the trivial score method satisfies it.
Second, together with two basic properties, it implies a kind of independence of irrelevant comparisons. However, the latter is a property one would rather not have in this general framework, since it means that the performance of the opponents (objects compared with a given one) does not count.

Therefore some directions of weakening additivity are studied. One of them turns out to be fruitful, at least in the case of some ranking procedures, which preserve the relative ranking when the ranking problems added have the same comparison structure. This axiom is worth to consider as a watershed, application of procedures without it remains dubious.

Another way to avoid the impossibility result is to restrict the domain, since independence of irrelevant comparisons does not cause problems in the case of round-robin tournaments. It will be revealed that fair bets, dual fair bets and Copeland fair bets show a strange behaviour even on this narrow subset.

The axiomatic approach followed offers some guidelines for the choice of the appropriate ranking procedure as well as it contributes to a better understanding of them.
It is important because, despite the extended literature (for reviews, see \citet{Laslier1997} and \citet{ChebotarevShamis1998a}), characterizations of scoring methods (which provide a ranking by associating scores for the objects such that a higher value corresponds to a better position in the ranking) on this wide domain are limited, they exist only for fair bets \citep{SlutzkiVolij2005} and invariant methods \citep{SlutzkiVolij2006}.

The paper is structured as follows. Section~\ref{Sec2} presents the setting of the problem, the definitions of ranking methods examined, and some invariance properties known from the literature. In Section~\ref{Sec3}, four axioms linked to additivity of ranking problems are reviewed. Section~\ref{Sec4} proves that the strongest additive property has unfavourable implications on the general domain used. Finally, Section~\ref{Sec5} concludes the results, summarizes them visually in a table, while the connections of the axioms are displayed in a graph.

\section{Preliminaries} \label{Sec2}

The following part of the paper discusses the representation of ranking problems, defines the scoring procedures investigated later, and presents some structural invariance axioms used in the literature.

\subsection{Notations} \label{Sec21}

Let $N = \{ X_1,X_2, \dots, X_n \}$, $n \in \mathbb{N}$ be the \emph{set of objects} and $T = (t_{ij}) \in \mathbb{R}^{n \times n}$ be the \emph{tournament matrix} such that $t_{ij} + t_{ji} \in \mathbb{N}$.
$t_{ij}$ represents the aggregate score of object $X_i$ against $X_j$, $t_{ij} / (t_{ij} + t_{ji})$ may be interpreted as the likelihood that object $X_i$ is better than object $X_j$. $t_{ii} = 0$ is assumed for all $i = 1,2, \dots ,n$.%\footnote{~diagonal entries have no significance in the discussion.}
A possible derivation of the tournament matrix can be found in \citet{Gonzalez-DiazHendrickxLohmann2013} and \citet{Csato2015a}.

The pair $(N,T)$ is called a \emph{ranking problem}.
%\footnote{~In certain cases we will denote it only by the tournament matrix $T$, whose rows already determine the set of objects $N$.}
The set of ranking problems is denoted by $\mathcal{R}$.
A \emph{scoring procedure} $f$ is an $\mathcal{R} \to \mathbb{R}^n$ function, giving a rating for each object. It immediately determines a ranking (a transitive and complete weak order on the set $N \times N$) $\succeq$ such that $f_i \geq f_j$ means that $X_i$ is ranked weakly above $X_j$, denoted by $X_i \succeq X_j$.
Ratings provide cardinal while rankings provide ordinal information about the objects.

\begin{remark}
Every scoring method can be considered as a \emph{ranking method}. This paper discusses only ranking methods derived from scoring procedures, the two notions will be used analogously.
\end{remark}
%Throughout the paper, the notions of rating and ranking methods will be used analogously since only scoring procedures are discussed.

A ranking problem $(N,T)$ has the \emph{results matrix} $A = T - T^\top = (a_{ij}) \in \mathbb{R}^{n \times n}$ and the \emph{matches matrix} $M = T + T^\top = (m_{ij}) \in \mathbb{N}^{n \times n}$ such that $m_{ij}$ is the number of the comparisons between $X_i$ and $X_j$, whose outcome is given by $a_{ij}$. Matrices $A$ and $M$ also define the tournament matrix by $T = (A + M)/2$.

\begin{remark}
Note that any ranking problem $(N,T) \in \mathcal{R}$ can be denoted analogously as $(N,A,M)$ with the restriction $|a_{ij}| \leq m_{ij}$ for all $X_i,X_j \in N$, that is, the outcome of any paired comparison between two objects cannot 'exceed' their number of matches.
Despite it is not parsimonious, usually the second notation will be used in the following because it helps to define certain ranking methods and axioms.
\end{remark}

A ranking problem is called \emph{round-robin} if $m_{ij} = m$ for all $X_i \neq X_j$.
%, namely, every object is compared with all the others exactly as many times.
The set of round-robin ranking problems is denoted by $\mathcal{R}^R$.
$d_i = \sum_{j=1}^n m_{ij}$ is the \emph{total number of comparisons} of object $X_i$.
$m = \max_{X_i,X_j \in N} m_{ij}$ is the \emph{maximal number of comparisons} in the ranking problem.

Matrix $M$ can be represented by an undirected multigraph $G := (V,E)$, where vertex set $V$ corresponds to the object set $N$, and the number of edges between objects $X_i$ and $X_j$ is equal to $m_{ij}$. Then the degree of node $X_i$ is $d_i$.
Graph $G$ is the \emph{comparison multigraph} associated with the ranking problem $(N,A,M)$, however, it is independent of the results matrix $A$. The \emph{Laplacian matrix} $L = \left( \ell_{ij} \right) \in \mathbb{R}^{n \times n}$ of graph $G$ is given by $\ell_{ij} = -m_{ij}$ for all $X_i \neq X_j$ and $\ell_{ii} = d_i$ for all $X_i \in N$.
%$L$ has real and nonnegative eigenvalues (it is positive semidefinite) \citep[Theorem 2.1]{Mohar1991}, denoted by $\mu_1 \geq \mu_2 \geq \dots \geq \mu_{n-1} \geq \mu_{n} = 0$.

A \emph{path} from $X_{k_1}$ to $X_{k_s}$ is a sequence of objects $X_{k_1}, X_{k_2}, \dots , X_{k_s}$ such that $m_{k_\ell k_{\ell+1}} > 0$ for all $\ell = 1,2, \dots ,s-1$. Two objects are connected if there exists a path between them. Ranking problem $(N,A,M) \in \mathcal{R}$ is said to be \emph{connected} if every pair of objects is connected.
The set of connected ranking problems is denoted by $\mathcal{R}^C$.

A \emph{directed path} from $X_{k_1}$ to $X_{k_s}$ is a sequence of objects $X_{k_1}, X_{k_2}, \dots , X_{k_s}$ such that $t_{k_\ell k_{\ell+1}} > 0$ for all $\ell = 1,2, \dots ,s-1$.
%Objects $X_i$ and $X_j$ are in the same \emph{league} if there exists a directed path from $X_i$ to $X_j$ and from $X_j$ to $X_i$.
Ranking problem $(N,T) \in \mathcal{R}$ is called \emph{irreducible} if there exists a directed path from $X_i$ to $X_j$ for all $X_i, X_j \in N$.
The set of irreducible ranking problems is denoted by $\mathcal{R}^I$.

Let $\mathbf{e} \in \mathbb{R}^n$ denote the column vector with $e_i = 1$ for all $i = 1,2, \dots ,n$. Let $I \in \mathbb{R}^{n \times n}$ be the identity matrix, $O \in \mathbb{R}^{n \times n}$ be the zero matrix.
%and $L$ be the Laplacian matrix of the comparison multigraph $G$ associated with the ranking problem $(N,A,M) \in \mathcal{R}$.

\subsection{Ranking methods} \label{Sec22}

Tournament ranking involves three main challenges. The first one is the possible appearance of circular triads, when object $X_i$ is better than $X_j$ (that is, $a_{ij} > a_{ji}$), $X_j$ is better than $X_k$, but $X_k$ is better than $X_i$. If preference intensities also count as in the model above, other triplets ($X_i, X_j, X_k$) may produce problems, too.
The second problem is that the performance of objects compared with $X_i$ strongly influences the observable paired comparison outcomes $a_{ij}$. For example, if $X_i$ was compared only with $X_j$, then its rating may depend on other results of $X_j$.
The third difficulty is given by the different number of comparisons of the objects, $d_i \neq d_j$. It must be realized that there is no entirely satisfactory way of ranking if the number of replications of each object varies appreciably \citep[p.~1]{David1987}. However, the current paper does not deal with the question whether a given dataset may be globally ranked in a meaningful way or the data are inherently inconsistent, an issue investigated for example by \citet{JiangLimYaoYe2011}.
Since each problem occur just if $n \geq 3$, the case of two objects becomes trivial.

Now some scoring procedures are presented. They will be used only for ranking purposes, so they will be called ranking methods.
The first one does not take the comparison structure into account.

\begin{definition} \label{Def21}
\emph{Score}: $\mathbf{s}(N,A,M) = A \mathbf{e}$.
\end{definition}

%Score will also be referred to as \emph{row sum}.
The following \emph{parametric} procedure was constructed axiomatically by \citet{Chebotarev1989} and thoroughly analysed in \citet{Chebotarev1994}.

\begin{definition} \label{Def22}
\emph{Generalized row sum}: it is the unique solution $\mathbf{x}(\varepsilon)(N,A,M)$ of the system of linear equations $(I+ \varepsilon L) \mathbf{x}(\varepsilon)(N,A,M) = (1 + \varepsilon m n) \mathbf{s}$, where $\varepsilon > 0$ is a parameter. 
\end{definition}

Generalized row sum adjusts the standard score $s_i$ by accounting for the performance of objects compared with $X_i$, and adds an infinite depth to this argument: scores of all objects available on a path appear in the calculation. $\varepsilon$ indicates the importance attributed to this correction.
Generalized row sum results in score if $\varepsilon \to 0$.
%An alternative solution would be to count only the scores of direct opponents as in \citet{David1987}.

\begin{lemma} \label{Lemma21}
$\lim_{\varepsilon \to 0} \mathbf{x}(\varepsilon)(N,A,M) = \mathbf{s}(N,A,M)$.
\end{lemma}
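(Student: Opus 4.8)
The plan is to take the limit directly in the defining linear system $(I + \varepsilon L)\mathbf{x}(\varepsilon) = (1 + \varepsilon m n)\mathbf{s}$. First I would observe that for $\varepsilon > 0$ small enough the matrix $I + \varepsilon L$ is invertible: since $L$ is a (symmetric, positive semidefinite) Laplacian, its eigenvalues are nonnegative, so the eigenvalues of $I + \varepsilon L$ are all at least $1$, and in particular the matrix is nonsingular for every $\varepsilon > 0$. Hence $\mathbf{x}(\varepsilon) = (1 + \varepsilon m n)(I + \varepsilon L)^{-1}\mathbf{s}$ is well defined, which also justifies the word ``unique'' in Definition~\ref{Def22}.

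Next I would pass to the limit $\varepsilon \to 0$. The map $\varepsilon \mapsto (I + \varepsilon L)^{-1}$ is continuous at $\varepsilon = 0$ (for instance by Cramer's rule, each entry is a rational function of $\varepsilon$ whose denominator $\det(I + \varepsilon L)$ tends to $\det I = 1 \neq 0$), so $\lim_{\varepsilon \to 0}(I + \varepsilon L)^{-1} = I$. Likewise the scalar factor satisfies $\lim_{\varepsilon \to 0}(1 + \varepsilon m n) = 1$. Therefore
\[
\lim_{\varepsilon \to 0} \mathbf{x}(\varepsilon)(N,A,M) = \left( \lim_{\varepsilon \to 0} (1 + \varepsilon m n) \right)\left( \lim_{\varepsilon \to 0} (I + \varepsilon L)^{-1} \right)\mathbf{s}(N,A,M) = I\, \mathbf{s}(N,A,M) = \mathbf{s}(N,A,M),
\]
which is the claim.

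There is essentially no hard step here; the only point requiring a little care is the invertibility of $I + \varepsilon L$, which is where the Laplacian structure of $L$ (nonnegative spectrum) is used. Alternatively, one can avoid spectral arguments entirely by rewriting the system as $\mathbf{x}(\varepsilon) = \mathbf{s} + \varepsilon\left( mn\,\mathbf{s} - L\mathbf{x}(\varepsilon) \right)$ and noting that, since $\mathbf{x}(\varepsilon)$ stays bounded as $\varepsilon \to 0$ (again because $\|(I+\varepsilon L)^{-1}\| \leq 1$), the bracketed term is bounded, so the $\varepsilon$-multiple of it vanishes in the limit. Either way the statement follows.
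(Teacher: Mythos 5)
Your argument is correct and is simply the fully spelled-out version of what the paper asserts in one line (``It follows from Definitions~\ref{Def21} and \ref{Def22}''): write $\mathbf{x}(\varepsilon) = (1+\varepsilon mn)(I+\varepsilon L)^{-1}\mathbf{s}$ and let $\varepsilon \to 0$. The care you take with invertibility of $I+\varepsilon L$ and continuity of the inverse is appropriate but does not constitute a different route.
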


\begin{proof}
It follows from Definitions~\ref{Def21} and \ref{Def22}.
\end{proof}

%There are few information about the appropriate value of parameter $\varepsilon$.
%In our case, the value of $r_{ij}$ is limited by $-m_{ij}$ and $m_{ij}$, thus some conditions can be made on $\varepsilon$.

Based on some reasonableness condition, \citet{Chebotarev1994} identifies a possible upper bound for $\varepsilon$.

\begin{definition} \label{Def23}
\emph{Reasonable choice of $\varepsilon$} \citep[Proposition~5.1]{Chebotarev1994}:
The \emph{reasonable upper bound} of $\varepsilon$ is $1 / \left[ m(n-2) \right]$.
\end{definition}

The reasonable choice is not well-defined in the trivial case of $n=2$, thus $n \geq 3$ is implicitly assumed in the following.

\begin{proposition} \label{Prop21}
If $\varepsilon$ is within the reasonable interval $\left( 0,\, 1 / \left[ m(n-2) \right] \right]$, then $-m(n-1) \leq x_i(\varepsilon)(N,A,M) \leq m(n-1)$ for all $X_i \in N$.
\end{proposition}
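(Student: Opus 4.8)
The plan is to estimate the largest coordinate of $\mathbf{x}(\varepsilon)$ directly from the defining linear system, exploiting two facts together with the reasonable bound on $\varepsilon$: that $\mathbf{x}(\varepsilon)$ sums to zero, and that every degree satisfies $d_i \le m(n-1)$. First I would record these. Since $L\mathbf{e} = \mathbf{0}$ and $A$ is skew-symmetric (so $\mathbf{e}^\top \mathbf{s} = \mathbf{e}^\top A\mathbf{e} = 0$), multiplying $(I+\varepsilon L)\mathbf{x}(\varepsilon) = (1+\varepsilon mn)\mathbf{s}$ by $\mathbf{e}^\top$ gives $\sum_i x_i(\varepsilon) = 0$. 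Moreover $(N,-A,M)$ is again a ranking problem and the map $\mathbf{s}\mapsto\mathbf{x}(\varepsilon)$ is linear with $\mathbf{s}(-A) = -\mathbf{s}(A)$, so it is enough to prove the upper bound $x_i(\varepsilon)\le m(n-1)$; applying it to $(N,-A,M)$ then yields the lower bound.

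Fix a ranking problem and choose $k$ with $|x_k(\varepsilon)| = \max_i|x_i(\varepsilon)|$; replacing $A$ by $-A$ if necessary we may assume $x_k(\varepsilon)\ge 0$, hence $x_k(\varepsilon) = \max_i x_i(\varepsilon)$ and $-x_k(\varepsilon)\le x_j(\varepsilon)\le x_k(\varepsilon)$ for every $j$. If $x_k(\varepsilon)=0$ there is nothing to prove, so assume $x_k(\varepsilon)>0$; then $d_k\ge 1$ (an isolated object has rating $0$, since all its results vanish). Writing $x_i := x_i(\varepsilon)$, the $k$-th equation of the system is $x_k + \varepsilon\sum_{j\ne k} m_{kj}(x_k - x_j) = (1+\varepsilon mn)s_k$, with all summands nonnegative.

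The decisive step is a nontrivial lower bound on that sum. Using $\sum_{j\ne k} x_j = -x_k$, together with $m - m_{kj}\ge 0$ and $x_j\ge -x_k$,
\[
\sum_{j\ne k} m_{kj} x_j \;=\; -m x_k - \sum_{j\ne k}(m - m_{kj})x_j \;\le\; -m x_k + x_k\big(m(n-1) - d_k\big),
\]
so that $\sum_{j\ne k} m_{kj}(x_k - x_j) = d_k x_k - \sum_{j\ne k} m_{kj} x_j \ge \big(2d_k - m(n-2)\big)x_k$. Combined with $s_k \le d_k$, the $k$-th equation becomes
\[
x_k\big(1 + \varepsilon(2d_k - m(n-2))\big) \;\le\; (1+\varepsilon mn)\,d_k .
\]
Finally I would let the reasonable range finish the job. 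Because $\varepsilon m(n-2)\le 1$ and $d_k\ge 1$, the coefficient $1 + \varepsilon(2d_k - m(n-2)) = \big(1-\varepsilon m(n-2)\big) + 2\varepsilon d_k$ is positive, so dividing reduces the target $x_k\le m(n-1)$ to $(1+\varepsilon mn)d_k \le m(n-1)\big(1 + \varepsilon(2d_k - m(n-2))\big)$, whose two sides differ by exactly $\big(m(n-1) - d_k\big)\big(1 - \varepsilon m(n-2)\big)\ge 0$ (using $d_k\le m(n-1)$ and $\varepsilon\le 1/[m(n-2)]$). This gives $x_k\le m(n-1)$, and the lower bound follows from the symmetry noted above.

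The main obstacle is precisely the lower estimate of $\sum_{j\ne k} m_{kj}(x_k - x_j)$: the obvious bound (nonnegativity) only yields the far too weak $x_k\le (1+\varepsilon mn)s_k$, and bounding the positive part of $\mathbf{x}$ crudely by $(n-1)x_k$ also fails. The gain must come from combining $\sum_i x_i = 0$ with the two-sided inequality $-x_k\le x_j\le x_k$, which is exactly why $k$ is taken to maximise $|x_i|$ rather than $x_i$; once that estimate is in place the rest is routine algebra, calibrated so that the cut-off $1/[m(n-2)]$ is what makes the final inequality factor with a nonnegative remainder.
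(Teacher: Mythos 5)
Your argument is correct, and it is worth noting that the paper itself offers no proof of Proposition~\ref{Prop21}: it simply defers to \citet[Property~13]{Chebotarev1994}. Your write-up therefore supplies a self-contained derivation where the paper only cites the literature. I checked the steps: $\mathbf{e}^\top\mathbf{x}(\varepsilon)=0$ follows from $\mathbf{e}^\top L=\mathbf{0}^\top$ and the skew-symmetry of $A$; the reduction to the upper bound via $(N,-A,M)$ is legitimate because $(I+\varepsilon L)^{-1}$ depends only on $M$ and $\mathbf{s}(N,-A,M)=-\mathbf{s}(N,A,M)$; an isolated object indeed gets rating zero, so $d_k\ge 1$ when $x_k>0$; and the decisive estimate
\[
\sum_{j\ne k} m_{kj}\bigl(x_k-x_j\bigr)\;\ge\;\bigl(2d_k-m(n-2)\bigr)x_k
\]
follows exactly as you say from $\sum_{j\ne k}x_j=-x_k$, $m-m_{kj}\ge 0$ and $x_j\ge -x_k$. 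The final algebra also checks out: the difference $m(n-1)\bigl(1+\varepsilon(2d_k-m(n-2))\bigr)-(1+\varepsilon mn)d_k$ equals $\bigl(m(n-1)-d_k\bigr)\bigl(1-\varepsilon m(n-2)\bigr)$, which is nonnegative precisely on the reasonable interval, and the coefficient $\bigl(1-\varepsilon m(n-2)\bigr)+2\varepsilon d_k$ you divide by is positive there. The choice of $k$ maximising $|x_i|$ rather than $x_i$ is indeed the essential idea, since it is what makes the two-sided bound $-x_k\le x_j\le x_k$ available; without it the estimate degenerates as you observe. The only caveats are cosmetic: the degenerate case $m=0$ is excluded by the hypothesis itself (the reasonable interval is then undefined), and the two symmetry reductions you invoke (reducing to the upper bound, then assuming $x_k\ge 0$) are redundant with one another, though harmlessly so.
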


\begin{proof}
See \citet[Property~13]{Chebotarev1994}.
\end{proof}

Note that in a round-robin ranking problem $-m(n-1) \leq s_i(N,A,M) \leq m(n-1)$ holds for all $X_i \in N$. 

Both the score and generalized row sum rankings are well-defined and easily computable from a system of linear equations for all ranking problems $(N,A,M) \in \mathcal{R}$.

The least squares method was suggested by \citet{Thurstone1927} and \citet{Horst1932}.

\begin{definition} \label{Def24}
\emph{Least squares}: it is the solution $\mathbf{q}(N,A,M)$ of the system of linear equations $L \mathbf{q}(N,A,M) = \mathbf{s}(N,A,M)$ and $\mathbf{e}^\top \mathbf{q}(N,A,M) = 0$.
\end{definition}

Generalized row sum results in least squares if $\varepsilon \to \infty$.

\begin{lemma} \label{Lemma22}
$\lim_{\varepsilon \to \infty} \mathbf{x}(\varepsilon)(N,A,M) = mn \mathbf{q}(N,A,M)$.
\end{lemma}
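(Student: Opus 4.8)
The plan is to work directly with the two defining linear systems and to handle the singularity of the Laplacian $L$ by restricting attention to the hyperplane $\{ \mathbf{x} : \mathbf{e}^\top \mathbf{x} = 0 \}$, on which both $\mathbf{x}(\varepsilon)$ and $mn\mathbf{q}$ live. First I would record three elementary facts: (i) $I + \varepsilon L$ is positive definite for every $\varepsilon > 0$, since $L$ is a Laplacian (hence positive semidefinite) and $I$ is positive definite, so $\mathbf{x}(\varepsilon)$ is well-defined; (ii) $L\mathbf{e} = \mathbf{0}$, because every row of $L$ sums to $d_i - \sum_{j \neq i} m_{ij} = 0$; and (iii) $\mathbf{e}^\top \mathbf{s} = \mathbf{e}^\top A \mathbf{e} = 0$, because $A = T - T^\top$ is skew-symmetric. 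Left-multiplying the equation of Definition~\ref{Def22} by $\mathbf{e}^\top$ and using that $L$ is symmetric with $L\mathbf{e} = \mathbf{0}$ together with (iii) gives $\mathbf{e}^\top \mathbf{x}(\varepsilon) = 0$ for all $\varepsilon > 0$; combined with $\mathbf{e}^\top \mathbf{q} = 0$ from Definition~\ref{Def24}, the difference $\mathbf{r}(\varepsilon) := \mathbf{x}(\varepsilon) - mn\mathbf{q}$ satisfies $\mathbf{e}^\top \mathbf{r}(\varepsilon) = 0$.

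Next I would substitute $\mathbf{x}(\varepsilon) = mn\mathbf{q} + \mathbf{r}(\varepsilon)$ into $(I + \varepsilon L)\mathbf{x}(\varepsilon) = (1 + \varepsilon mn)\mathbf{s}$ and simplify using $L\mathbf{q} = \mathbf{s}$ (Definition~\ref{Def24}); the terms that would otherwise blow up (those proportional to $\varepsilon mn \mathbf{s}$) cancel, leaving
\[
(I + \varepsilon L)\,\mathbf{r}(\varepsilon) = \mathbf{s} - mn\mathbf{q} =: \mathbf{c},
\]
a vector that does \emph{not} depend on $\varepsilon$. Hence $\mathbf{r}(\varepsilon) = (I + \varepsilon L)^{-1}\mathbf{c}$, and from $\mathbf{e}^\top \mathbf{s} = 0$ and $\mathbf{e}^\top \mathbf{q} = 0$ we also get $\mathbf{e}^\top \mathbf{c} = 0$. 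Diagonalising $L$ by an orthonormal eigenbasis $\mathbf{v}_1, \dots, \mathbf{v}_n$ with eigenvalues $0 = \lambda_1 \le \lambda_2 \le \dots \le \lambda_n$ and $\mathbf{v}_1 \parallel \mathbf{e}$, the expansion of $\mathbf{c}$ has no $\mathbf{v}_1$-component, so
\[
\mathbf{r}(\varepsilon) = (I + \varepsilon L)^{-1}\mathbf{c} = \sum_{k \ge 2} \frac{\mathbf{v}_k^\top \mathbf{c}}{1 + \varepsilon \lambda_k}\, \mathbf{v}_k \longrightarrow \mathbf{0} \qquad (\varepsilon \to \infty),
\]
because $\lambda_k > 0$ for every $k \ge 2$ on connected ranking problems, i.e. on the domain where $\mathbf{q}$ is well-defined. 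Therefore $\mathbf{x}(\varepsilon) = mn\mathbf{q} + \mathbf{r}(\varepsilon) \to mn\mathbf{q}$, which is the claim.

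The one genuine subtlety — and the step I expect to be the main obstacle — is precisely the kernel of $L$: naively dividing the generalized row sum equation by $\varepsilon$ and letting $\varepsilon \to \infty$ only yields $L\mathbf{x}_\infty = mn\mathbf{s} = mn L\mathbf{q}$, which pins down $\mathbf{x}_\infty$ merely up to an additive multiple of $\mathbf{e}$. The normalisation $\mathbf{e}^\top \mathbf{x}(\varepsilon) = 0$ obtained in the first step is what removes this ambiguity and forces the limit to be exactly $mn\mathbf{q}$ rather than $mn\mathbf{q}$ plus a drift along $\mathbf{e}$; everything else is bookkeeping. If one prefers to avoid the spectral decomposition, the same conclusion follows from the uniform estimate $\| (I + \varepsilon L)^{-1}\mathbf{c} \| \le \| \mathbf{c} \| / (1 + \varepsilon \lambda_2)$ valid on the subspace $\mathbf{e}^\perp$ containing $\mathbf{c}$.
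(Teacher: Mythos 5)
Your proof is correct. The paper itself disposes of this lemma with the single sentence ``It follows from Definitions~\ref{Def22} and \ref{Def24},'' so you have in effect supplied the argument the paper leaves implicit, and you have done so along the natural route: the cancellation of the $\varepsilon mn\mathbf{s}$ terms reduces everything to showing $(I+\varepsilon L)^{-1}\mathbf{c}\to\mathbf{0}$ for a fixed $\mathbf{c}\perp\mathbf{e}$, and your observation that $\mathbf{e}^\top\mathbf{x}(\varepsilon)=0$ is exactly the normalisation needed to rule out a drift along $\ker L$. Your restriction to connected ranking problems (so that $\lambda_2>0$) is also the right caveat, consistent with the paper's own restriction to $\mathcal{R}^C$ justified by Proposition~\ref{Prop22}.
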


\begin{proof}
It follows from Definitions~\ref{Def22} and \ref{Def24}.
\end{proof}

\begin{proposition} \label{Prop22}
The least squares ranking is unique if and only if the comparison multigraph $G$ of the ranking problem $(N,A,M) \in \mathcal{R}$ is connected.
\end{proposition}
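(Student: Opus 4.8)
The plan is to reduce everything to standard spectral facts about the graph Laplacian. First I would record the two elementary identities that make the defining system of Definition~\ref{Def24} well-posed. The matrix $L$ is symmetric and positive semidefinite with $L\mathbf{e} = \mathbf{0}$, and the score vector always satisfies $\mathbf{e}^\top \mathbf{s} = 0$, because $\mathbf{s} = A\mathbf{e}$ and $A = -A^\top$ force $\mathbf{e}^\top A \mathbf{e} = 0$. In fact more is true: if $C \subseteq N$ is a union of connected components of the comparison multigraph $G$, then its indicator vector $\mathbf{1}_C$ lies in $\ker L$, and $\mathbf{1}_C^\top \mathbf{s} = 0$, since $m_{ij} = 0$ (hence $a_{ij} = 0$) whenever $X_i \in C$ and $X_j \notin C$, so $\mathbf{1}_C^\top A \mathbf{e}$ reduces to a sum over the $C \times C$ block of $A$, which vanishes by skew-symmetry.

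Next I would invoke the classical fact that $\operatorname{rank} L = n - c$, where $c$ is the number of connected components of $G$, with $\ker L = \operatorname{span}\{ \mathbf{1}_{C_1}, \dots, \mathbf{1}_{C_c} \}$ for the components $C_1, \dots, C_c$. If a self-contained argument is wanted, this follows from the incidence-matrix factorization $L = BB^\top$, which gives $\mathbf{x}^\top L \mathbf{x} = \| B^\top \mathbf{x} \|^2 = \sum_{X_i, X_j \in N} m_{ij} (x_i - x_j)^2 / 2$, so $L\mathbf{x} = \mathbf{0}$ exactly when $\mathbf{x}$ is constant on each component. Combined with symmetry of $L$, this yields $\operatorname{range} L = (\ker L)^\perp$, and by the previous paragraph $\mathbf{s} \in (\ker L)^\perp$; hence $L\mathbf{q} = \mathbf{s}$ always has solutions, forming an affine subspace $\mathbf{q}_0 + \ker L$ of dimension $c$.

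For the ``if'' direction, suppose $G$ is connected, so $c = 1$ and $\ker L = \operatorname{span}\{\mathbf{e}\}$. The solution set of $L\mathbf{q} = \mathbf{s}$ is the line $\{ \mathbf{q}_0 + t\mathbf{e} : t \in \mathbb{R} \}$, and the normalization $\mathbf{e}^\top \mathbf{q} = 0$ becomes $\mathbf{e}^\top \mathbf{q}_0 + tn = 0$, with the unique root $t = -\mathbf{e}^\top \mathbf{q}_0 / n$; this pins down $\mathbf{q}$ uniquely. For the ``only if'' direction I would argue by contraposition. If $G$ is disconnected then $c \geq 2$; the linear functional $\mathbf{z} \mapsto \mathbf{e}^\top \mathbf{z}$ on $\ker L$ is not identically zero, as it maps $\mathbf{e} = \sum_k \mathbf{1}_{C_k} \in \ker L$ to $n \neq 0$, so its kernel inside $\ker L$ has dimension $c - 1 \geq 1$. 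Taking any vector $\mathbf{q}$ in the (nonempty) intersection of the affine solution set with the hyperplane $\{\mathbf{e}^\top \mathbf{q} = 0\}$ and translating it by a nonzero element $\mathbf{z}$ of that kernel gives $\mathbf{q} + \mathbf{z} \neq \mathbf{q}$ still satisfying $L(\mathbf{q}+\mathbf{z}) = \mathbf{s}$ and $\mathbf{e}^\top(\mathbf{q}+\mathbf{z}) = 0$, so the least squares ranking is not unique.

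The only genuine obstacle is the Laplacian rank formula $\operatorname{rank} L = n - c$; once that is in hand, the rest is routine bookkeeping with the orthogonality of $\ker L$ and $\operatorname{range} L$ and with the single scalar $\mathbf{e}^\top \mathbf{e} = n$. A secondary point to be careful about in the ``only if'' direction is to verify that a solution of $L\mathbf{q} = \mathbf{s}$ exists at all (so that non-uniqueness really means ``several'' rather than ``none''), which is exactly what the observation $\mathbf{1}_C^\top \mathbf{s} = 0$ in the first paragraph secures.
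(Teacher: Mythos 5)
Your argument is correct and essentially complete, but it cannot be compared step-by-step with the paper's proof for a simple reason: the paper does not prove Proposition~\ref{Prop22} at all, it merely cites \citet{BozokiCsatoRonyaiTapolcai2015} and notes that \citet{ChebotarevShamis1999} state the fact without discussion. What you supply is the standard self-contained spectral argument that those references rely on: $L = BB^\top$ gives $\operatorname{rank} L = n - c$ with $\ker L$ spanned by component indicators, skew-symmetry of $A$ restricted to each component block gives $\mathbf{1}_C^\top \mathbf{s} = 0$, hence $\mathbf{s} \in \operatorname{range} L$ and the solution set of $L\mathbf{q} = \mathbf{s}$ is an affine subspace of dimension $c$, on which the single constraint $\mathbf{e}^\top \mathbf{q} = 0$ cuts out a point exactly when $c = 1$. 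Your attention to existence (so that non-uniqueness is not vacuous) is exactly the point the cited sources handle. One small refinement you should add in the ``only if'' direction: producing a second rating vector $\mathbf{q} + \mathbf{z}$ does not by itself show the \emph{ranking} is not unique, since distinct ratings could induce the same weak order. But your $\mathbf{z}$ is constant on components, nonzero, and orthogonal to $\mathbf{e}$, hence takes at least two distinct values; replacing $\mathbf{z}$ by $t\mathbf{z}$ for $t$ large positive and large negative then reverses the relative order of two objects in different components, so the induced ranking genuinely changes. With that one sentence the proof is airtight.
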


\begin{proof}
See \citet{BozokiCsatoRonyaiTapolcai2015}.
\citet[p.~220]{ChebotarevShamis1999} mention this fact without further discussion.
\end{proof}

An extensive analysis and a graph interpretation, and further references can be found in \citet{Csato2015a}.

% Lehet internal slackening-gel is, ha igy nem elég jo a cikk -- ezt meg ki kell terjeszteni altalanos domain-re
% \citet{SlikkerBormvandenBrink2012} introduce the class of internal slackening scoring procedures $\lambda^{\alpha}$ for directed graphs, indexed by a single nonnegative parameter $\alpha$. $\lambda^0$ extends the invariant method, $\lambda^{\infty}$ generalises the fair bets method, while $\lambda^1$ coincides with the $\lambda$-scoring method of \citep{BormvandenBrinkSlikker2002}. Internal slackening method can be 

Several scoring procedures build upon the idea of rewarding wins without punishing losses. Two early contributions in this field are \citet{Wei1952} and \citet{Kendall1955}. They have been studied in social choice and game theory by \citet{BormvandenBrinkSlikker2002, HeringsvanderLaanTalman2005, SlikkerBormvandenBrink2012, SlutzkiVolij2005, SlutzkiVolij2006}, among others. 

One of the most widely used methods within this framework is the fair bets method, originally suggested by \citet{Daniels1969} and \citet{MoonPullman1970}.
This procedure was axiomatically characterized by \citet{SlutzkiVolij2005} and \citet{SlutzkiVolij2006}. Its properties have been investigated by \citet{Gonzalez-DiazHendrickxLohmann2013}.

Fair bets is defined with the notation $(N,T)$ for the sake of simplicity.
Let $F = \text{diag}(T^\top \mathbf{e})$, an $n \times n$ diagonal matrix showing the number of losses for each object. 

\begin{definition} \label{Def25}
\emph{Fair bets}: it is the solution $\mathbf{fb}(N,T)$ of the system of linear equations $F^{-1} T \mathbf{fb}(N,T) = \mathbf{fb}(N,T)$ and $\mathbf{e}^\top \mathbf{fb}(N,T) = 1$.
\end{definition}

\begin{proposition} \label{Prop23}
The fair bets ranking is unique if and only if the ranking problem $(N,T) \in \mathcal{R}$ is irreducible.
\end{proposition}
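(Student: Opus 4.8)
The plan is to reformulate the fixed-point equation $F^{-1} T \mathbf{fb} = \mathbf{fb}$, valid when $F$ is invertible (i.e.\ every object has at least one loss, which is forced by $n \geq 3$ together with irreducibility), as an eigenvalue problem for the row-stochastic matrix $P := F^{-1} T$, whose $(i,j)$ entry is $t_{ij} / \sum_k t_{kj}$ after transposition — more carefully, $P = F^{-1}T$ has column sums summing correctly so that $\mathbf{fb}$ is sought as a right eigenvector of $P$ for eigenvalue $1$. One should first check that $1$ is indeed an eigenvalue of $P$: since $\mathbf{e}^\top F = \mathbf{e}^\top T$ (the diagonal of $F$ records column sums of $T$), we get $\mathbf{e}^\top F^{-1} T \cdot F \mathbf{e}$-type identities showing $P^\top$ is column-stochastic, hence stochastic in the appropriate sense, so $1 \in \sigma(P)$ and a nonnegative eigenvector exists by Perron–Frobenius. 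Uniqueness of $\mathbf{fb}$ (after the normalization $\mathbf{e}^\top \mathbf{fb} = 1$) is then equivalent to the eigenvalue $1$ of $P$ being simple with a one-dimensional eigenspace spanned by a strictly positive (or at least sign-definite) vector.

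The key step is to connect irreducibility of the ranking problem $(N,T)$ — existence of a directed path from $X_i$ to $X_j$ for every pair, which by definition uses the relation $t_{k_\ell k_{\ell+1}} > 0$ — with irreducibility of the nonnegative matrix $P$ in the Perron–Frobenius sense. Observe that $P_{ij} > 0$ exactly when $t_{ij} > 0$ (the diagonal scaling by $F^{-1}$ is by strictly positive numbers and does not change the zero pattern), so the directed graph on $N$ whose arcs are $\{(i,j) : t_{ij} > 0\}$ is precisely the graph governing irreducibility of $P$. Hence $(N,T) \in \mathcal{R}^I$ if and only if $P$ (equivalently $P^\top$) is an irreducible nonnegative matrix. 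Then the Perron–Frobenius theorem for irreducible nonnegative matrices gives that the spectral radius — here equal to $1$ because $P$ is (sub)stochastic in the relevant direction — is a simple eigenvalue with a strictly positive eigenvector, unique up to scaling; normalizing by $\mathbf{e}^\top \mathbf{fb} = 1$ pins it down. For the converse, if $(N,T) \notin \mathcal{R}^I$ then $P$ is reducible, and one exhibits either a zero row of $T$ (an object with no wins, making $F^{-1}T$ have a zero row and the eigenspace structure degenerate) or a nontrivial block decomposition yielding several independent nonnegative eigenvectors for eigenvalue $1$, so the normalized solution is not unique.

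I would carry this out in the order: (1) note $F$ is invertible on $\mathcal{R}^I$ and rewrite the defining system as $P\mathbf{fb} = \mathbf{fb}$; (2) verify $1$ is an eigenvalue using $\mathbf{e}^\top F = \mathbf{e}^\top T$; (3) establish the zero-pattern identity $P_{ij} > 0 \iff t_{ij} > 0$ and deduce the equivalence of the two notions of irreducibility; (4) invoke Perron–Frobenius to get simplicity and positivity of the eigenvector, then apply the normalization for the ``if'' direction; (5) handle the ``only if'' direction by analysing the reducible case. The main obstacle is step (5), the careful bookkeeping in the reducible case: one must treat separately the situation where some object has zero total losses (or zero wins), where $F$ or the eigenvector structure breaks down, versus the situation where $P$ is reducible with all diagonal blocks nondegenerate, and show that in each case the normalized fixed point fails to be unique — this is where the literature references (Moon–Pullman, Slutzki–Volij) typically do the delicate work, and I would lean on the standard Perron–Frobenius theory for reducible matrices to organize it cleanly rather than re-derive everything by hand.
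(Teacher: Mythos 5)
The paper does not actually prove this statement; its ``proof'' is a pointer to \citet{MoonPullman1970}. Your Perron--Frobenius route is the standard way to fill that in, and the ``if'' direction of your sketch is essentially sound, modulo one technical slip: neither $P=F^{-1}T$ nor $P^\top$ is stochastic. The row sums of $F^{-1}T$ are the win/loss ratios of the objects (and the column sums of $P^\top$ are the same numbers), so your claim that ``$P^\top$ is column-stochastic'' is false in general. The matrix that is column-stochastic is $TF^{-1}$ --- this is exactly your identity $\mathbf{e}^\top F=\mathbf{e}^\top T$, which gives $\mathbf{e}^\top TF^{-1}=\mathbf{e}^\top$ --- and $F^{-1}T=F^{-1}\left(TF^{-1}\right)F$ is similar to it, whence $\rho(F^{-1}T)=1$. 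With that repair, the zero-pattern argument correctly transfers irreducibility of $(N,T)$ to $F^{-1}T$, and Perron--Frobenius yields a simple eigenvalue $1$ with a strictly positive eigenvector, unique after normalization.

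The genuine gap is in your step (5). Reducibility of $F^{-1}T$ does \emph{not} by itself produce ``several independent nonnegative eigenvectors for eigenvalue $1$'', and your dichotomy (a zero row of $T$, or a block decomposition with a multidimensional eigenspace) misses the typical third case. Take $n=4$ with $t_{12}=t_{21}=1/2$, $t_{13}=t_{14}=t_{23}=t_{24}=1$, $t_{34}=t_{43}=1/2$ and all reversed entries zero: the problem is reducible (no directed path from $X_3$ to $X_1$), every object has a loss so $F$ is invertible, and $F^{-1}T$ is block upper triangular with diagonal blocks of spectral radii $1$ and $1/5$. Hence $1$ is still a \emph{simple} eigenvalue and the system of Definition~\ref{Def25} has the unique normalized solution $(1/2,1/2,0,0)^\top$. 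So in the reducible case the failure is, in general, strict positivity of the solution rather than uniqueness of the solution of the linear system --- which is precisely what the paper's remark following the proposition alludes to. A correct ``only if'' argument must therefore either work with the final classes of the reducible matrix (the multiplicity of the eigenvalue $1$ equals the number of classes whose diagonal block attains the spectral radius, and this can be one) or build the positivity requirement into what ``the fair bets ranking is unique'' means; as written, your step (5) would not go through.
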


\begin{proof}
See \citet{MoonPullman1970}. 
\end{proof}

In the case of reducible ranking problems, Perron-Frobenius theorem does not guarantee that the eigenvector corresponding to the dominant eigenvalue is strictly positive.

Fair bets judges wins against better objects to be more important than losses against worse objects.
One may argue for the opposite, which implies the dual fair bets method \citep{SlutzkiVolij2005} using the transposed tournament matrix $T^\top$, but in this case a lower value is better.

\begin{definition} \label{Def26}
\emph{Dual fair bets}: it is $\mathbf{dfb}(N,T) = -\mathbf{dfb}^{\ast}(N,T)$, where $\mathbf{dfb}^{\ast}(N,T)$ is the solution of the system of linear equations $\left[ \text{diag}(T \mathbf{e}) \right]^{-1} T^\top \mathbf{dfb}^{\ast}(N,T) = \mathbf{dfb}^{\ast}(N,T)$ and $\mathbf{e}^\top \mathbf{dfb}^{\ast}(N,T) = 1$.
\end{definition}

The transformation $\mathbf{dfb}(N,T) = -\mathbf{dfb}^{\ast}(N,T)$ is necessary in order to ensure that $X_i \succeq X_j \Leftrightarrow dfb_i(N,T) \geq dfb_j(N,T)$ for all $X_i, X_j \in N$.

The axiomatization of fair bets also characterizes the dual fair bets by changing only one property, negative responsiveness to losses with positive responsiveness to wins \citep[Remark~1]{SlutzkiVolij2005}. These two approaches can be seen in the case of positional power, too, by the definition of positional power and positional weakness \citep{HeringsvanderLaanTalman2005}. Similarly to the their Copeland positional value, Copeland fair bets method is introduced as the sum of the fair bets and dual fair bets ratings. %, an idea which explicitly appears in \citet{SlutzkiVolij2005}.

\begin{definition} \label{Def27}
\emph{Copeland fair bets}: $\mathbf{Cfb}(N,T) = \mathbf{fb}(N,T) + \mathbf{dfb}(N,T)$.
\end{definition}

Now $X_i \succeq X_j \Leftrightarrow Cfb_i(N,T) \geq Cfb_j(N,T)$ as earlier.
%According to our knowledge, we are the first to introduce this scoring procedure.
%It will have some nice properties.

The six scoring procedures (Definitions \ref{Def21}-\ref{Def22} and \ref{Def24}-\ref{Def27}) are discussed with respect to their axiomatic properties.
\citet{Gonzalez-DiazHendrickxLohmann2013} have analysed the least squares and fair bets methods, as well as generalized row sum with the parameter $\varepsilon = 1/ \left[ m(n-2) \right]$. They use a different version of the score, $s_i / d_i$ for all $X_i \in N$.

Ranking problem $(N,A,M) \in \mathcal{R}$ can be represented by a graph such that the nodes are the objects, $k$ times $(X_i,X_j) \in N \times N$ undirected edge means $a_{ij} (= a_{ji}) = 0$, $m_{ij} = k$, and $k$ times $(X_i,X_j) \in N \times N$ directed edge means $k$ comparison with maximal intensity, that is, $a_{ij} = k$ ($a_{ji} = -k$), $m_{ij} = k$. We think it helps in understanding the examples.

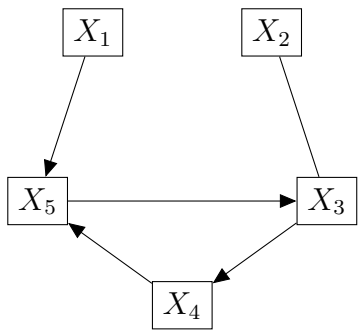
\begin{figure}[htbp]
\centering
\caption{Ranking problem of Example \ref{Examp1}}
\label{Fig1}
\begin{tikzpicture}[scale=1, auto=center, transform shape, >=triangle 45]
\tikzstyle{every node}=[draw,shape=rectangle];
  \node (n1) at (126:2)  {$X_1$};
  \node (n2) at (54:2)   {$X_2$};
  \node (n3) at (342:2)  {$X_3$};
  \node (n4) at (270:2)  {$X_4$};
  \node (n5) at (198:2)  {$X_5$};

  \foreach \from/\to in {n1/n5,n3/n4,n4/n5,n5/n3}
    \draw [->] (\from) -- (\to);
    
  \draw (n2) -- (n3);
\end{tikzpicture}
\end{figure}

\begin{example} \label{Examp1}
\citep[Example~2]{Chebotarev1994} Let $(N,A,M) \in \mathcal{R}$ be the ranking problem in Figure \ref{Fig1} with the set of objects $N = \{ X_1,X_2,X_3,X_4,X_5 \}$.

The corresponding tournament, results and matches matrices are as follows
\[
T =
\begin{pmatrix}
    0     & 0     & 0     & 0     & 1  \\
    0     & 0     & 0.5   & 0     & 0  \\
    0     & 0.5   & 0     & 1     & 0  \\
    0     & 0     & 0     & 0     & 1  \\
    0     & 0     & 1     & 0     & 0  \\
\end{pmatrix},
\quad
A =
\begin{pmatrix}
    0     & 0     & 0     & 0     & 1  \\
    0     & 0     & 0     & 0     & 0  \\
    0     & 0     & 0     & 1     & -1 \\
    0     & 0     & -1    & 0     & 1  \\
    -1    & 0     & 1     & -1    & 0  \\
\end{pmatrix},
\quad
M =
\begin{pmatrix}
    0     & 0     & 0     & 0     & 1 \\
    0     & 0     & 1     & 0     & 0 \\
    0     & 1     & 0     & 1     & 1 \\
    0     & 0     & 1     & 0     & 1 \\
    1     & 0     & 1     & 1     & 0 \\
\end{pmatrix}.
\]

\begin{table}[htbp]
\centering
\caption{Generalized row sum vectors $\mathbf{x}(\varepsilon)$ of Example~\ref{Examp1}}
\label{Table1}
\begin{footnotesize}
%\noindent\makebox[\textwidth]{
    \begin{tabularx}{\textwidth}{c RRRRRRR} \toprule
    $\varepsilon$ & \multicolumn{1}{c}{$0$}     & \multicolumn{1}{c}{$1/100$}  & \multicolumn{1}{c}{$1/4$}   & \multicolumn{1}{c}{$1/3$}   & \multicolumn{1}{c}{$1$}   & \multicolumn{1}{c}{$5$}     & \multicolumn{1}{c}{$\to \infty$} \\
    \midrule
    $X_1$ & $1.0000$ & $1.0296$ & $1.7165$ & $2.2649$ & $2.4242$ & $3.4369$ & $4.0000$ \\
    $X_2$ & $0.0000$ & $-0.0001$ & $-0.0613$ & $-0.1917$ & $-0.2424$ & $-0.6819$ & $-1.0000$ \\
    $X_3$ & $0.0000$ & $-0.0099$ & $-0.2452$ & $-0.4314$ & $-0.4848$ & $-0.8183$ & $-1.0000$ \\
    $X_4$ & $0.0000$ & $-0.0100$ & $-0.2759$ & $-0.4878$ & $-0.5455$ & $-0.8609$ & $-1.0000$ \\
    $X_5$ & $-1.0000$ & $-1.0096$ & $-1.1341$ & $-1.1540$ & $-1.1515$ & $-1.0757$ & $-1.0000$ \\
    \bottomrule
    \end{tabularx} %}
\end{footnotesize}
\end{table}

The solutions with generalized row sum for various values of $\varepsilon$ are given in Table~\ref{Table1}.
Here $m=1$ and $n=5$, thus $\varepsilon = 1/3$ is the reasonable upper bound by Definition~\ref{Def23}. The ranking of the objects is $X_1 \succ X_2 \succ X_3 \succ X_4 \succ X_5$ for all positive parameters since $X_1$ dominates $X_5$, which effects $X_3$ and $X_4$ through the circular triad $(X_3, X_4, X_5)$. However, $X_3$ has a draw against $X_2$. Note that $X_2 \sim X_3 \sim X_4$ for the score ($\varepsilon \to 0$) and least squares methods ($\varepsilon \to \infty$), referring to a kind of neglect of the comparison between $X_2$ and $X_3$.

Example~\ref{Examp1} is an irreducible ranking problem, so fair bets rating is not unique. Nevertheless, a ranking can be obtained by the application of its extension according to \citet{SlutzkiVolij2005}: $X_1$ is the best object as no other has any chance to defeat it, and the remaining four form an irreducible component. It results in $X_1 \succ (X_2 \sim X_3 \sim X_4 \sim X_5)$, which coincides with the one from least squares. Similarly, both dual fair bets and Copeland fair bets give $X_1 \succ (X_2 \sim X_3 \sim X_4 \sim X_5)$.
\end{example}

Because of Propositions~\ref{Prop22} and \ref{Prop23}, we restrict our analysis to the class of connected ranking problems $\mathcal{R}^C$, and to the set of irreducible ranking problems $\mathcal{R}^I$ in the case of fair bets. In ranking problems without a connected comparison multigraph, the rating of all objects on a common scale seems to be arbitrary.

\subsection{Structural invariance properties} \label{Sec23}

The main discussion requires the knowledge of some basic axioms already introduced.

%All properties will be presented for general scoring procedures, but they can be substituted with the corresponding ranking method.

\begin{definition} \label{Def31}
\emph{Neutrality} ($NEU$) \citep{Young1974}:
Let $(N,A,M) \in \mathcal{R}$ be a ranking problem and $\sigma: N \rightarrow N$ be a permutation on the set of objects. Let $\sigma(N,A,M) \in \mathcal{R}$ be the ranking problem obtained from $(N,A,M)$ by this permutation.
Scoring method $f: \mathcal{R} \to \mathbb{R}^n$ is \emph{neutral} if $f_i(N,A,M) \geq f_j(N,A,M) \Leftrightarrow f_{\sigma i} \left[ \sigma(N,A,M) \right] \geq f_{\sigma j} \left[ \sigma(N,A,M) \right]$ holds for all $X_i,X_j \in N$.
\end{definition}

Neutrality is a simple independence of labelling of the objects, and was called \emph{anonymity} in \citet{Bouyssou1992, SlutzkiVolij2005, Gonzalez-DiazHendrickxLohmann2013}. It is equivalent to the requirement that the permutation of two objects do not affect the ranking.
%This property is mentioned as \emph{anonymity} in \citet{Gonzalez-DiazHendrickxLohmann2013}.

\begin{remark} \label{Rem2}
Let $f: \mathcal{R} \to \mathbb{R}^n$ be a neutral scoring procedure. If for the objects $X_i, X_j \in N$, $m_{ij} = 0$, and $a_{ik} = a_{jk}$, $m_{ik} = m_{jk}$ hold for all $X_k \in N \setminus \{ X_i,X_j \}$, then $f_i(N,A,M) = f_j(N,A,M)$ \citep[p.~62]{Bouyssou1992}.
\end{remark}

Remark~\ref{Rem2} claims that two indistinguishable objects have the same rank.

\begin{lemma} \label{Lemma31}
All methods presented above satisfy $NEU$.
\end{lemma}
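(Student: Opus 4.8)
The plan is to verify neutrality for each of the six ranking methods in turn, in each case reducing the claim to the fact that the defining linear systems are \emph{equivariant} under simultaneous permutation of rows and columns. Fix a ranking problem $(N,A,M)$ and a permutation $\sigma: N \to N$ with associated permutation matrix $P$ (so $P_{ij} = 1$ iff $j = \sigma(i)$, or whatever convention makes the bookkeeping cleanest). Permuting the objects replaces $A$ by $PAP^\top$, $M$ by $PMP^\top$, hence $T$ by $PTP^\top$, the Laplacian $L$ by $PLP^\top$, and so on; the key observation is that $P\mathbf{e} = \mathbf{e}$ and $P^\top P = I$. What has to be shown in each case is that the rating vector of the permuted problem is exactly $P$ applied to the original rating vector, since then $f_{\sigma i}[\sigma(N,A,M)] = f_i(N,A,M)$ componentwise and the ordering is trivially preserved (for dual fair bets and Copeland fair bets one uses the same identity together with the sign conventions already fixed in the text after Definitions~\ref{Def26} and~\ref{Def27}).

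I would carry this out as follows. For the \textbf{score} (Definition~\ref{Def21}) this is immediate: $\mathbf{s}(\sigma(N,A,M)) = (PAP^\top)\mathbf{e} = PA\mathbf{e} = P\mathbf{s}(N,A,M)$. For \textbf{generalized row sum} (Definition~\ref{Def22}), apply $P$ to the defining equation $(I + \varepsilon L)\mathbf{x}(\varepsilon) = (1+\varepsilon mn)\mathbf{s}$ and insert $P^\top P = I$: one gets $(I + \varepsilon PLP^\top)(P\mathbf{x}(\varepsilon)) = (1+\varepsilon mn)(P\mathbf{s})$, which is the defining equation for the permuted problem; uniqueness (Definition~\ref{Def22}) then forces $\mathbf{x}(\varepsilon)(\sigma(N,A,M)) = P\mathbf{x}(\varepsilon)(N,A,M)$. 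For \textbf{least squares} (Definition~\ref{Def24}) the same manipulation turns $L\mathbf{q} = \mathbf{s}$, $\mathbf{e}^\top\mathbf{q}=0$ into $(PLP^\top)(P\mathbf{q}) = P\mathbf{s}$ and $\mathbf{e}^\top(P\mathbf{q}) = \mathbf{e}^\top\mathbf{q} = 0$ (using $P^\top\mathbf{e}=\mathbf{e}$), giving the claim on the connected domain where the solution is unique (Proposition~\ref{Prop22}). For \textbf{fair bets} (Definition~\ref{Def25}), note $\text{diag}((PTP^\top)^\top\mathbf{e}) = \text{diag}(PT^\top\mathbf{e}) = P F P^\top$, so $(PFP^\top)^{-1} = P F^{-1} P^\top$, and hence $(PF^{-1}P^\top)(PTP^\top)(P\mathbf{fb}) = PF^{-1}T\mathbf{fb} = P\mathbf{fb}$, while $\mathbf{e}^\top(P\mathbf{fb}) = 1$; uniqueness on $\mathcal{R}^I$ (Proposition~\ref{Prop23}) finishes it. \textbf{Dual fair bets} (Definition~\ref{Def26}) is the same computation with $T$ replaced by $T^\top$ and $F$ by $\text{diag}(T\mathbf{e})$, followed by the sign flip; \textbf{Copeland fair bets} (Definition~\ref{Def27}) is then just the sum $P\mathbf{fb} + P\mathbf{dfb} = P\mathbf{Cfb}$.

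There is no real obstacle here; the only point requiring a word of care is that for the methods defined by a uniqueness condition (generalized row sum, least squares, fair bets, and the two fair-bets variants), the argument shows that $P$ times the original solution \emph{is} a solution of the permuted system, and one must invoke the relevant uniqueness statement — Definition~\ref{Def22}, Proposition~\ref{Prop22}, or Proposition~\ref{Prop23} — to conclude it is \emph{the} solution; this is exactly why the analysis is restricted to $\mathcal{R}^C$ in general and to $\mathcal{R}^I$ for the fair bets family, as noted after Example~\ref{Examp1}. The remaining work is the purely mechanical verification of the matrix identities $P\mathbf{e}=\mathbf{e}$, $P^\top\mathbf{e}=\mathbf{e}$, $P^\top P = I$, $\text{diag}(Pv) = P\,\text{diag}(v)\,P^\top$, and the behaviour of $A, M, T, L$ under conjugation by $P$, all of which are routine.
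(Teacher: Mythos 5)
Your argument is correct and is precisely the standard equivariance computation that the paper compresses into the one-line proof ``It follows from their definitions''; you have simply made explicit the permutation-matrix bookkeeping and the appeal to uniqueness that the paper leaves implicit. No substantive difference in approach.
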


\begin{proof}
It follows from their definitions.
\end{proof}

\begin{definition} \label{Def32}
\emph{Symmetry} ($SYM$) \citep{Gonzalez-DiazHendrickxLohmann2013}:
Let $(N,A,M) \in \mathcal{R}$ be a ranking problem such that $A = O$.
Scoring method $f: \mathcal{R} \to \mathbb{R}^n$ is \emph{symmetric} if $f_i(N,A,M) = f_j(N,A,M)$ for all $X_i, X_j \in N$.
\end{definition}

Symmetry does not require that objects $X_i$ and $X_j$ have the same number of comparisons ($d_i = d_j$). \citet{Young1974} and \citet[Axiom~4]{NitzanRubinstein1981} have introduced the property \emph{cancellation} for round-robin ranking problems, which coincides with symmetry on this set.

\begin{lemma} \label{Lemma32}
All methods presented above satisfy $SYM$.
\end{lemma}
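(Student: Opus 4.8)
The plan is to verify $SYM$ for each of the six ranking methods directly from their definitions, using the hypothesis $A = O$ (and hence $\mathbf{s} = A\mathbf{e} = \mathbf{0}$) in each case. The guiding principle is that with $A = O$ all the defining linear systems become homogeneous, so the all-equal (or all-zero) vector is a solution, and one only needs to check that it is the \emph{selected} solution under whatever normalization each method imposes.

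First, for the \emph{score}, $\mathbf{s}(N,O,M) = O\mathbf{e} = \mathbf{0}$, so every entry equals $0$ and $SYM$ holds trivially. Next, for \emph{generalized row sum}, the defining system $(I + \varepsilon L)\mathbf{x}(\varepsilon) = (1 + \varepsilon mn)\mathbf{s}$ reads $(I + \varepsilon L)\mathbf{x}(\varepsilon) = \mathbf{0}$; since $I + \varepsilon L$ is nonsingular for $\varepsilon > 0$ (the Laplacian $L$ is positive semidefinite), the unique solution is $\mathbf{x}(\varepsilon) = \mathbf{0}$, giving $SYM$. For \emph{least squares}, $L\mathbf{q} = \mathbf{s} = \mathbf{0}$ together with $\mathbf{e}^\top \mathbf{q} = 0$; on a connected comparison multigraph the kernel of $L$ is spanned by $\mathbf{e}$, so $\mathbf{q} = c\mathbf{e}$ for some scalar $c$, and the normalization forces $c = 0$, hence again all entries are equal.

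The fair-bets family requires a little more care because those definitions use the tournament matrix $T$, and $A = O$ means $T = T^\top = M/2$, so $T$ is symmetric with row sums equal to $T\mathbf{e} = T^\top\mathbf{e} = \tfrac12 M\mathbf{e}$. Then $F = \operatorname{diag}(T^\top\mathbf{e}) = \operatorname{diag}(T\mathbf{e})$, and one checks $F^{-1}T\,\mathbf{e} = \mathbf{e}$ (each row of $F^{-1}T$ sums to $1$), so $\mathbf{e}$ is the Perron eigenvector of $F^{-1}T$; normalizing by $\mathbf{e}^\top\mathbf{fb} = 1$ gives $\mathbf{fb}(N,T) = \mathbf{e}/n$, which is constant. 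The identical computation with the roles of $T$ and $T^\top$ interchanged (they coincide here) gives $\mathbf{dfb}^\ast(N,T) = \mathbf{e}/n$, hence $\mathbf{dfb}(N,T) = -\mathbf{e}/n$, and $\mathbf{Cfb}(N,T) = \mathbf{fb} + \mathbf{dfb} = \mathbf{0}$; all three are constant vectors, so $SYM$ holds. The only mild subtlety — the main (and minor) obstacle — is the well-definedness of these eigenvector solutions: we should note that when $A = O$ the restriction to $\mathcal{R}^I$ (irreducibility) for the fair-bets methods, or to $\mathcal{R}^C$ for least squares, is exactly what guarantees uniqueness of the normalized solution, so the constant vector we exhibit is genuinely the value the method returns. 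Given all this, the proof is essentially a case check: ``It follows from their definitions, using that $A = O$ makes each defining system homogeneous with the constant vector as its normalized solution.''
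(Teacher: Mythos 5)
Your proof is correct and takes the same route as the paper, which simply states that the lemma ``follows from their definitions''; you have merely written out the case check in full, including the (correct) observations that $A=O$ forces $T=T^\top=M/2$ for the fair-bets family and that the domain restrictions to $\mathcal{R}^C$ and $\mathcal{R}^I$ guarantee the exhibited constant vectors are the unique normalized solutions. Nothing further is needed.
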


\begin{proof}
It follows from their definitions.
\end{proof}

\begin{definition} \label{Def33}
\emph{Inversion} ($INV$) \citep{ChebotarevShamis1998a}:
Let $(N,A,M) \in \mathcal{R}$ be a ranking problem.
Scoring method $f: \mathcal{R} \to \mathbb{R}^n$ is \emph{invertible} if $f_i(N,A,M) \geq f_j(N,A,M) \Leftrightarrow f_i(N,-A,M) \leq f_j(N,-A,M)$ for all $X_i, X_j \in N$.
\end{definition}

Inversion means that taking the opposite of all results changes the ranking accordingly. It establishes a uniform treatment of victories and defeats.
%\citet[Property~7]{Chebotarev1994} defines \emph{transposability} such that the ratings change their sign and keep the same absolute value.

\begin{remark} \label{Rem3}
Let $f: \mathcal{R} \to \mathbb{R}^n$ be a scoring procedure satisfying $INV$. Then $f_i(N,A,M) > f_j(N,A,M) \Leftrightarrow f_i(N,-A,M) < f_j(N,-A,M)$ for all $X_i, X_j \in N$.
\end{remark}

The following result was mentioned by \citet[p.~150]{Gonzalez-DiazHendrickxLohmann2013}.

\begin{corollary} \label{Col1}
$INV$ implies $SYM$.
\end{corollary}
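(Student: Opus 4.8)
The plan is to show that $INV$ forces the conclusion of $SYM$ on any ranking problem whose results matrix is the zero matrix. Suppose $(N,A,M) \in \mathcal{R}$ with $A = O$. The key observation is that for this problem $-A = A$, so the inverted ranking problem $(N,-A,M)$ is literally the same ranking problem as $(N,A,M)$. Therefore $f(N,-A,M) = f(N,A,M)$, and the defining equivalence of $INV$ becomes, for all $X_i, X_j \in N$,
\[
f_i(N,A,M) \geq f_j(N,A,M) \iff f_i(N,A,M) \leq f_j(N,A,M).
\]
Since the two sides must hold simultaneously, we get $f_i(N,A,M) \geq f_j(N,A,M)$ and $f_i(N,A,M) \leq f_j(N,A,M)$ at once, hence $f_i(N,A,M) = f_j(N,A,M)$. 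As $X_i, X_j$ were arbitrary, $f$ is symmetric, which is exactly $SYM$.

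The proof is essentially a one-line substitution argument, so there is no serious obstacle; the only thing to be careful about is the logical step from the biconditional $P \Leftrightarrow \neg'$-flavoured statement to equality. Concretely: the relation $\geq$ is total, so for any fixed $i,j$ at least one of $f_i \geq f_j$ or $f_j \geq f_i$ holds; whichever one holds, the $INV$ equivalence (applied with $-A = A$) immediately supplies the reverse inequality as well, yielding equality. One may alternatively phrase it via Remark~\ref{Rem3}: if $f_i > f_j$ held then $INV$ would give $f_i < f_j$ in the same problem, a contradiction, and symmetrically for $f_j > f_i$; hence $f_i = f_j$.

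I would write it as a short direct proof along these lines, with no case analysis beyond the remark about totality of $\geq$. No earlier result beyond the definitions of $INV$ (Definition~\ref{Def33}) and $SYM$ (Definition~\ref{Def32}) is needed, though citing Remark~\ref{Rem3} makes the contradiction step cleaner.

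\begin{proof}
Let $(N,A,M) \in \mathcal{R}$ be a ranking problem with $A = O$. Then $-A = O = A$, so $(N,-A,M)$ and $(N,A,M)$ are the same ranking problem, and consequently $f(N,-A,M) = f(N,A,M)$ for any scoring procedure $f$. Assume $f$ satisfies $INV$. Fix $X_i, X_j \in N$. Suppose $f_i(N,A,M) > f_j(N,A,M)$; by Remark~\ref{Rem3} this implies $f_i(N,-A,M) < f_j(N,-A,M)$, that is, $f_i(N,A,M) < f_j(N,A,M)$, a contradiction. Symmetrically, $f_j(N,A,M) > f_i(N,A,M)$ is impossible. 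Since $\geq$ is total, it follows that $f_i(N,A,M) = f_j(N,A,M)$. As $X_i$ and $X_j$ were arbitrary, $f$ satisfies $SYM$.
\end{proof}
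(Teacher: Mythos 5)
Your proof is correct and is exactly the standard substitution argument: with $A = O$ one has $-A = A$, so the $INV$ biconditional collapses to $f_i \geq f_j \Leftrightarrow f_i \leq f_j$, forcing equality. The paper itself gives no proof of Corollary~\ref{Col1} (it merely attributes the observation to the literature), and the argument it implicitly relies on is the same one you wrote out, so there is nothing to add.
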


\begin{lemma} \label{Lemma33}
The score, generalized row sum and least squares methods satisfy $INV$.
\end{lemma}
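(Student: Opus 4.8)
The plan is to verify $INV$ separately for each of the three methods by exploiting the linearity of the defining equations in the results matrix $A$. The key observation is that in all three cases the score $\mathbf{s}(N,A,M) = A\mathbf{e}$ enters linearly, and the comparison structure (encoded by $M$, hence by $L$) is untouched when we pass from $(N,A,M)$ to $(N,-A,M)$. So I expect each case to reduce to the statement that the rating vector is an odd function of $A$, i.e. $f(N,-A,M) = -f(N,A,M)$, which immediately yields the equivalence $f_i \geq f_j \Leftrightarrow -f_i \leq -f_j$ required by Definition~\ref{Def33}.

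First, for the \emph{score}: since $\mathbf{s}(N,-A,M) = (-A)\mathbf{e} = -A\mathbf{e} = -\mathbf{s}(N,A,M)$, the claim is immediate. Second, for \emph{generalized row sum}: the defining system is $(I + \varepsilon L)\mathbf{x}(\varepsilon)(N,A,M) = (1 + \varepsilon mn)\mathbf{s}(N,A,M)$. Replacing $A$ by $-A$ leaves $L$ (which depends only on $M$) and the scalar $(1+\varepsilon mn)$ unchanged while negating the right-hand side; since $I + \varepsilon L$ is invertible for $\varepsilon > 0$ (this is what makes the solution unique, per Definition~\ref{Def22}), we get $\mathbf{x}(\varepsilon)(N,-A,M) = -\mathbf{x}(\varepsilon)(N,A,M)$. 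Third, for \emph{least squares}: the system $L\mathbf{q}(N,A,M) = \mathbf{s}(N,A,M)$ together with $\mathbf{e}^\top\mathbf{q}(N,A,M) = 0$ again has $L$ fixed and $\mathbf{s}$ negated; on the connected domain $\mathcal{R}^C$ the solution is unique by Proposition~\ref{Prop22}, and $-\mathbf{q}(N,A,M)$ clearly satisfies both the negated linear system and the normalization $\mathbf{e}^\top(-\mathbf{q}) = 0$, so $\mathbf{q}(N,-A,M) = -\mathbf{q}(N,A,M)$. In every case, $f_i(N,A,M) \geq f_j(N,A,M)$ is equivalent to $f_i(N,-A,M) = -f_i(N,A,M) \leq -f_j(N,A,M) = f_j(N,-A,M)$, which is exactly $INV$.

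The only subtlety — and the one point that deserves an explicit remark rather than a routine calculation — is the appeal to uniqueness of the solution: for generalized row sum uniqueness is unconditional (given $\varepsilon > 0$), but for least squares it requires restricting to $\mathcal{R}^C$, which is precisely the domain the paper has already fixed. Beyond that, the argument is entirely mechanical, and I would state it compactly in three short paragraphs mirroring Definitions~\ref{Def21}, \ref{Def22} and \ref{Def24}.
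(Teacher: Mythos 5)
Your proposal is correct and follows exactly the route the paper takes: the paper's own proof is the one-line observation that $\mathbf{s}(N,-A,M) = -\mathbf{s}(N,A,M)$, from which oddness of each rating vector in $A$ (and hence $INV$) follows since $L$ and the normalizations are unaffected. You have merely written out the linearity and uniqueness details that the paper leaves implicit, so no further comment is needed.
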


\begin{proof}
It is an immediate consequence of $\mathbf{s}(N,-A,M) = - \mathbf{s}(N,A,M)$.
\end{proof}

\begin{lemma} \label{Lemma34}
Fair bets and dual fair bets methods do not satisfy $INV$ even on the set $\mathcal{R}^R$.
\end{lemma}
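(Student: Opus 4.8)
The natural strategy is to exhibit a single round-robin ranking problem $(N,A,M) \in \mathcal{R}^R$ together with two objects whose fair bets (resp. dual fair bets) ordering fails to reverse when $A$ is replaced by $-A$. First I would fix a small value of $n$, say $n = 3$ (the smallest interesting case, since the problem is trivial for $n \leq 2$), and a uniform match count $m_{ij} = 1$ for $i \neq j$, so that $M$ is fixed and only $A$ varies. I would then pick an asymmetric results configuration — for instance a problem where $X_1$ beats both $X_2$ and $X_3$ while $X_2$ and $X_3$ draw, or a configuration producing a genuine cyclic pattern — and compute the tournament matrix $T = (A+M)/2$, the loss matrix $F = \operatorname{diag}(T^\top \mathbf{e})$, and solve the fixed-point system $F^{-1} T \, \mathbf{fb} = \mathbf{fb}$, $\mathbf{e}^\top \mathbf{fb} = 1$ from Definition~\ref{Def25}. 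One has to check irreducibility first (Proposition~\ref{Prop23}) so that the fair bets vector is well-defined; a tournament in which every object has at least one win and at least one loss will do.

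The core of the argument is then the comparison between the fair bets vector of $(N,A,M)$ and that of $(N,-A,M)$. Replacing $A$ by $-A$ turns $T$ into $M - T = T^\top + (M - T - T^\top)$; on the round-robin domain with $m = 1$ and no draws this is simply $T \mapsto \mathbf{e}\mathbf{e}^\top - I - T^\top$, i.e.\ essentially the transpose up to the all-ones adjustment. The key point I would make explicit is that fair bets is \emph{not} self-dual: the inverted problem's fair bets vector is related to the \emph{dual} fair bets of the original (Definition~\ref{Def26}), not to the negative of its fair bets vector, so there is no reason for the ordering to flip. Concretely, I expect that in a suitably chosen example $X_i \succ X_j$ under fair bets for $(N,A,M)$, yet computing $\mathbf{fb}(N,-A,M)$ directly gives $fb_i(N,-A,M) \geq fb_j(N,-A,M)$ as well (or an equality where a strict reversal is required by Remark~\ref{Rem3}), contradicting $INV$. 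For dual fair bets the same example, or its mirror, works by the symmetry between the two definitions; alternatively one invokes Remark~1 of \citet{SlutzkiVolij2005} relating the two methods and transfers the counterexample.

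The main obstacle is purely computational bookkeeping: one must choose the numerical example carefully so that (a) both $(N,A,M)$ and $(N,-A,M)$ are irreducible — easy to violate if some object ends up with zero wins or zero losses after inversion — and (b) the resulting fair bets vectors are cleanly separated enough that the failure of reversal is unambiguous, not an artefact of a near-tie. I would therefore favour an example with small integer or simple rational fair bets coordinates. A robust fallback, avoiding hand-computation of Perron eigenvectors, is to use a near-symmetric perturbation: take $A = O$ (where $SYM$, satisfied by Lemma~\ref{Lemma32}, forces all ratings equal) and then add a single asymmetric comparison, tracking the first-order effect on $\mathbf{fb}$; since that perturbation and its negation move the tied ratings in ways governed by $T$ and $T^\top$ respectively rather than by $\pm$ the same vector, the induced orderings need not be opposite. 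Either route yields the explicit counterexample on $\mathcal{R}^R$ that the statement requires.
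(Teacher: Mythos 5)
Your structural insight is the right one: since $M = T + T^\top$ is unchanged, replacing $A$ by $-A$ sends the tournament matrix $T = (A+M)/2$ to $(M-A)/2 = T^\top$ --- exactly the transpose, not merely ``up to an all-ones adjustment'' (your formula $\mathbf{e}\mathbf{e}^\top - I - T^\top$ in fact simplifies back to $M - T^\top = T$). Hence $\mathbf{fb}(N,-A,M) = \mathbf{fb}(N,T^\top) = -\mathbf{dfb}(N,T)$, and $INV$ for fair bets amounts to fair bets and dual fair bets always inducing the same ranking, which is precisely what fails. For comparison, the paper does not construct anything: it cites Example~4.4 of \citet{Gonzalez-DiazHendrickxLohmann2013} as the counterexample for fair bets and observes, as you do, that transposing the same tournament matrix settles dual fair bets.

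The genuine gap is that you never actually exhibit a counterexample. A claim of the form ``method $f$ violates axiom $P$'' is proved by a concrete, verified instance; ``I expect that in a suitably chosen example\dots'' is a search plan, not a proof. Moreover, the one fully specified configuration you name first ($X_1$ beats both $X_2$ and $X_3$ while $X_2$ and $X_3$ draw) is reducible --- neither $X_2$ nor $X_3$ has a directed path back to $X_1$ --- so by Proposition~\ref{Prop23} fair bets is not even well defined there; and the pure $3$-cycle you also mention is too symmetric to separate any pair. Your plan is salvageable at $n=3$, $m=1$: let $X_1$ beat $X_2$, $X_2$ beat $X_3$, and $X_1$ draw $X_3$. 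Both $(N,T)$ and $(N,T^\top)$ are irreducible, $\mathbf{fb}(N,T) = (3/5,\,1/5,\,1/5)^\top$ and $\mathbf{fb}(N,-A,M) = \mathbf{fb}(N,T^\top) = (1/5,\,1/5,\,3/5)^\top$, so $fb_2(N,-A,M) \leq fb_1(N,-A,M)$ holds while $fb_2(N,A,M) \geq fb_1(N,A,M)$ fails, violating the biconditional of Definition~\ref{Def33} (equivalently, the strict reversal required by Remark~\ref{Rem3} fails for $X_1, X_2$); transposing handles dual fair bets. Until some such computation is actually carried out and checked, the lemma is not established by your text.
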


\begin{proof}
See \citet[Example~4.4]{Gonzalez-DiazHendrickxLohmann2013} for fair bets. The same counterexample with a transposed tournament matrix proves the statement for dual fair bets. 
\end{proof}

Fair bets and dual fair bets violate inversion because of the different treatment of victories and losses. The potential problem can be seen still on the most simple domain of round-robin ranking problems.
However, their appropriate aggregation eliminates this strange feature, the major weakness of fair bets according to \citet[p.~164]{Gonzalez-DiazHendrickxLohmann2013}.

\begin{lemma} \label{Lemma35}
Copeland fair bets satisfies $INV$.
\end{lemma}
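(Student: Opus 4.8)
The plan is to exploit the fact that, under the identification $T = (A+M)/2$, reversing every result (replacing $A$ by $-A$) is exactly transposing the tournament matrix. Indeed, since $A = T - T^\top$ and $M = T + T^\top$, we get $M - A = 2T^\top$, so the ranking problem $(N,-A,M)$ has tournament matrix $(M-A)/2 = T^\top$. Hence it suffices to prove $\mathbf{Cfb}(N,T^\top) = -\mathbf{Cfb}(N,T)$ for every $(N,T) \in \mathcal{R}^I$; the equivalence required in Definition~\ref{Def33} then follows at once by multiplying both sides of an inequality by $-1$.

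Next I would identify the fair bets vector of the transposed problem with the (unnormalised) dual fair bets vector of the original one, and vice versa. Writing out Definition~\ref{Def25} for $T^\top$: the loss matrix of $T^\top$ is $\text{diag}\big((T^\top)^\top \mathbf{e}\big) = \text{diag}(T\mathbf{e})$, so $\mathbf{fb}(N,T^\top)$ is characterised by $[\text{diag}(T\mathbf{e})]^{-1} T^\top \mathbf{fb}(N,T^\top) = \mathbf{fb}(N,T^\top)$ together with $\mathbf{e}^\top \mathbf{fb}(N,T^\top) = 1$, which is literally the defining system of $\mathbf{dfb}^{\ast}(N,T)$ in Definition~\ref{Def26}. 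Symmetrically, $\mathbf{dfb}^{\ast}(N,T^\top)$ satisfies $[\text{diag}(T^\top\mathbf{e})]^{-1} T\, \mathbf{dfb}^{\ast}(N,T^\top) = \mathbf{dfb}^{\ast}(N,T^\top)$ with $\mathbf{e}^\top$-normalisation $1$, which is the defining system of $\mathbf{fb}(N,T)$. Since $(N,T)$ and $(N,T^\top)$ are simultaneously irreducible (reversing a directed path yields a directed path), Proposition~\ref{Prop23} guarantees uniqueness of all four vectors, so these coincidences of systems yield $\mathbf{fb}(N,T^\top) = \mathbf{dfb}^{\ast}(N,T) = -\mathbf{dfb}(N,T)$ and $\mathbf{dfb}(N,T^\top) = -\mathbf{dfb}^{\ast}(N,T^\top) = -\mathbf{fb}(N,T)$.

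Finally, adding these two identities and invoking Definition~\ref{Def27} gives $\mathbf{Cfb}(N,T^\top) = \mathbf{fb}(N,T^\top) + \mathbf{dfb}(N,T^\top) = -\mathbf{dfb}(N,T) - \mathbf{fb}(N,T) = -\mathbf{Cfb}(N,T)$, which is what was wanted. I do not expect a genuine obstacle; the only point requiring care is that the normalisations and the fixed-point equations of the four systems match up exactly, and in particular that the diagonal matrices $\text{diag}(T\mathbf{e})$ and $\text{diag}(T^\top\mathbf{e})$ are invertible on $\mathcal{R}^I$ (in an irreducible problem every object has at least one win and at least one loss), so that Proposition~\ref{Prop23} may legitimately be applied to conclude equality of the vectors rather than mere proportionality.
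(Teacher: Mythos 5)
Your proposal is correct and follows essentially the same route as the paper, whose proof is just the one-line chain $\mathbf{Cfb}(N,T) = \mathbf{fb}(N,T) + \mathbf{dfb}(N,T) = -\mathbf{dfb}(N,T^\top) - \mathbf{fb}(N,T^\top) = -\mathbf{Cfb}(N,T^\top)$. You merely make explicit the two ingredients the paper leaves implicit, namely that $(N,-A,M)$ has tournament matrix $T^\top$ and that the defining linear systems identify $\mathbf{fb}(N,T^\top)$ with $\mathbf{dfb}^{\ast}(N,T)$ and $\mathbf{dfb}^{\ast}(N,T^\top)$ with $\mathbf{fb}(N,T)$.
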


\begin{proof}
Consider the ranking problems $(N,T)$ and $(N,T^\top)$. $\mathbf{Cfb}(N,T) = \mathbf{fb}(N,T) + \mathbf{dfb}(N,T) = -\mathbf{dfb}(N,T^\top) - \mathbf{fb}(N,T^\top) = -\mathbf{Cfb}(N,T^\top)$.
\end{proof}

%Hence Copeland fair bets eliminates the major weakness of fair bets according to \citet[p.~164]{Gonzalez-DiazHendrickxLohmann2013}, while retains its favourable properties.

\section{Axioms of additivity} \label{Sec3}

This section reviews axioms of additivity, that is, the implications of summing two ranking problems for the ranking. Two new properties will be introduced in the wake of two known requirements. The restricted domain of round-robin ranking problems will be investigated, too.

\subsection{Properties already introduced} \label{Sec31}

As a first step some results of the existing literature is collected and refined.

\begin{definition} \label{Def51}
\emph{Consistency} ($CS$) \citep{Young1974}:
Let $(N,A,M),(N,A',M') \in \mathcal{R}$ be two ranking problems and $X_i, X_j \in N$ be two objects.
Let $f: \mathcal{R} \to \mathbb{R}^n$ be a scoring procedure such that $f_i(N,A,M) \geq f_j(N,A,M)$ and $f_i(N,A',M') \geq f_j(N,A',M')$.
$f$ is called \emph{consistent} if $f_i(N,A+A',M+M') \geq f_j(N,A+A',M+M')$, furthermore, $f_i(N,A+A',M+M') > f_j(N,A+A',M+M')$ if $f_i(N,A,M) > f_j(N,A,M)$ or $f_i(N,A',M') > f_j(N,A',M')$.
\end{definition}

$CS$ is the most general and intuitive version of additivity: if $X_i$ is not worse than $X_j$ in both ranking problems, this should not change after adding them up.
\citet{Young1974} used it only in the case of round-robin tournaments.

\begin{lemma} \label{Lemma51}
The score method satisfies $CS$.
\end{lemma}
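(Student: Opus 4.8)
The plan is to unwind the definitions. Recall that the score method sends $(N,A,M)$ to $\mathbf{s}(N,A,M) = A\mathbf{e}$, so $s_i(N,A,M) = \sum_{k=1}^n a_{ik}$ depends on the results matrix alone and is manifestly linear in $A$. Hence for two ranking problems $(N,A,M)$ and $(N,A',M')$ we have
\begin{equation*}
\mathbf{s}(N,A+A',M+M') = (A+A')\mathbf{e} = A\mathbf{e} + A'\mathbf{e} = \mathbf{s}(N,A,M) + \mathbf{s}(N,A',M').
\end{equation*}
This is the only structural fact needed.

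From here the verification of $CS$ is a one-line order argument. Fix $X_i,X_j \in N$ and suppose $s_i(N,A,M) \geq s_j(N,A,M)$ and $s_i(N,A',M') \geq s_j(N,A',M')$. Adding these two inequalities and using the additivity identity coordinatewise gives
\begin{equation*}
s_i(N,A+A',M+M') = s_i(N,A,M) + s_i(N,A',M') \geq s_j(N,A,M) + s_j(N,A',M') = s_j(N,A+A',M+M'),
\end{equation*}
which is the weak part of consistency. For the strict part, if in addition $s_i(N,A,M) > s_j(N,A,M)$ (or symmetrically $s_i(N,A',M') > s_j(N,A',M')$), then one of the two summed inequalities is strict while the other is weak, so their sum is strict, giving $s_i(N,A+A',M+M') > s_j(N,A+A',M+M')$. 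That is exactly what Definition~\ref{Def51} demands.

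There is essentially no obstacle here: the content is entirely in the observation that the score is a linear functional of $A$ and ignores $M$, after which $CS$ reduces to the trivial fact that adding inequalities preserves $\geq$ and upgrades to $>$ when one summand is strict. The only point worth a word of care is that the hypothesis of the axiom is stated for a fixed pair $(X_i,X_j)$ rather than as a global order-preservation requirement, so one must not claim more than the stated pairwise conclusion; but that is automatic from the argument above. I would therefore present the proof as: state the additivity identity $\mathbf{s}(N,A+A',M+M') = \mathbf{s}(N,A,M) + \mathbf{s}(N,A',M')$, then conclude by adding the two hypothesised inequalities.
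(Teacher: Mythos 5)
Your proof is correct and is exactly the argument the paper's one-line proof ("It follows from Definition~\ref{Def21}") appeals to: additivity of $\mathbf{s}$ in $A$ plus the fact that summing weak inequalities preserves $\geq$ and upgrades to $>$ when one summand is strict. You have simply written out the details the paper leaves implicit.
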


\begin{proof}
It follows from Definition~\ref{Def21}.
\end{proof}

\begin{proposition} \label{Prop51}
The generalized row sum and least squares methods violate $CS$.
\end{proposition}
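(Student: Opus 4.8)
The plan is to exhibit a single explicit counterexample: two ranking problems $(N,A,M)$ and $(N,A',M')$ on a common object set together with two objects $X_i, X_j$ such that the relative ranking agrees in both (say $X_i \sim X_j$ or $X_i \succ X_j$ in each), yet in the sum $(N, A+A', M+M')$ the relative ranking is reversed or broken — i.e. $X_j \succ X_i$, or a strict preference present in one summand is lost. By Lemma~\ref{Lemma22} (generalized row sum tends to least squares as $\varepsilon \to \infty$) and Lemma~\ref{Lemma21} (it tends to score as $\varepsilon \to 0$) one should be able to build the example so that it works simultaneously for least squares and for generalized row sum with every $\varepsilon > 0$; failing that, two separate small examples suffice.

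The key steps, in order, are as follows. First I would pick a small object set — $n = 3$ or $n = 4$ — so the linear systems $L\mathbf{q} = \mathbf{s}$ (with $\mathbf{e}^\top \mathbf{q} = 0$) and $(I + \varepsilon L)\mathbf{x}(\varepsilon) = (1 + \varepsilon m n)\mathbf{s}$ are tractable by hand. Second, I would choose the two summands so that each individually produces a tie (or a fixed strict order) between $X_i$ and $X_j$ — the natural device is to make $X_i$ and $X_j$ indistinguishable in each summand via Remark~\ref{Rem2} (equal results against, and equal number of comparisons with, every other object, plus $m_{ij} = 0$), which forces $f_i = f_j$ for any neutral method, hence for both methods here by Lemma~\ref{Lemma31}. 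Third, I would arrange the comparison structures $M$ and $M'$ to differ — e.g. $X_i$ is compared with $X_k$ in the first problem and $X_j$ is compared with $X_k$ in the second — so that in the sum the symmetry between $X_i$ and $X_j$ is destroyed and the opponents' performances (which generalized row sum and least squares both incorporate, unlike score) pull $f_i$ and $f_j$ apart. Fourth, I would solve the two $(n)\times(n)$ systems explicitly (or cite Example~\ref{Examp1}-style computations), extract $x_i(\varepsilon) - x_j(\varepsilon)$ and $q_i - q_j$ for the summed problem, and verify these are nonzero with a sign that contradicts $CS$; checking the strict-preference clause of Definition~\ref{Def51} may additionally need a variant where one summand already has $X_i \succ X_j$.

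The main obstacle I anticipate is making one counterexample serve the whole parametric family at once: the sign of $x_i(\varepsilon) - x_j(\varepsilon)$ in the summed problem could in principle change with $\varepsilon$, so I would either verify monotonicity/constant sign of that difference as a function of $\varepsilon$ directly from the linear system (using that $(I + \varepsilon L)^{-1}$ varies smoothly and the two limits are controlled by Lemmas~\ref{Lemma21} and~\ref{Lemma22}), or simply present two tailored examples — one handling all $\varepsilon > 0$ for generalized row sum and one for least squares. A secondary nuisance is ensuring the summed problem stays in the connected class $\mathcal{R}^C$ on which the analysis is restricted, which is easily arranged by adding a few extra comparisons that do not touch the $X_i$-versus-$X_j$ asymmetry.
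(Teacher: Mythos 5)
Your proposal is correct and follows essentially the same route as the paper: the paper's Example~\ref{Examp5} is exactly the construction you describe ($n=4$, two summands in each of which $X_1\sim X_2$ by a symmetry of the comparison structure, but with $M \neq M'$ so that the tie breaks in the sum), the sign of $x_1(\varepsilon)''-x_2(\varepsilon)''$ is verified to be negative for every $\varepsilon>0$ from an explicit rational function of $\varepsilon$, and least squares is handled via the limit in Lemma~\ref{Lemma22}. Your worry about the strict-preference clause is unnecessary, since breaking a tie already contradicts $CS$ (a tie in both summands forces a tie in the sum).
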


\citet[Example~4.2]{Gonzalez-DiazHendrickxLohmann2013} have shown the violation of a weaker property called \emph{order preservation} for the least squares and generalized row sum with $\varepsilon = 1/ \left[ m (n-2) \right]$.\footnote{~Order preservation contains the further requirement of $d_i / d_j = d_i' / d_j'$ for all $X_i, X_j \in N$, that is, the ratio of the matches is equal in the ranking problems added.}

\begin{proof}

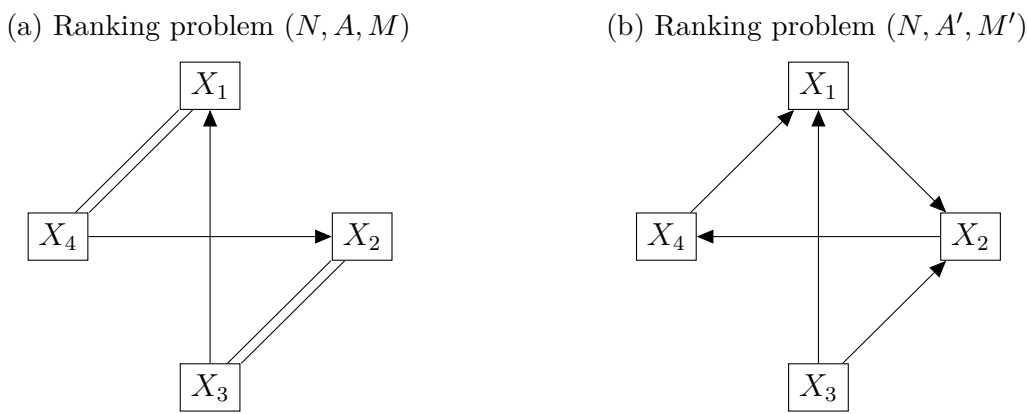
\begin{figure}[htbp]
\centering
\caption{Ranking problems of Example~\ref{Examp5}}
\label{Fig5}
  
\begin{subfigure}{.5\textwidth}
  \centering
  \caption{Ranking problem $(N,A,M)$}
  \label{Fig5a}
\begin{tikzpicture}[scale=1, auto=center, transform shape, >=triangle 45]
\tikzstyle{every node}=[draw,shape=rectangle];
  \node (n1) at (90:2) {$X_1$};
  \node (n2) at (0:2)  {$X_2$};
  \node (n3) at (270:2) {$X_3$};
  \node (n4) at (180:2)  {$X_4$};

  \foreach \from/ \to in {n3/n1,n4/n2}
    \draw [->] (\from) -- (\to);
    
\draw[transform canvas={xshift=0.5ex}](n1) -- (n4);
\draw[transform canvas={xshift=-0.5ex}](n1) -- (n4);
\draw[transform canvas={xshift=0.5ex}](n2) -- (n3);
\draw[transform canvas={xshift=-0.5ex}](n2) -- (n3);
\end{tikzpicture}
\end{subfigure}%\vspace{0.5cm}
\begin{subfigure}{.5\textwidth}
  \centering
  \caption{Ranking problem $(N,A',M')$}
  \label{Fig5b}
\begin{tikzpicture}[scale=1, auto=center, transform shape, >=triangle 45]
\tikzstyle{every node}=[draw,shape=rectangle];
  \node (n1) at (90:2) {$X_1$};
  \node (n2) at (0:2)  {$X_2$};
  \node (n3) at (270:2) {$X_3$};
  \node (n4) at (180:2)  {$X_4$};

  \foreach \from/ \to in {n1/n2,n2/n4,n3/n1,n3/n2,n4/n1}
    \draw [->] (\from) -- (\to);
\end{tikzpicture}
\end{subfigure}
\end{figure}

\begin{example} \label{Examp5}
Let $(N,A,M),(N,A',M') \in \mathcal{R}$ be the ranking problems in Figure \ref{Fig5} with the set of objects $N = \{ X_1,X_2,X_3,X_4 \}$ and tournament matrices
\[
T =
\begin{pmatrix}
    0     & 0     & 0     & 1 \\
    0     & 0     & 1     & 0 \\
    1     & 1     & 0     & 0 \\
    1     & 1     & 0     & 0 \\
\end{pmatrix} \qquad \text{and} \qquad
T' =
\begin{pmatrix}
    0     & 1     & 0     & 0 \\
    0     & 0     & 0     & 1 \\
    1     & 1     & 0     & 0 \\
    1     & 0     & 0     & 0 \\
\end{pmatrix}.
\]
Let $(N,A'',M'') = (N,A+A',M+M') \in \mathcal{R}$ be the sum of these two ranking problems.
\end{example}

Let $\mathbf{x}(\varepsilon)(N,A,M) = \mathbf{x}(\varepsilon)$, $\mathbf{x}(\varepsilon)(N,A',M') = \mathbf{x}(\varepsilon)'$, $\mathbf{x}(\varepsilon)(N,A'',M'') = \mathbf{x}(\varepsilon)''$ and $\mathbf{q}(N,A,M) = \mathbf{q}$, $\mathbf{q}(N,A',M') = \mathbf{q}'$, $\mathbf{q}(N,A'',M'') = \mathbf{q}''$.
Now $n=4$, $m = 2$, $m' = 1$, and $m'' = 3$. Therefore
\[
x_1(\varepsilon) = x_2(\varepsilon) = - \frac{1 + 14 \varepsilon + 56 \varepsilon^2 + 64 \varepsilon^3}{1 + 12 \varepsilon + 44 \varepsilon^2 + 48 \varepsilon^3} \text{, and}
\]
\[
x_1(\varepsilon)' = x_2(\varepsilon)' = -1 \text{, but}
\]
\[
x_1(\varepsilon)'' - x_2(\varepsilon)'' = - \frac{2 \varepsilon + 44 \varepsilon^2 + 240 \varepsilon^3}{1 + 22 \varepsilon + 154 \varepsilon^2 + 340 \varepsilon^3} < 0.
\]
It implies that $X_1 \sim_{(N,A,M)}^{\mathbf{x}(\varepsilon)} X_2$ and $X_1 \sim_{(N,A',M')}^{\mathbf{x}(\varepsilon)} X_2$, however, $X_1 \prec_{(N,A'',M'')}^{\mathbf{x}(\varepsilon)} X_2$. Generalized row sum is not consistent for any $\varepsilon$.

For the least squares method on the basis of Lemma~\ref{Lemma22}:
%\begin{small}
\[
q_1 = \frac{\lim_{\varepsilon \to \infty} x_1(\varepsilon)}{m n} = -\frac{64}{48} \cdot \frac{1}{2 \cdot 4}  = - \frac{1}{6} = \frac{\lim_{\varepsilon \to \infty} x_2(\varepsilon)}{m n} = q_2 \text{, and}
\]
\[
q_1' = \frac{\lim_{\varepsilon \to \infty} x_1(\varepsilon)'}{m' n} = -\frac{1}{4} = \frac{\lim_{\varepsilon \to \infty} x_2(\varepsilon)'}{m' n} = q_2' \text{, but }
\]
\[
q_1'' - q_2'' = \frac{\lim_{\varepsilon \to \infty} \left[ x_1(\varepsilon)'' - x_2(\varepsilon)'' \right]}{m'' n} = - \frac{240}{340} \cdot \frac{1}{3 \cdot 4} = - \frac{1}{17} < 0.
\]
%\end{small}
Hence $X_1 \sim_{(N,A,M)}^{\mathbf{q}} X_2$ and $X_1 \sim_{(N,A',M')}^{\mathbf{q}} X_2$, but $X_1 \prec_{(N,A'',M'')}^{\mathbf{q}} X_2$.
\end{proof}

We will return later to the examination of fair bets and connected methods.

\citet{Gonzalez-DiazHendrickxLohmann2013} also discusses the following, strongly restricted version of additivity.

\begin{definition} \label{Def52}
\emph{Flatness preservation} ($FP$) \citep{SlutzkiVolij2005}:
Let $(N,A,M), \linebreak (N,A',M') \in \mathcal{R}$ be two ranking problems.
Let $f: \mathcal{R} \to \mathbb{R}^n$ be a scoring procedure such that $f_i(N,A,M) = f_j(N,A,M)$ and $f_i(N,A',M') = f_j(N,A',M')$ for all $X_i,X_j \in N$.
$f$ \emph{preserves flatness} if $f_i(N,A+A',M+M') = f_j(N,A+A',M+M')$ for all $X_i, X_j \in N$.
\end{definition}

$FP$ demands additivity only for problems where all objects are ranked uniformly. It is used by \citet{SlutzkiVolij2005} for the characterization of fair bets.

\begin{corollary} \label{Col2}
$CS$ implies $FP$.
\end{corollary}

\begin{proof}
It follows from Definitions~\ref{Def51} and \ref{Def52}.
\end{proof}

\begin{lemma} \label{Lemma52}
The score, generalized row sum and least squares methods satisfy $FP$.
\end{lemma}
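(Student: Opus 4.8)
The plan is to verify $FP$ directly for each of the three methods, exploiting the linearity of the score vector in the data and the characterizations given in Definitions~\ref{Def21}, \ref{Def22} and \ref{Def24}. Throughout, suppose $(N,A,M)$ and $(N,A',M')$ are ranking problems for which the method assigns equal ratings to all objects, and write $(N,A'',M'') = (N,A+A',M+M')$. The key observation I would isolate first is the following: for the score method, $\mathbf{s}(N,A,M) = A\mathbf{e}$, and a scoring vector is flat (all components equal) if and only if it is a scalar multiple of $\mathbf{e}$; combined with $\mathbf{s}$ being additive in $A$, this gives $FP$ for the score immediately (it already follows from Lemma~\ref{Lemma51} via Corollary~\ref{Col2}, but a self-contained remark is cleaner).

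For the generalized row sum, I would argue that $\mathbf{x}(\varepsilon)(N,A,M)$ is flat precisely when $\mathbf{s}(N,A,M)$ is flat. Indeed, if $\mathbf{x}(\varepsilon) = c\mathbf{e}$ then $L\mathbf{x}(\varepsilon) = cL\mathbf{e} = \mathbf{0}$ (since $\mathbf{e}$ is in the kernel of every Laplacian), so the defining equation $(I + \varepsilon L)\mathbf{x}(\varepsilon) = (1 + \varepsilon mn)\mathbf{s}$ forces $\mathbf{s} = c\mathbf{e}/(1+\varepsilon mn)$, i.e. $\mathbf{s}$ is flat; conversely, if $\mathbf{s} = c\mathbf{e}$, then $\mathbf{x}(\varepsilon) = (1+\varepsilon mn)c\mathbf{e}$ solves the system and is flat. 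Hence flatness of $\mathbf{x}(\varepsilon)$ on $(N,A,M)$ and on $(N,A',M')$ is equivalent to flatness of $\mathbf{s}(N,A,M)$ and $\mathbf{s}(N,A',M')$; by additivity of $\mathbf{s}$, $\mathbf{s}(N,A'',M'') = \mathbf{s}(N,A,M) + \mathbf{s}(N,A',M')$ is then also flat, and the equivalence applied in the other direction yields that $\mathbf{x}(\varepsilon)(N,A'',M'')$ is flat. (One must keep in mind that $m'' = m + m'$ need not equal $m + m'$ as the "maximal number of comparisons"; but since the argument only uses that $1 + \varepsilon mn \neq 0$, which holds for every $\varepsilon > 0$ and every $m \in \mathbb{N}$, this causes no difficulty.)

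For least squares the same template works: $\mathbf{q}(N,A,M)$ satisfies $L\mathbf{q} = \mathbf{s}$ with $\mathbf{e}^\top \mathbf{q} = 0$. If $\mathbf{q} = c\mathbf{e}$ then $\mathbf{e}^\top\mathbf{q} = cn = 0$ forces $c = 0$, so flatness of $\mathbf{q}$ means $\mathbf{q} = \mathbf{0}$, hence $\mathbf{s} = L\mathbf{q} = \mathbf{0}$; conversely $\mathbf{s} = \mathbf{0}$ gives $\mathbf{q} = \mathbf{0}$ on $\mathcal{R}^C$ by Proposition~\ref{Prop22}. So again flatness of $\mathbf{q}$ is equivalent to $\mathbf{s} = \mathbf{0}$, which is additive, delivering $FP$. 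I would state the argument uniformly: for each of the three methods, the rating is flat iff $\mathbf{s}$ is flat (with the extra pin-down $\mathbf{s} = \mathbf{0}$ for least squares), and $\mathbf{s}$ is additive.

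The only genuine subtlety — the "main obstacle", though it is mild — is making sure the equivalence "rating flat $\iff$ $\mathbf{s}$ (suitably) flat" is stated on the right domain: for least squares one needs connectedness of the comparison multigraph to know the solution is unique, so the claim should be read on $\mathcal{R}^C$ as the rest of the paper does, and one should check that $G''$ (the comparison multigraph of the sum) is connected whenever $G$ and $G'$ are — which is clear since $m''_{ij} = m_{ij} + m'_{ij} \geq m_{ij}$, so any path in $G$ survives in $G''$. No comparable issue arises for the score or generalized row sum, whose solutions are globally well-defined.
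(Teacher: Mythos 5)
Your proposal is correct and follows essentially the same route as the paper: reduce flatness of the generalized row sum (resp.\ least squares) rating to flatness of the score vector $\mathbf{s}$ via the defining linear systems (using $L\mathbf{e}=\mathbf{0}$ and, for least squares, the normalization $\mathbf{e}^\top\mathbf{q}=0$ together with connectedness), and then invoke additivity of $\mathbf{s}$. The extra checks you make (that the factor $1+\varepsilon mn$ is harmless and that the summed comparison multigraph stays connected) are sensible refinements of the same argument, not a different one.
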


It had been shown in \citet[Corollary~4.3]{Gonzalez-DiazHendrickxLohmann2013} for the least squares, and  in \citet[Proposition~4.2]{Gonzalez-DiazHendrickxLohmann2013} for generalized row sum with $\varepsilon = 1 / \left[ m(n-2) \right]$.

\begin{proof}
The score method preserves flatness due to Lemma~\ref{Lemma51} and Corollary~\ref{Col2}.

If $x_i(\varepsilon)(N,A,M) = x_j(\varepsilon)(N,A,M)$ for all $X_i,X_j \in N$, then $\mathbf{x}(\varepsilon)(N,A,M) = \mathbf{0}$.
We prove that $\mathbf{s}(N,A,M) = \mathbf{0} \Leftrightarrow \mathbf{x}(\varepsilon)(N,A,M) = \mathbf{0}$.
$s_i(N,A,M) = s_j(N,A,M)$ for all $X_i,X_j \in N$ implies $\mathbf{s}(N,A,M) = \mathbf{0}$, therefore $\mathbf{x}(\varepsilon)(N,A,M) = \mathbf{0}$.
On the other hand, $\mathbf{x}(\varepsilon)(N,A,M) = \mathbf{0}$ implies $(1 + \varepsilon m n) \mathbf{s}(N,A,M) = \mathbf{0}$, so $\mathbf{s}(N,A,M) = \mathbf{0}$.

The same argument can be applied in the case of least squares.
\end{proof}

\begin{lemma} \label{Lemma53}
Fair bets, dual fair bets and Copeland fair bets methods satisfy $FP$.
\end{lemma}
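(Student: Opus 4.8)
The plan is to show that the flatness hypothesis forces the tournament matrix of each summand to be "balanced" in a way that survives addition. First I would unpack what $f_i(N,A,M) = f_j(N,A,M)$ for all $X_i, X_j \in N$ means for each of the three methods. For fair bets, the common value must be $1/n$ (because of the normalization $\mathbf{e}^\top \mathbf{fb} = 1$), so the defining equation $F^{-1} T \mathbf{fb} = \mathbf{fb}$ becomes $T \mathbf{e} = F \mathbf{e}$, i.e.\ the row sums of $T$ equal the column sums of $T$ — each object wins exactly as much as it loses ($T\mathbf{e} = T^\top \mathbf{e}$). This is exactly the condition that makes $\mathbf{fb} = \mathbf{e}/n$ a solution. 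The same computation for dual fair bets (using $T^\top$ in place of $T$) yields the identical condition $T\mathbf{e} = T^\top\mathbf{e}$, and Copeland fair bets is the sum of the two, so it suffices to handle fair bets and dual fair bets.

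Next I would observe that the condition $T\mathbf{e} = T^\top\mathbf{e}$ is additive: if $T\mathbf{e} = T^\top\mathbf{e}$ and $T'\mathbf{e} = T'^\top\mathbf{e}$, then $(T+T')\mathbf{e} = (T+T')^\top\mathbf{e}$. Since the tournament matrix of the sum $(N,A+A',M+M')$ is $T + T'$ (because $T = (A+M)/2$ is linear in $(A,M)$), the sum again has balanced row/column sums. Then $\mathbf{e}/n$ satisfies $F''^{-1}(T+T')(\mathbf{e}/n) = \mathbf{e}/n$ and the normalization, so — invoking uniqueness where available (Proposition~\ref{Prop23}) or, in the reducible case, the convention on the extension used in the paper — the fair bets rating of the sum is the flat vector $\mathbf{e}/n$. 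Hence $fb_i = fb_j$ for all $i,j$, which is flatness preservation. The dual fair bets argument is the mirror image, and Copeland fair bets inherits the property by Definition~\ref{Def27} since $\mathbf{Cfb} = \mathbf{fb} + \mathbf{dfb}$ is a sum of flat vectors.

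The main obstacle I anticipate is the non-uniqueness of the fair bets rating on reducible problems, and even the well-definedness of $F^{-1}$ (an object with no losses makes $F$ singular): one has to be careful that the argument "$\mathbf{e}/n$ solves the system, therefore it is *the* rating" is legitimate. On the connected/irreducible domain $\mathcal{R}^I$ to which the paper restricts fair bets, uniqueness (Proposition~\ref{Prop23}) resolves this cleanly, and a flat rating $\mathbf{e}/n$ forces $F$ nonsingular anyway (every object has positive total score equal to its positive total loss, unless $T = O$, the degenerate case which is handled directly since then $\mathbf{fb}$ of the sum is again flat by $SYM$). A secondary subtlety is confirming that the flatness hypothesis really does pin the common value to $1/n$ and not merely "some constant" — but this is immediate from $\mathbf{e}^\top \mathbf{fb} = 1$ together with $\mathbf{fb} = c\mathbf{e}$. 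So the proof is short once these edge cases are dispatched.
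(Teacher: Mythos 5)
Your proof is correct, but it follows a genuinely different route from the paper. The paper's proof of Lemma~\ref{Lemma53} is a citation: flatness preservation is one of the axioms in the characterization of fair bets in \citet[Theorem~1]{SlutzkiVolij2005}, so fair bets satisfies it by construction, \citet[Remark~1]{SlutzkiVolij2005} transfers this to dual fair bets, and Copeland fair bets inherits it as the sum of two flat vectors. You instead give a direct, self-contained algebraic argument: the normalization $\mathbf{e}^\top \mathbf{fb} = 1$ pins a flat rating to $\mathbf{e}/n$, substituting this into $F^{-1}T\,\mathbf{fb} = \mathbf{fb}$ with $F = \mathrm{diag}(T^\top\mathbf{e})$ shows that flatness is equivalent to the balance condition $T\mathbf{e} = T^\top\mathbf{e}$, this condition is clearly additive and the tournament matrix of the sum is $T+T'$, so $\mathbf{e}/n$ again solves the system for the summed problem and uniqueness (Proposition~\ref{Prop23}) identifies it as the rating. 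The dual computation yields the same balance condition, and Copeland follows by Definition~\ref{Def27}. Your approach buys transparency — it exposes \emph{why} the property holds (each object's wins equal its losses, a condition stable under addition) rather than deferring to an external characterization — at the cost of having to police the edge cases you correctly flag: nonsingularity of $F$ and uniqueness of the solution. On the domain $\mathcal{R}^I$ to which the paper restricts fair bets these are unproblematic (irreducibility forces every object to have a loss, and irreducibility of both summands passes to the sum since $t''_{ij} \geq t_{ij}$), so your argument closes cleanly; the paper's citation-based proof sidesteps these issues entirely by inheriting them from the cited theorem.
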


\begin{proof}
See \citet[Theorem~1]{SlutzkiVolij2005} for the fair bets. According to \citet[Remark~1]{SlutzkiVolij2005}, it is true for dual fair bets, too. It implies that Copeland fair bets also preserves flatness.
\end{proof}

To conclude, among the ranking procedures discussed, only the score method satisfies the strongest possible version of additivity (it will be shown later that fair bets and its peers breaak consistency). However, all of them meets an almost trivial property called flatness preservation.
It remains to be seen how they behave between these extremities.

\subsection{Two new requirements} \label{Sec32}

All objects ranked uniformly seems to be a tough condition in $FP$, therefore it makes sense to require additivity on a larger set. An obvious choice can be that only the objects involved are ranked equally.

\begin{definition} \label{Def53}
\emph{Equality preservation} ($EP$):
Let $(N,A,M), (N,A',M') \in \mathcal{R}$ be two ranking problems and $X_i, X_j \in N$ be two objects.
Let $f: \mathcal{R} \to \mathbb{R}^n$ be a scoring procedure such that $f_i(N,A,M) = f_j(N,A,M)$ and $f_i(N,A',M') = f_j(N,A',M')$.
$f$ \emph{preserves equality} if $f_i(N,A+A',M+M') = f_j(N,A+A',M+M')$.
\end{definition}

%$EP$ demands additivity only for objects ranked equally.

\begin{corollary} \label{Col3}
$CS$ implies $EP$. \\
$EP$ implies $FP$.
\end{corollary}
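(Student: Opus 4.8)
This is a straightforward chain of implications, so the proof plan is to verify each link directly from the definitions.

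\bigskip

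The plan is to prove the two implications separately, each by a direct appeal to the relevant definitions. For the first, $CS \Rightarrow EP$: suppose $f$ is consistent, and take two ranking problems $(N,A,M),(N,A',M') \in \mathcal{R}$ together with objects $X_i, X_j \in N$ such that $f_i(N,A,M) = f_j(N,A,M)$ and $f_i(N,A',M') = f_j(N,A',M')$. I would observe that the hypothesis $f_i = f_j$ in each problem gives both $f_i \geq f_j$ and $f_j \geq f_i$ in that problem. Applying consistency with the pair $(X_i, X_j)$ yields $f_i(N,A+A',M+M') \geq f_j(N,A+A',M+M')$; applying it with the pair $(X_j, X_i)$ yields $f_j(N,A+A',M+M') \geq f_i(N,A+A',M+M')$. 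Together these give $f_i(N,A+A',M+M') = f_j(N,A+A',M+M')$, which is exactly equality preservation. (The strict part of the definition of $CS$ is never triggered, since neither inequality in the hypothesis is strict.)

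For the second implication, $EP \Rightarrow FP$: suppose $f$ preserves equality, and take $(N,A,M),(N,A',M') \in \mathcal{R}$ with $f_i(N,A,M) = f_j(N,A,M)$ and $f_i(N,A',M') = f_j(N,A',M')$ \emph{for all} $X_i, X_j \in N$. Fix an arbitrary pair $X_i, X_j \in N$. The universally-quantified hypothesis of $FP$ in particular gives the two equalities required by the hypothesis of $EP$ for this single pair, so $EP$ gives $f_i(N,A+A',M+M') = f_j(N,A+A',M+M')$. Since $X_i, X_j$ were arbitrary, this holds for all pairs, which is flatness preservation.

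There is essentially no obstacle here; the only thing to be careful about is the direction of the quantifiers, namely that $EP$ is the stronger of the two in the sense that it concludes from a weaker hypothesis (equality for one designated pair) than $FP$ (equality for all pairs), so $FP$'s hypothesis trivially supplies what $EP$ needs for each fixed pair. I would write the argument in two short paragraphs, one per implication, and note that this corollary together with Corollary~\ref{Col2} places $EP$ strictly between $CS$ and $FP$ in the hierarchy of additivity axioms.
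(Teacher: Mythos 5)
Your proposal is correct and matches the paper's approach: the paper simply states that both implications follow from the definitions, and your argument is exactly the routine unpacking of that claim (applying $CS$ to both orderings of the pair to extract equality, and specializing $FP$'s universally quantified hypothesis to each pair for $EP$). Nothing further is needed.
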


\begin{proof}
It follows from Definitions~\ref{Def51} and \ref{Def53}, and Definitions~\ref{Def52} and \ref{Def53}, respectively.
\end{proof}

\begin{lemma} \label{Lemma54}
The score method satisfies $EP$.
\end{lemma}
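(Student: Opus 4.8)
The plan is to observe that equality preservation is an immediate consequence of the additivity of the score vector, exactly as consistency was in Lemma~\ref{Lemma51}. Recall that $\mathbf{s}(N,A,M) = A\mathbf{e}$, so summing two ranking problems gives $\mathbf{s}(N,A+A',M+M') = (A+A')\mathbf{e} = A\mathbf{e} + A'\mathbf{e} = \mathbf{s}(N,A,M) + \mathbf{s}(N,A',M')$. In particular the $i$-th and $j$-th coordinates add up componentwise.

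From this the verification is one line: if $f = \mathbf{s}$ satisfies $s_i(N,A,M) = s_j(N,A,M)$ and $s_i(N,A',M') = s_j(N,A',M')$, then
\[
s_i(N,A+A',M+M') = s_i(N,A,M) + s_i(N,A',M') = s_j(N,A,M) + s_j(N,A',M') = s_j(N,A+A',M+M'),
\]
which is exactly the conclusion required by $EP$ in Definition~\ref{Def53}.

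Alternatively, and even more briefly, one can simply invoke the results already proved: by Lemma~\ref{Lemma51} the score method satisfies $CS$, and by Corollary~\ref{Col3} $CS$ implies $EP$. So the statement follows with no further work. I would present the proof in this second form, since it keeps the exposition uniform with the surrounding lemmas about $FP$ (cf.\ the proof of Lemma~\ref{Lemma52}, which likewise chains $CS$ and the relevant corollary).

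There is essentially no obstacle here: the only thing to be careful about is making sure the chain of implications is stated in the right direction — $CS \Rightarrow EP$, not the reverse — and that $CS$ has genuinely been established for the score method beforehand, which it has in Lemma~\ref{Lemma51}. Hence the proof is a two-step citation.

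\begin{proof}
The score method satisfies $CS$ by Lemma~\ref{Lemma51}, and $CS$ implies $EP$ by Corollary~\ref{Col3}.
\end{proof}
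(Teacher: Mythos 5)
Your proof is correct and coincides with the paper's: the paper likewise derives $EP$ for the score method by citing Lemma~\ref{Lemma51} ($CS$ for the score) together with Corollary~\ref{Col3} ($CS \Rightarrow EP$). The direct componentwise computation you sketch first is also valid, but the two-step citation you settle on is exactly what the paper does.
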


\begin{proof}
It comes from Lemma~\ref{Lemma51} and Corollary~\ref{Col3}.
\end{proof}

\begin{lemma} \label{Lemma55}
The generalized row sum and least squares methods violate $EP$.
\end{lemma}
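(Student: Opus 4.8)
The plan is to exhibit an explicit counterexample, in the same style as the proof of Proposition~\ref{Prop51}, showing two ranking problems $(N,A,M)$ and $(N,A',M')$ and two objects $X_i,X_j$ that are tied (in the sense of the relevant scoring vector) in each problem but not in the sum. Since $EP$ is a weakening of $CS$ in which only the two objects $X_i,X_j$ are required to be tied (rather than the whole vector being flat, as in $FP$), the counterexample already built for Proposition~\ref{Prop51} is a natural candidate: in Example~\ref{Examp5} we have $x_1(\varepsilon)=x_2(\varepsilon)$ and $x_1(\varepsilon)'=x_2(\varepsilon)'$, yet $x_1(\varepsilon)''-x_2(\varepsilon)''<0$; and likewise $q_1=q_2$, $q_1'=q_2'$, but $q_1''<q_2''$. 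So the first step is simply to observe that the construction of Example~\ref{Examp5} witnesses a violation of $EP$ for generalized row sum (for every $\varepsilon>0$) and for least squares, invoking Lemma~\ref{Lemma22} for the least squares computation exactly as before.

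First I would state that it suffices to reuse Example~\ref{Examp5}: take $X_i=X_1$, $X_j=X_2$, and the two ranking problems $(N,A,M)$, $(N,A',M')$ defined there. Then I would recall the three identities already verified in the proof of Proposition~\ref{Prop51} — namely $x_1(\varepsilon)=x_2(\varepsilon)$, $x_1(\varepsilon)'=x_2(\varepsilon)'=-1$, and $x_1(\varepsilon)''-x_2(\varepsilon)''<0$ for all $\varepsilon>0$ — which immediately give $f_1(N,A,M)=f_2(N,A,M)$ and $f_1(N,A',M')=f_2(N,A',M')$ but $f_1(N,A'',M'')\neq f_2(N,A'',M'')$ for $f=\mathbf{x}(\varepsilon)$, contradicting $EP$. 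The least squares case then follows by passing to the limit $\varepsilon\to\infty$ and dividing by $m''n$, as in the displayed computation of Proposition~\ref{Prop51}, yielding $q_1=q_2$, $q_1'=q_2'$, and $q_1''-q_2''=-1/17<0$.

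An even cleaner alternative, which I would mention as a remark or use instead, is to derive Lemma~\ref{Lemma55} purely logically from results already in hand: by Proposition~\ref{Prop51} these methods violate $CS$, but that alone does not give a violation of the weaker $EP$, so this shortcut does not actually work — one genuinely needs the stronger feature of Example~\ref{Examp5}, namely that the two objects whose relative ranking flips are themselves tied in both summands. Hence the explicit-example route is the right one, and the honest one-line proof is: ``Example~\ref{Examp5} proves the statement, since $X_1$ and $X_2$ are tied in both $(N,A,M)$ and $(N,A',M')$ for generalized row sum and least squares, but not in their sum.''

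The main obstacle is essentially bookkeeping rather than mathematics: one must make sure the tie in Example~\ref{Examp5} is a tie of exactly the two objects $X_1,X_2$ (which it is, by symmetry of the construction under the transposition $X_1\leftrightarrow X_2$ composed with $X_3\leftrightarrow X_4$), and that the strict inequality in the sum is the one between those same two objects. Since all the algebra — the rational-function expressions for $x_1(\varepsilon)-x_2(\varepsilon)$ in each of the three problems, and their $\varepsilon\to\infty$ limits for least squares — has already been carried out in the proof of Proposition~\ref{Prop51}, no new computation is required; the proof is a citation of that example together with the trivial observation that $EP$ is violated whenever two objects tied in both summands are separated in the sum.
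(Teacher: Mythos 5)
Your proposal is correct and coincides with the paper's own proof: the paper's argument for Lemma~\ref{Lemma55} is precisely the one-line citation of Example~\ref{Examp5}, noting that $X_1 \sim X_2$ in both $(N,A,M)$ and $(N,A',M')$ for $\mathbf{x}(\varepsilon)$ and $\mathbf{q}$, yet $X_1 \prec X_2$ in the sum. Your observation that one cannot shortcut via the violation of $CS$ (since $EP$ is weaker) and genuinely needs the tie between the two specific objects is also the right reading of why the example, rather than mere logic, is required.
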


\begin{proof}
In Example~\ref{Examp5}, $X_1 \sim_{(N,A,M)}^{\mathbf{x}(\varepsilon)} X_2$ and $X_1 \sim_{(N,A,M)}^{\mathbf{q}} X_2$ as well as $X_1 \sim_{(N,A',M')}^{\mathbf{x}(\varepsilon)} X_2$ and $X_1 \sim_{(N,A',M')}^{\mathbf{q}} X_2$, but $X_1 \prec_{(N,A'',M'')}^{\mathbf{x}(\varepsilon} X_2$ and $X_1 \prec_{(N,A'',M'')}^{\mathbf{q}} X_2$.
\end{proof}

\begin{proposition} \label{Prop52}
Fair bets, dual fair bets and Copeland fair bets methods violate $EP$.
\end{proposition}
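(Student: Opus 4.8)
The plan is to construct an explicit counterexample on the class of irreducible ranking problems $\mathcal{R}^I$ (so that fair bets is well-defined by Proposition~\ref{Prop23}), exhibiting two ranking problems $(N,T)$ and $(N,T')$ in which two fixed objects $X_i$ and $X_j$ are tied under each of the three methods, while after adding the problems the tie is broken. Because of the transformations $\mathbf{dfb}(N,T) = -\mathbf{dfb}^\ast(N,T)$ and $\mathbf{Cfb} = \mathbf{fb}+\mathbf{dfb}$, it is economical to search for a single example that simultaneously works for all three; a natural first attempt is to reuse or adapt the four-object configuration of Example~\ref{Examp5} (or a small variant that is irreducible), computing the dominant left-eigenvectors of $F^{-1}T$ and $[\operatorname{diag}(T\mathbf{e})]^{-1}T^\top$ directly. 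The smallest interesting case is $n=3$: take $N=\{X_1,X_2,X_3\}$ with $X_1,X_2$ symmetric in the first problem and again symmetric in the second, but where the two problems differ in how $X_3$ interacts with them, so that the normalized eigenvector equations no longer force $fb_1 = fb_2$ after summation.

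First I would fix notation $\mathbf{fb} = \mathbf{fb}(N,T)$, $\mathbf{fb}' = \mathbf{fb}(N,T')$, $\mathbf{fb}'' = \mathbf{fb}(N,T+T')$, and similarly for $\mathbf{dfb}$ and $\mathbf{Cfb}$. Second, I would choose $T$ and $T'$ so that in each summand objects $X_i$ and $X_j$ are genuinely indistinguishable in the sense of Remark~\ref{Rem2} (equal row/column data up to the swap and $m_{ij}=0$, or more simply $t_{ik}=t_{jk}$ and $t_{ki}=t_{kj}$ for all $k$), which by neutrality (Lemma~\ref{Lemma31}) immediately gives $fb_i = fb_j$, $dfb_i = dfb_j$, $Cfb_i = Cfb_j$ in both problems — this handles the hypotheses of $EP$ with essentially no computation. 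Third, in the summed problem $(N,T+T')$ this symmetry is destroyed, and I would solve the (small) linear/eigenvector systems explicitly to obtain $fb_i \neq fb_j$; then I would separately verify $dfb_i \neq dfb_j$ (equivalently $dfb^\ast_i \neq dfb^\ast_j$, via the eigenvector of $[\operatorname{diag}((T+T')\mathbf{e})]^{-1}(T+T')^\top$), and finally check that the Copeland sum $Cfb_i - Cfb_j = (fb_i-fb_j) + (dfb_i-dfb_j)$ is nonzero rather than accidentally cancelling — this last point is the one place where a single example might fail to cover all three methods at once, so I would keep the freedom to present two examples if the sums cancel.

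The main obstacle is the Copeland case: fair bets and dual fair bets are in some sense ``dual'' (one rewards wins, the other penalizes losses), so in a too-symmetric example $fb_i - fb_j$ and $dfb_i - dfb_j$ could be negatives of each other and cancel in $\mathbf{Cfb}$. To avoid this I would deliberately break left-right symmetry in the comparison structure of the summands (so that $T+T'$ is not close to being its own transpose up to relabelling), making the two differences unequal in magnitude; concretely, ensuring $(T+T')\mathbf{e}$ and $(T+T')^\top\mathbf{e}$ have different patterns on coordinates $i$ and $j$ is enough to decouple the two eigenvector computations. A secondary technical point is keeping $T+T'$ irreducible so that $\mathbf{fb}''$ and $\mathbf{dfb}''$ are the unique normalized positive eigenvectors (Proposition~\ref{Prop23}); a quick inspection of the directed paths in the summed comparison graph settles this. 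Once the three strict inequalities $fb''_i \neq fb''_j$, $dfb''_i \neq dfb''_j$, $Cfb''_i \neq Cfb''_j$ are in hand, the violation of $EP$ for each of the three methods follows at once from Definition~\ref{Def53}.
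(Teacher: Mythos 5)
Your strategy is the right one and is essentially the strategy of the paper (its Example~\ref{Examp6}): make $X_i$ and $X_j$ indistinguishable in each summand so that the ties follow from neutrality for free, then compute the three rating vectors of the sum and observe that the tie breaks. You also correctly flag the two real technical points, namely irreducibility of the summed problem and the danger that $fb_i-fb_j$ and $dfb_i-dfb_j$ cancel in the Copeland sum. But as it stands this is a plan for a proof, not a proof: the entire content of a ``violates $EP$'' claim is an explicit counterexample together with the verified (in)equalities, and you never exhibit the matrices $T$, $T'$ nor carry out any of the eigenvector computations. Until the example is written down and checked, nothing has been proved.

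There is also a structural flaw in the one concrete construction you sketch. If the tie $fb_i=fb_j$ in \emph{each} summand is obtained from neutrality via the \emph{same} permutation $\sigma$ exchanging $X_i$ and $X_j$ (an automorphism of $T$ and of $T'$), then $\sigma$ is automatically an automorphism of $T+T'$, so the tie \emph{cannot} break in the sum. Your $n=3$ proposal --- ``$X_1,X_2$ symmetric in the first problem and again symmetric in the second, differing only in how $X_3$ interacts with them'' --- reads as the transposition $(X_1X_2)$ being an automorphism of both summands, which forces $t''_{13}=t''_{23}$ and $t''_{31}=t''_{32}$ and hence $fb''_1=fb''_2$, $dfb''_1=dfb''_2$, $Cfb''_1=Cfb''_2$; that example can never work. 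The fix, which is exactly what the paper's Example~\ref{Examp6} does on four objects, is to use \emph{different} symmetries in the two summands: there the tie $X_1\sim X_4$ comes from the transposition $(X_1X_4)$ in $(N,T)$ but from the double transposition $(X_1X_4)(X_2X_3)$ in $(N,T')$, and neither permutation is an automorphism of $T+T'$. With that example one computes $fb''_1=163/512>157/512=fb''_4$, $dfb''_1=-101/512<-91/512=dfb''_4$ and $Cfb''_1=31/256<33/256=Cfb''_4$, so a single pair of problems settles all three methods (and, incidentally, shows fair bets and its dual breaking the tie in opposite directions without cancelling). You would need to supply such an example and these numbers to complete the argument.
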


\begin{proof}

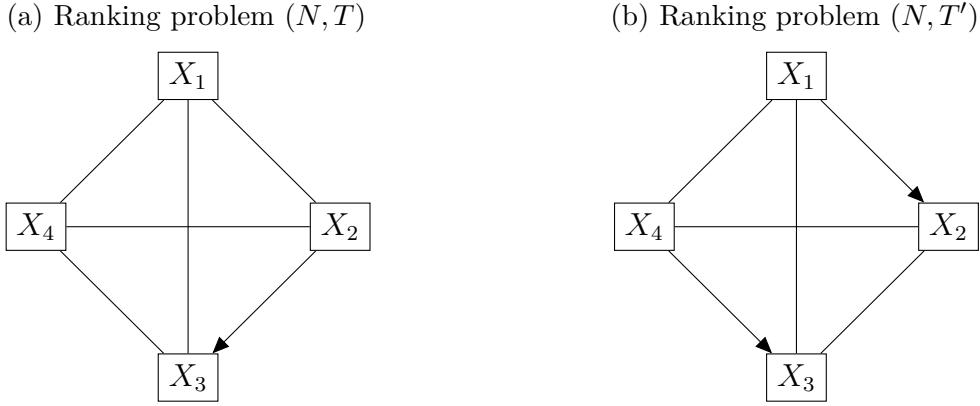
\begin{figure}[htbp]
\centering
\caption{Ranking problems of Example~\ref{Examp6}}
\label{Fig6}
  
\begin{subfigure}{.5\textwidth}
  \centering
  \caption{Ranking problem $(N,T)$}
  \label{Fig6a}
\begin{tikzpicture}[scale=1, auto=center, transform shape, >=triangle 45]
\tikzstyle{every node}=[draw,shape=rectangle];
  \node (n1) at (90:2) {$X_1$};
  \node (n2) at (0:2)  {$X_2$};
  \node (n3) at (270:2) {$X_3$};
  \node (n4) at (180:2)  {$X_4$};

\draw(n1) -- (n2);
\draw(n1) -- (n3);
\draw(n1) -- (n4);
\draw[->](n2) -- (n3);
\draw(n2) -- (n4);
\draw(n3) -- (n4);
\end{tikzpicture}
\end{subfigure}%\vspace{0.5cm}
\begin{subfigure}{.5\textwidth}
  \centering
  \caption{Ranking problem $(N,T')$}
  \label{Fig6b}
\begin{tikzpicture}[scale=1, auto=center, transform shape, >=triangle 45]
\tikzstyle{every node}=[draw,shape=rectangle];
  \node (n1) at (90:2) {$X_1$};
  \node (n2) at (0:2)  {$X_2$};
  \node (n3) at (270:2) {$X_3$};
  \node (n4) at (180:2)  {$X_4$};

\draw[->](n1) -- (n2);
\draw(n1) -- (n3);
\draw(n1) -- (n4);
\draw(n2) -- (n3);
\draw(n2) -- (n4);
\draw[->](n4) -- (n3);
\end{tikzpicture}
\end{subfigure}
\end{figure}

\begin{example} \label{Examp6}
Let $(N,T),(N,T') \in \mathcal{R}$ be the ranking problems in Figure \ref{Fig6} with the set of objects $N = \{ X_1,X_2,X_3,X_4 \}$ and tournament matrices
\[
T =
\begin{pmatrix}
    0     & 0.5   & 0.5   & 0.5 \\
    0.5   & 0     & 1     & 0.5 \\
    0.5   & 0     & 0     & 0.5 \\
    0.5   & 0.5   & 0.5   & 0 \\
\end{pmatrix} \qquad \text{and} \qquad
T' =
\begin{pmatrix}
    0     & 1     & 0.5   & 0.5 \\
    0     & 0     & 0.5   & 0.5 \\
    0.5   & 0.5   & 0     & 0 \\
    0.5   & 0.5   & 1     & 0 \\
\end{pmatrix}.
\]
Let $(N,T'') = (N,T+T') \in \mathcal{R}$ be the sum of these two ranking problems.
\end{example}

\begin{table}[htbp]
\centering
\caption{Fair bets and associated rating vectors of Example~\ref{Examp6}}
\label{Table4}
\begin{footnotesize}
\noindent\makebox[\textwidth]{
    \begin{tabularx}{1\textwidth}{C rrr rrr rrr} \toprule
    & $\mathbf{fb}(T)$     & $\mathbf{dfb}(T)$  & $\mathbf{Cfb}(T)$   & $\mathbf{fb}(T')$   & $\mathbf{dfb}(T')$   & $\mathbf{Cfb}(T')$ & $\mathbf{fb}(T'')$   & $\mathbf{dfb}(T'')$   & $\mathbf{Cfb}(T'')$ \\
\midrule
    $X_1$  & $1/4$  & $-1/4$  & $0$     & $3/8$  & $-1/8$  & $1/4$   & $163/512$ & $-101/512$ & $31/256$ \\
    $X_2$  & $3/8$  & $-1/8$  & $1/4$   & $1/8$  & $-3/8$  & $-1/4$  & $117/512$ & $-115/512$ & $1/256$ \\
    $X_3$  & $1/8$  & $-3/8$  & $-1/4$  & $1/8$  & $-3/8$  & $-1/4$  & $75/512$  & $-205/512$ & $-65/256$ \\
    $X_4$  & $1/4$  & $-1/4$  & $0$     & $3/8$  & $-1/8$  & $1/4$   & $157/512$ & $-91/512$  & $33/256$ \\
    \bottomrule
    \end{tabularx} }
\end{footnotesize}
\end{table}

The rating vectors are given in Table~\ref{Table4}: $X_1 \sim_{(N,T)} X_4$ and $X_1 \sim_{(N,T')} X_4$ for the three methods, but $X_1 \succ_{(N,T'')}^{\mathbf{fb}} X_4$, $X_1 \prec_{(N,T'')}^{\mathbf{dfb}} X_4$, and $X_1 \prec_{(N,T'')}^{\mathbf{Cfb}} X_4$.
\end{proof}

\begin{lemma} \label{Lemma56}
Fair bets, dual fair bets and Copeland fair bets methods violate $CS$.
\end{lemma}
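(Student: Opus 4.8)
The plan is to reuse the ranking problems already constructed in Example~\ref{Examp6}, since $CS$ is a strictly stronger requirement than $EP$ by Corollary~\ref{Col3}, and Proposition~\ref{Prop52} has just shown that $EP$ fails for all three methods precisely on that example. The logical skeleton is: first observe that in Example~\ref{Examp6} we have $X_1 \sim_{(N,T)} X_4$ and $X_1 \sim_{(N,T')} X_4$ for fair bets, dual fair bets, and Copeland fair bets (read off from Table~\ref{Table4}); second, note that equality is a special case of the weak inequality required in the hypothesis of $CS$, so $CS$ would force $X_1 \sim_{(N,T'')} X_4$ (the weak inequality must hold in both directions, and the strict-part clause of $CS$ is not triggered because neither summand has a strict inequality); third, contradict this with the computed ratings $fb_1(T'') \neq fb_4(T'')$, $dfb_1(T'') \neq dfb_4(T'')$, $Cfb_1(T'') \neq Cfb_4(T'')$, again from Table~\ref{Table4}.

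Concretely, I would write: ``The ranking problems of Example~\ref{Examp6} satisfy $f_i(N,T) = f_j(N,T)$ and $f_i(N,T') = f_j(N,T')$ for $i=1$, $j=4$ and each of $f \in \{\mathbf{fb}, \mathbf{dfb}, \mathbf{Cfb}\}$, hence in particular $f_1 \geq f_4$ in both problems. Consistency would then require $f_1(N,T'') \geq f_4(N,T'')$; applying the same argument with the roles of $X_1$ and $X_4$ interchanged gives $f_4(N,T'') \geq f_1(N,T'')$, so $CS$ implies $f_1(N,T'') = f_4(N,T'')$. But Table~\ref{Table4} shows $fb_1(T'') = 163/512 \neq 157/512 = fb_4(T'')$, $dfb_1(T'') = -101/512 \neq -91/512 = dfb_4(T'')$, and $Cfb_1(T'') = 31/256 \neq 33/256 = Cfb_4(T'')$, a contradiction.'' That is essentially the whole argument.

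There is no real obstacle here: the numerical work has already been carried out for Proposition~\ref{Prop52}, and no new example or computation is needed. The only thing to be careful about is the logical direction — making explicit that $CS$ bites through the weak-inequality hypothesis applied in both directions, so that an \emph{equality} in the two summands forces an equality in the sum under $CS$, which is exactly what the example violates. Since we already know $EP$ fails on this example (Proposition~\ref{Prop52}) and $CS \Rightarrow EP$ (Corollary~\ref{Col3}), one could even dispense with re-deriving the contradiction and simply invoke the contrapositive; but spelling it out via Table~\ref{Table4} makes the lemma self-contained. I would keep the proof to two or three sentences, pointing back to Example~\ref{Examp6}, Table~\ref{Table4}, and Corollary~\ref{Col3}.

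\begin{proof}
Consider the ranking problems of Example~\ref{Examp6}. By Table~\ref{Table4}, $X_1 \sim_{(N,T)} X_4$ and $X_1 \sim_{(N,T')} X_4$ for fair bets, dual fair bets, and Copeland fair bets. If any of these methods were consistent, then applying Definition~\ref{Def51} with $i=1$, $j=4$ and then with $i=4$, $j=1$ would yield $X_1 \sim_{(N,T'')} X_4$. However, Table~\ref{Table4} shows $fb_1(T'') = 163/512 \neq 157/512 = fb_4(T'')$, $dfb_1(T'') = -101/512 \neq -91/512 = dfb_4(T'')$, and $Cfb_1(T'') = 31/256 \neq 33/256 = Cfb_4(T'')$, a contradiction. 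Alternatively, the claim follows at once from Proposition~\ref{Prop52} and Corollary~\ref{Col3}.
\end{proof}
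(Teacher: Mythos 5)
Your proposal is correct and matches the paper's proof, which simply combines Proposition~\ref{Prop52} (violation of $EP$ on Example~\ref{Examp6}) with Corollary~\ref{Col3} ($CS \Rightarrow EP$). Your spelled-out version via Table~\ref{Table4} is just an unfolding of that same contrapositive and adds nothing that would change the argument.
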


\begin{proof}
It comes from Proposition~\ref{Prop52} and Corollary~\ref{Col3}.
\end{proof}

Another obvious restriction on $CS$ can be to allow only for the combination of ranking problems with the same matches matrix, when the interaction of different comparison multigraphs is eliminated.

\begin{definition} \label{Def54}
\emph{Result consistency} ($RCS$):
Let $(N,A,M), (N,A',M) \in \mathcal{R}$ be two ranking problems and $X_i, X_j \in N$ be two objects.
Let $f: \mathcal{R} \to \mathbb{R}^n$ be a scoring procedure such that $f_i(N,A,M) \geq f_j(N,A,M)$ and $f_i(N,A',M) \geq f_j(N,A',M)$.
$f$ is called \emph{result consistent} if $f_i(N,A+A',2M) \geq f_j(N,A+A',2M)$, furthermore, $f_i(N,A+A',2M) > f_j(N,A+A',2M)$ if $f_i(N,A,M) > f_j(N,A,M)$ or $f_i(N,A',M) > f_j(N,A',M)$.
\end{definition}

\begin{corollary} \label{Col4}
$CS$ implies $RCS$.
\end{corollary}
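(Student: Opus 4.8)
The plan is to observe that $RCS$ is simply the special case of $CS$ obtained by setting $M' = M$.

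First I would unpack the hypothesis of $RCS$: we are given two ranking problems $(N,A,M)$ and $(N,A',M)$ that share the same matches matrix $M$, and two objects $X_i, X_j \in N$ with $f_i(N,A,M) \geq f_j(N,A,M)$ and $f_i(N,A',M) \geq f_j(N,A',M)$. I would then simply invoke $CS$ with the particular choice $M' = M$: since $M + M' = M + M = 2M$, consistency yields $f_i(N,A+A',2M) \geq f_j(N,A+A',2M)$, and it also gives the strict conclusion $f_i(N,A+A',2M) > f_j(N,A+A',2M)$ whenever $f_i(N,A,M) > f_j(N,A,M)$ or $f_i(N,A',M) > f_j(N,A',M)$. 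This is exactly the conclusion required by $RCS$, so the implication holds.

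There is essentially no obstacle here: the only thing to check is that $(N,A',M) \in \mathcal{R}$ is a legitimate input, i.e. that $|a'_{ij}| \leq m_{ij}$ continues to hold, which is already part of the hypothesis that $(N,A',M) \in \mathcal{R}$, and that $(N,A+A',2M) \in \mathcal{R}$, which follows because $|a_{ij} + a'_{ij}| \leq |a_{ij}| + |a'_{ij}| \leq 2 m_{ij}$. So the proof is a one-line specialization of Definition~\ref{Def51}, entirely parallel to Corollaries~\ref{Col2} and \ref{Col3}.

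\begin{proof}
Apply Definition~\ref{Def51} with $M' = M$. Given $(N,A,M), (N,A',M) \in \mathcal{R}$ and $X_i, X_j \in N$ with $f_i(N,A,M) \geq f_j(N,A,M)$ and $f_i(N,A',M) \geq f_j(N,A',M)$, consistency gives $f_i(N,A+A',M+M) \geq f_j(N,A+A',M+M)$, that is, $f_i(N,A+A',2M) \geq f_j(N,A+A',2M)$, with the inequality strict whenever it is strict in $(N,A,M)$ or in $(N,A',M)$. This is precisely the requirement in Definition~\ref{Def54}.
\end{proof}
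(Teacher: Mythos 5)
Your proposal is correct and matches the paper's own argument, which likewise treats $RCS$ as the specialization of $CS$ to the case $M' = M$ (the paper simply states that the corollary follows from Definitions~\ref{Def51} and \ref{Def54}). Your additional check that $(N,A+A',2M) \in \mathcal{R}$ is a harmless and valid elaboration.
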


\begin{proof}
It follows from Definitions~\ref{Def51} and \ref{Def54}.
\end{proof}

\begin{proposition} \label{Prop53}
$RCS$ and $SYM$ imply $INV$.
\end{proposition}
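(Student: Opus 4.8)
The plan is to exploit the single point of contact between the three axioms: adding a ranking problem to its own inversion produces a problem whose results matrix is $O$, on which $SYM$ forces every object to receive the same rating, whereas any non-flat outcome predicted by $RCS$ for that sum would be a contradiction.

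First I would fix an arbitrary ranking problem $(N,A,M)$ in the domain and note that $(N,-A,M)$ lies in the same domain (the matches matrix, hence the comparison multigraph, is unchanged, and $|-a_{ij}| = |a_{ij}| \leq m_{ij}$), so $RCS$ may be applied to the pair $(N,A,M)$, $(N,-A,M)$. Their $RCS$-combination is $(N,\,A+(-A),\,2M) = (N,O,2M)$, and $SYM$ gives $f_k(N,O,2M) = f_\ell(N,O,2M)$ for all $X_k, X_\ell \in N$.

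Next I would establish the one-way claim: for all $X_i, X_j \in N$, if $f_i(N,A,M) \geq f_j(N,A,M)$ then $f_i(N,-A,M) \leq f_j(N,-A,M)$, and moreover the conclusion is strict whenever the hypothesis is. Suppose instead that $f_i(N,A,M) \geq f_j(N,A,M)$ while $f_i(N,-A,M) > f_j(N,-A,M)$. Then both inequalities required by the hypothesis of $RCS$ hold, and since one of them is strict, $RCS$ yields $f_i(N,O,2M) > f_j(N,O,2M)$, contradicting the flatness of $f$ on $(N,O,2M)$ just derived from $SYM$. Running the same argument with the strict inequality placed in $(N,A,M)$ instead shows that $f_i(N,A,M) > f_j(N,A,M)$ forces $f_i(N,-A,M) < f_j(N,-A,M)$.

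Finally I would assemble $INV$ from this claim. One direction of the equivalence in Definition~\ref{Def33} is precisely the claim applied to $(N,A,M)$. For the other direction I would apply the claim to the problem $(N,-A,M)$ with the roles of $X_i$ and $X_j$ interchanged: from $f_j(N,-A,M) \geq f_i(N,-A,M)$ it gives $f_j(N,A,M) \leq f_i(N,A,M)$, i.e. $f_i(N,A,M) \geq f_j(N,A,M)$. Hence the equivalence holds for every pair $X_i, X_j \in N$, so $f$ satisfies $INV$; the strict version recorded in Remark~\ref{Rem3} drops out of the strict part of the claim at the same time. The argument is almost entirely bookkeeping — the only place that needs attention is tracking which of the two summands carries the strict inequality, because $RCS$ propagates strictness from either one, and it is exactly this feature that makes both directions of the equivalence come out right.
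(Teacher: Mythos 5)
Your proof is correct and follows essentially the same route as the paper's: pair $(N,A,M)$ with $(N,-A,M)$ so that $RCS$ applies to their sum $(N,O,2M)$, invoke $SYM$ to force a flat rating there, and use the strictness clause of $RCS$ to rule out $f_i(N,-A,M) > f_j(N,-A,M)$. The paper states only the forward implication and leaves the converse direction of the equivalence implicit, whereas you spell it out by swapping the roles of $X_i$ and $X_j$ on $(N,-A,M)$; this is just added bookkeeping, not a different argument.
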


\begin{proof}
Consider a ranking problem $(N,A,M) \in \mathcal{R}$ with $f_i(N,A,M) \geq f_j(N,A,M)$ for objects $X_i,X_j \in N$. 
If $f_i(N,-A,M) > f_j(N,-A,M)$, then $f_i(N,O,2M) > f_j(N,O,2M)$ due to $RCS$, which contradicts to $SYM$. Therefore $f_i(N,-A,M) \leq f_j(N,-A,M)$.
\end{proof}

\begin{corollary} \label{Col6}
$CS$ and $SYM$ imply $INV$.
\end{corollary}

\begin{proof}
It follows from Proposition~\ref{Prop53} and Corolllary~\ref{Col4}.
\end{proof}

Corollary~\ref{Col6} was proved by \citet[Lemma~1]{NitzanRubinstein1981} in the case of round-robin ranking problems (on the set $\mathcal{R}^R$), when $CS$ is equivalent to $RCS$ and $SYM$ is an almost trivial condition.

\begin{lemma} \label{Lemma57}
The score method satisfies $RCS$.
\end{lemma}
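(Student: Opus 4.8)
The plan is to verify the defining property of consistency directly from the definition of the score method, restricted to the case where the two ranking problems share the same matches matrix $M$. Since $\mathbf{s}(N,A,M) = A\mathbf{e}$ depends only on the results matrix $A$ and not on $M$ at all, the matches matrices play no role, so $RCS$ will be an immediate corollary of $CS$ via Corollary~\ref{Col4}. Indeed, the cleanest proof is simply: the score method satisfies $CS$ by Lemma~\ref{Lemma51}, and $CS$ implies $RCS$ by Corollary~\ref{Col4}, hence the score method satisfies $RCS$.

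If instead one wants a self-contained argument, I would proceed as follows. First I would recall that $s_i(N,A,M) = \sum_{k} a_{ik}$, so $\mathbf{s}$ is additive in $A$: $\mathbf{s}(N,A+A',2M) = \mathbf{s}(N,A,M) + \mathbf{s}(N,A',M)$. Then I would take objects $X_i, X_j \in N$ with $s_i(N,A,M) \geq s_j(N,A,M)$ and $s_i(N,A',M) \geq s_j(N,A',M)$, and add the two inequalities to obtain $s_i(N,A+A',2M) \geq s_j(N,A+A',2M)$. Finally, if one of the two original inequalities is strict, then adding a strict inequality to a weak one yields a strict inequality, giving $s_i(N,A+A',2M) > s_j(N,A+A',2M)$, which is exactly what $RCS$ demands.

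There is essentially no obstacle here; the statement is a routine consequence of the linearity of $\mathbf{s}$ in its results-matrix argument, together with the monotonicity of addition of real inequalities. The only thing to be careful about is the bookkeeping in the definition of $RCS$ (the matches matrix of the sum is $2M$, not $M$), but since the score ignores the matches matrix entirely this causes no difficulty.

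\begin{proof}
The score method satisfies $CS$ by Lemma~\ref{Lemma51}, and $CS$ implies $RCS$ by Corollary~\ref{Col4}.
\end{proof}
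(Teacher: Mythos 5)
Your proof is correct and follows exactly the paper's own route: the score satisfies $CS$ by Lemma~\ref{Lemma51}, and $CS$ implies $RCS$ by Corollary~\ref{Col4}. The additional direct verification via linearity of $\mathbf{s}$ in $A$ is sound but not needed.
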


\begin{proof}
It can be derived from Lemma~\ref{Lemma51} and Corollary~\ref{Col4}.
\end{proof}

\begin{proposition} \label{Prop54}
The least squares method satisfies $RCS$.
\end{proposition}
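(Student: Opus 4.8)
The plan is to exploit the linearity of the least squares defining system in the pair (Laplacian, score vector), using the fact that result consistency only combines ranking problems sharing the same matches matrix. Fix two ranking problems $(N,A,M),(N,A',M) \in \mathcal{R}^C$ with common matches matrix $M$, let $L$ be the Laplacian of the associated comparison multigraph, and write $\mathbf{q} = \mathbf{q}(N,A,M)$, $\mathbf{q}' = \mathbf{q}(N,A',M)$, $\mathbf{q}'' = \mathbf{q}(N,A+A',2M)$. The first step is to record two elementary identities: the comparison multigraph of $(N,A+A',2M)$ has Laplacian $2L$, and its score vector is $(A+A')\mathbf{e} = \mathbf{s}(N,A,M) + \mathbf{s}(N,A',M)$. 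One should also note that doubling all edge multiplicities does not change connectedness, so $(N,A+A',2M) \in \mathcal{R}^C$ and, by Proposition~\ref{Prop22}, $\mathbf{q}''$ is well defined and unique.

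Second, I would combine the defining equations. From $L\mathbf{q} = \mathbf{s}(N,A,M)$, $L\mathbf{q}' = \mathbf{s}(N,A',M)$ and $\mathbf{e}^\top \mathbf{q} = \mathbf{e}^\top \mathbf{q}' = 0$, summing gives $L(\mathbf{q}+\mathbf{q}') = \mathbf{s}(N,A,M)+\mathbf{s}(N,A',M)$ and $\mathbf{e}^\top(\mathbf{q}+\mathbf{q}') = 0$, hence $2L \big( \tfrac12 (\mathbf{q}+\mathbf{q}') \big) = \mathbf{s}(N,A,M)+\mathbf{s}(N,A',M)$ with $\mathbf{e}^\top \big( \tfrac12 (\mathbf{q}+\mathbf{q}') \big) = 0$. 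Since the least squares rating on a connected ranking problem is the unique solution of its linear system, this forces $\mathbf{q}'' = \tfrac12(\mathbf{q}+\mathbf{q}')$.

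Third, read off $RCS$ from the componentwise difference $q_i'' - q_j'' = \tfrac12\big[ (q_i - q_j) + (q_i' - q_j') \big]$ for $X_i, X_j \in N$. If $f_i(N,A,M) \geq f_j(N,A,M)$ and $f_i(N,A',M) \geq f_j(N,A',M)$, i.e. $q_i \geq q_j$ and $q_i' \geq q_j'$, then the right-hand side is nonnegative, so $q_i'' \geq q_j''$; and if, in addition, at least one of the two inequalities is strict, the right-hand side is strictly positive, so $q_i'' > q_j''$. This is precisely the conclusion required by Definition~\ref{Def54}.

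As for the main obstacle: there is essentially no hard step here — the only points needing care are that the normalisation $\mathbf{e}^\top \mathbf{q} = 0$ is compatible with the averaging (it is, being a linear constraint) and that uniqueness of the least squares rating must be invoked, on a connected problem, to identify $\mathbf{q}''$ with $\tfrac12(\mathbf{q}+\mathbf{q}')$, which is licensed by Proposition~\ref{Prop22} together with the observation that connectedness is insensitive to uniform scaling of $M$. One could even bypass the normalisation issue entirely by working with the differences $q_i - q_j$, which are determined by $L\mathbf{q} = \mathbf{s}$ up to the kernel $\langle \mathbf{e} \rangle$ of $L$ and are therefore normalisation-independent.
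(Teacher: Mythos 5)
Your proof is correct and follows essentially the same route as the paper: both establish the identity $2\mathbf{q}'' = \mathbf{q} + \mathbf{q}'$ from the linearity of the least squares system (using that the Laplacian of $2M$ is $2L$ and that the score vector is additive) and then read off $RCS$ from the componentwise differences. Your version merely spells out the connectedness/uniqueness justification and the final inequality step that the paper leaves implicit.
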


\begin{proof}
Let $\mathbf{q}(N,A,M) = \mathbf{q}$, $\mathbf{q}(N,A',M) = \mathbf{q}'$ and $\mathbf{q}(N,A+A',M+M) = \mathbf{q}''$.
It is shown that $2 \mathbf{q}'' = \mathbf{q} + \mathbf{q}'$. The Laplacian matrix of the comparison multigraph associated with matches matrix $2M$ is $2L$, so
\[
2L \mathbf{q}'' = \mathbf{s}(N,A+A',M+M) = \mathbf{s}(N,A,M) + \mathbf{s}(N,A',M) = L \left( \mathbf{q} + \mathbf{q}' \right)
\]
as well as $\mathbf{e}^\top \mathbf{q}'' = \mathbf{e}^\top \left[ (1/2) \mathbf{q} + (1/2) \mathbf{q}' \right] = 0$.
\end{proof}

Regarding the generalized row sum, two cases should be distinguished by the parameter choice.

\begin{proposition} \label{Lemma58}
The generalized row sum method with a fixed $\varepsilon$ may violate $RCS$.
\end{proposition}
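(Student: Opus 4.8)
The plan is to exhibit a single ranking problem $(N,A,M)$ together with a parameter $\varepsilon$ outside the region where $RCS$ happens to hold, such that doubling the matches matrix reverses a weak inequality between two objects' generalized row sum ratings. The natural strategy is to track how the generalized row sum vector transforms when $(N,A,M)$ is replaced by $(N,2A,2M)$: the defining system becomes $(I + 2\varepsilon L)\mathbf{x}(\varepsilon)(N,2A,2M) = (1 + 2\varepsilon m n)\,\mathbf{s}(N,2A,2M) = 2(1+2\varepsilon m n)\,\mathbf{s}$, whereas the original system is $(I + \varepsilon L)\mathbf{x}(\varepsilon) = (1+\varepsilon m n)\,\mathbf{s}$. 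Since the matrix on the left changes from $I + \varepsilon L$ to $I + 2\varepsilon L$ while the right-hand side only rescales, there is no reason for the two solutions to be proportional, and in fact for the same numerical $\varepsilon$ the doubled problem behaves like the original problem evaluated at the different parameter $2\varepsilon$ (up to an overall positive scalar that does not affect the ranking). So $RCS$ with a \emph{fixed} $\varepsilon$ would essentially force $\mathbf{x}(\varepsilon)$ and $\mathbf{x}(2\varepsilon)$ to induce the same weak order whenever one of them is an equality — which is exactly the kind of thing Example~\ref{Examp1} and Table~\ref{Table1} already show to fail.

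Concretely, I would first take $A' = A$ and $M' = M$ in Definition~\ref{Def54}, so that $RCS$ in this special case demands: if $x_i(\varepsilon)(N,A,M) = x_j(\varepsilon)(N,A,M)$ then $x_i(\varepsilon)(N,2A,2M) = x_j(\varepsilon)(N,2A,2M)$. Using the reformulation above, $\mathbf{x}(\varepsilon)(N,2A,2M)$ is a positive multiple of $\mathbf{x}(2\varepsilon)(N,A,M)$, so it suffices to find a ranking problem and an $\varepsilon$ with $x_i(\varepsilon) = x_j(\varepsilon)$ but $x_i(2\varepsilon) \neq x_j(2\varepsilon)$. Second, I would verify the identity $\mathbf{x}(\varepsilon)(N,2A,2M) = c(\varepsilon)\,\mathbf{x}(2\varepsilon)(N,A,M)$ for a suitable $c(\varepsilon) > 0$; this is a one-line algebra check comparing the two linear systems, and the positivity of $c(\varepsilon)$ is what lets us pass from ratings to rankings. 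Third, I would point to a concrete instance — the data of Example~\ref{Examp1} is a ready candidate, since there $x_2(\varepsilon)$, $x_3(\varepsilon)$, $x_4(\varepsilon)$ are all distinct for positive $\varepsilon$ but two of the relevant pairs coincide at $\varepsilon = 0$; one wants a problem where equality holds at some \emph{positive} $\varepsilon_0$ and fails at $2\varepsilon_0$, which can be arranged by a small explicit tournament on three or four objects and solving the $3\times 3$ or $4\times 4$ system symbolically in $\varepsilon$.

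The main obstacle is producing the explicit counterexample where the tie occurs at a positive parameter value rather than in the degenerate limit $\varepsilon \to 0$: one needs a ranking problem whose generalized row sum ordering genuinely changes as $\varepsilon$ varies, with a crossing point at some $\varepsilon_0 > 0$, and then checks that $RCS$ at $\varepsilon_0$ would be violated because the doubled problem corresponds to parameter $2\varepsilon_0$, on the other side of the crossing. Assembling such an example amounts to choosing $A$ and $M$ so that $x_i(\varepsilon) - x_j(\varepsilon)$ is a rational function of $\varepsilon$ with a root in $(0,\infty)$; a four-object problem with one circular triad and one extra asymmetric edge (similar in spirit to Example~\ref{Examp5}) should suffice, and the remaining verification is routine substitution into Definition~\ref{Def22}.
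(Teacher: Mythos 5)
Your overall strategy is exactly the paper's: specialise $RCS$ to $A'=A$, $M'=M$, and exhibit a problem where a tie at a fixed $\varepsilon$ is broken after doubling. Your reduction is also correct and worth making explicit: since $\mathbf{s}(N,2A,2M)=2\mathbf{s}$, $L''=2L$ and $m''=2m$, the defining system gives
\[
(I+2\varepsilon L)\,\mathbf{x}(\varepsilon)(N,2A,2M)=2(1+2\varepsilon mn)\,\mathbf{s},
\]
hence $\mathbf{x}(\varepsilon)(N,2A,2M)=2\,\mathbf{x}(2\varepsilon)(N,A,M)$, so it suffices to find a problem with $x_i(\varepsilon_0)=x_j(\varepsilon_0)$ but $x_i(2\varepsilon_0)\neq x_j(2\varepsilon_0)$ for some $\varepsilon_0>0$. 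This is a cleaner conceptual route than the paper's bare numerical computation, and it explains \emph{why} a fixed parameter must fail while a parameter scaled inversely with the number of summands (Proposition~\ref{Prop55}) succeeds.

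The gap is that you never actually produce the counterexample, and the existence argument you gesture at is not yet airtight. Appealing to Example~\ref{Examp1} does not work: there the ties $X_2\sim X_3\sim X_4$ occur only in the limits $\varepsilon\to 0$ and $\varepsilon\to\infty$, and for every positive $\varepsilon$ the ranking is strict, so no positive crossing point is available. What you need is a problem in which the score ranking and the least squares ranking of some pair $X_i,X_j$ genuinely disagree in sign; then $x_i(\varepsilon)-x_j(\varepsilon)$ is a rational function of $\varepsilon$ that is positive near $0$ and negative for large $\varepsilon$, so it has a largest root $\varepsilon_0>0$, and at $2\varepsilon_0>\varepsilon_0$ it is nonzero --- taking the \emph{largest} root is the detail that closes the loophole you leave open (a sign change alone does not preclude $2\varepsilon_0$ also being a root). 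The paper does this with a three-object problem (Example~\ref{Examp2}), where $s_1=2>0.5=s_2$ while least squares ranks $X_2$ above $X_1$, the crossing happens exactly at the reasonable upper bound $\varepsilon_0=1/3$, and $\mathbf{x}(1/3)(N,2A,2M)\approx(4.54,\,3.94,\,-8.48)^\top$ breaks the tie; so your envisioned four-object construction is more elaborate than necessary. With such an explicit instance supplied (or with the ``largest root'' existence argument written out), your proof is complete.
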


\begin{proof}
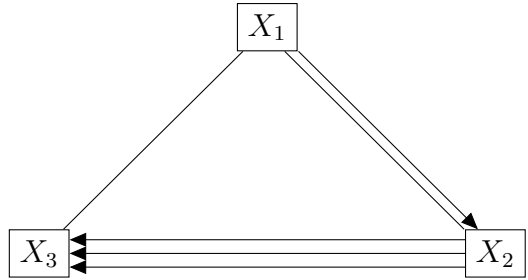
\begin{figure}[htbp]
\centering
\caption{Ranking problem of Example~\ref{Examp2}}
\label{Fig2}
  
\begin{tikzpicture}[scale=1, auto=center, transform shape, >=triangle 45]
\tikzstyle{every node}=[draw,shape=rectangle];
  \node (n1) at (90:3) {$X_1$};
  \node (n2) at (0:3) {$X_2$};
  \node (n3) at (180:3) {$X_3$};

\draw[transform canvas={xshift=-0.5ex}](n1) -- (n2);
\draw[->,transform canvas={xshift=0.5ex}](n1) -- (n2);
\draw(n1) -- (n3);
\draw[->,transform canvas={yshift=1ex}](n2) -- (n3);
\draw[->](n2) -- (n3);
\draw[->,transform canvas={yshift=-1ex}](n2) -- (n3);
\end{tikzpicture}
\end{figure}

\begin{example} \label{Examp2}
Let $(N,A,M) \in \mathcal{R}$ be the ranking problem in Figure \ref{Fig2} with the set of objects $N = \{ X_1,X_2,X_3 \}$ and tournament matrix
\[
T =
\begin{pmatrix}
    0     & 1.5   & 0.5 \\
    0.5   & 0     & 3 \\
    0.5   & 0     & 0 \\
\end{pmatrix}.
\]
\end{example}

Here $m=3$ and $n=3$, therefore the reasonable upper bound of $\varepsilon$ is $1/3$. Let choose it as a fixed parameter:
\[
\mathbf{x}(1/3)(N,A,M) = \left[ 2.0000;\, 2.0000;\, -4.0000 \right]^\top \text{, and }
\]
\[
\mathbf{x}(1/3)(N,2A,2M) = \left[ 4.5352;\, 3.9437;\, -8.4789  \right]^\top,
\]
implying $X_1 \sim_{(N,A,M)}^{\mathbf{x}(1/3)} X_2$ but $X_1 \succ_{(N,2A,2M)}^{\mathbf{x}(1/3)} X_2$.
\end{proof}

Now allow $\varepsilon$ to depend on the matches matrix $M$.

\begin{proposition} \label{Prop55}
The generalized row sum method satisfies $RCS$ if $\varepsilon$ is inversely proportional to the number of added ranking problems.
\end{proposition}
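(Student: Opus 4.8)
The plan is to first make the phrase ``inversely proportional to the number of added ranking problems'' precise: when two problems are combined, the generalized row sum is evaluated with parameter $\varepsilon$ on each summand and with parameter $\varepsilon/2$ on the sum. With this convention I would prove the stronger statement that the rating vector is \emph{exactly additive},
$\mathbf{x}(\varepsilon/2)(N,A+A',2M) = \mathbf{x}(\varepsilon)(N,A,M) + \mathbf{x}(\varepsilon)(N,A',M)$,
from which $RCS$ (including the strict part) is immediate.

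First I would write down the three defining systems via Definition~\ref{Def22}. Let $L$ be the Laplacian of the comparison multigraph of $M$ and $m$ its maximal number of comparisons, and set $\mathbf{s} = A\mathbf{e}$, $\mathbf{s}' = A'\mathbf{e}$; then $(I + \varepsilon L)\mathbf{x}(\varepsilon)(N,A,M) = (1 + \varepsilon m n)\mathbf{s}$ and $(I + \varepsilon L)\mathbf{x}(\varepsilon)(N,A',M) = (1 + \varepsilon m n)\mathbf{s}'$. The key observation is how the ingredients behave when $M$ is doubled: the comparison multigraph of $2M$ has Laplacian $2L$, its maximal number of comparisons is $2m$, and the score is $\mathbf{s}(N,A+A',2M) = (A+A')\mathbf{e} = \mathbf{s} + \mathbf{s}'$. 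Hence the system defining $\mathbf{x}(\varepsilon/2)(N,A+A',2M)$ reads $(I + (\varepsilon/2)\cdot 2L)\mathbf{x}(\varepsilon/2)(N,A+A',2M) = (1 + (\varepsilon/2)(2m)n)(\mathbf{s}+\mathbf{s}')$, that is, $(I + \varepsilon L)\mathbf{x}(\varepsilon/2)(N,A+A',2M) = (1 + \varepsilon m n)(\mathbf{s}+\mathbf{s}')$ — the same coefficient matrix and exactly the sum of the two right-hand sides.

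Next I would invoke linearity: summing the two summand systems gives $(I + \varepsilon L)\left( \mathbf{x}(\varepsilon)(N,A,M) + \mathbf{x}(\varepsilon)(N,A',M) \right) = (1 + \varepsilon m n)(\mathbf{s}+\mathbf{s}')$. Since $L$ is positive semidefinite, $I + \varepsilon L$ is positive definite, hence invertible, so the solution is unique; comparing the two displays yields the asserted additivity of the rating vectors. Then $RCS$ follows coordinatewise: if $x_i(\varepsilon)(N,A,M) \geq x_j(\varepsilon)(N,A,M)$ and $x_i(\varepsilon)(N,A',M) \geq x_j(\varepsilon)(N,A',M)$, then the $i$-th coordinate of the sum is at least the $j$-th, with strict inequality as soon as one of the two hypotheses is strict.

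There is essentially no analytic difficulty here; the only point requiring care is the bookkeeping of the scaling — that doubling $M$ simultaneously doubles both $L$ and $m$ while the chosen parameter is halved, so that all three factors conspire to leave the system defining the sum identical to the sum of the two individual systems. I would also flag explicitly that $I + \varepsilon L$ is invertible (so ``the'' generalized row sum vector is well defined) and observe that the argument uses no upper bound on $\varepsilon$, so the result holds for every $\varepsilon > 0$. Finally, it is worth remarking that this compensation is exactly what breaks down for a fixed parameter, which is why Example~\ref{Examp2} can exhibit a violation of $RCS$.
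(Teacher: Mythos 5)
Your proposal is correct and follows essentially the same route as the paper's proof: both exploit that doubling $M$ doubles the Laplacian $L$ and the maximal number of comparisons $m$, so that with parameter $\varepsilon/2$ the defining system for the sum becomes $(I+\varepsilon L)\mathbf{x} = (1+\varepsilon m n)(\mathbf{s}+\mathbf{s}')$, whence $\mathbf{x}(\varepsilon/2)(N,A+A',2M) = \mathbf{x}(\varepsilon)(N,A,M) + \mathbf{x}(\varepsilon)(N,A',M)$ by linearity and uniqueness. Your added remarks on invertibility of $I+\varepsilon L$ and the coordinatewise derivation of the strict part of $RCS$ only make explicit what the paper leaves implicit.
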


\begin{proof}
Let $\mathbf{x}(\varepsilon)(N,A,M) = \mathbf{x}(\varepsilon)$, $\mathbf{x}(\varepsilon)(N,A',M) = \mathbf{x}(\varepsilon)'$ and $\mathbf{x}(\varepsilon)(N,A+A',M+M) = \mathbf{x}(\varepsilon)''$.
It yields from some basic calculations:
\begin{eqnarray*}
\mathbf{x}(\varepsilon / 2)'' & = & (1 + \varepsilon m n) (I + \varepsilon L)^{-1} \mathbf{s}(N,A+A',M+M) = \\
& = & (1 + \varepsilon m n) (I + \varepsilon L)^{-1} \left[ \mathbf{s}(N,A,M) + \mathbf{s}(N,A',M) \right] = \mathbf{x}(\varepsilon) + \mathbf{x}(\varepsilon)'.
\end{eqnarray*}
\end{proof}

Proposition~\ref{Prop55} suggests that generalized row sum should be applied with a parameter somewhat inversely proportional to the number of comparisons.

\begin{remark} \label{Rem5}
Generalized row sum with $\varepsilon$ at the reasonable upper bound of $1 / \left[ m(n-2) \right]$ satisfies $RCS$.
\end{remark}

\begin{lemma} \label{Lemma59}
Fair bets and dual fair bets methods violate $RCS$.
\end{lemma}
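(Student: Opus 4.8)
The plan is to obtain the claim as a corollary of Proposition~\ref{Prop53} rather than by exhibiting a new example. Proposition~\ref{Prop53} says that $RCS$ together with $SYM$ implies $INV$; contrapositively, a scoring procedure that satisfies $SYM$ but violates $INV$ cannot be result consistent. Both fair bets and dual fair bets satisfy $SYM$ by Lemma~\ref{Lemma32}, and both violate $INV$ by Lemma~\ref{Lemma34}, already on the round-robin domain $\mathcal{R}^R$. Hence neither can satisfy $RCS$.

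Spelled out, the steps are: (i) record, via Lemma~\ref{Lemma32}, that fair bets satisfies $SYM$; (ii) suppose toward a contradiction that fair bets satisfies $RCS$ and apply Proposition~\ref{Prop53} to conclude that fair bets satisfies $INV$, contradicting Lemma~\ref{Lemma34}; (iii) repeat the two lines for dual fair bets, the only difference being that the $INV$-counterexample of Lemma~\ref{Lemma34} is read with the transposed tournament matrix. If one instead prefers to avoid the reference to Proposition~\ref{Prop53}, one can unfold its proof directly: starting from the problem $(N,A,M)$ witnessing the failure of $INV$, $RCS$ applied to $(N,A,M)$ and $(N,-A,M)$ forces a strict inequality between two objects in $(N,O,2M)$, which contradicts $SYM$.

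The one delicate point, and the main obstacle, is staying within the domain on which fair bets is defined. By Proposition~\ref{Prop23} fair bets is only guaranteed to be well defined on irreducible ranking problems, so both $RCS$ and $INV$ must be read as restricted to $\mathcal{R}^I$, and the instance of $RCS$ invoked above is applied to $(N,A,M)$, $(N,-A,M)$ and $(N,O,2M)$. I would therefore check that the $INV$-counterexample of Lemma~\ref{Lemma34} can be chosen so that $(N,A,M)$ and $(N,-A,M)$ are both irreducible, and note that $(N,O,2M)$ is then irreducible as well: with $A = O$ the tournament matrix $T = M/2$ is symmetric and $t_{ij} > 0 \Leftrightarrow m_{ij} > 0$, so the connectedness of the comparison multigraph inherited from $(N,A,M) \in \mathcal{R}^I$ yields directed paths in both directions. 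Once these checks are in place the contrapositive of Proposition~\ref{Prop53} applies directly, and the same observations cover dual fair bets.
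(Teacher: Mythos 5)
Your proposal is correct and follows exactly the paper's own argument: the paper derives Lemma~\ref{Lemma59} as a consequence of Lemmata~\ref{Lemma32} and \ref{Lemma34} together with Proposition~\ref{Prop53}, i.e.\ the methods satisfy $SYM$ but violate $INV$, so they cannot be result consistent. Your additional care about irreducibility of the ranking problems involved is a sensible refinement but does not change the argument.
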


\begin{proof}
It is a consequence of Lemmata~\ref{Lemma32} and \ref{Lemma34} together with Proposition~\ref{Prop53}: since they meet $SYM$ but violate $INV$, they cannot satisfy $RCS$.
\end{proof}

\begin{proposition} \label{Prop56}
Copeland fair bets method violates $RCS$.
\end{proposition}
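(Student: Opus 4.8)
The shortcut used for fair bets and dual fair bets in Lemma~\ref{Lemma59} is \emph{not} available here: by Lemmata~\ref{Lemma32} and~\ref{Lemma35}, Copeland fair bets satisfies both $SYM$ and $INV$, so Proposition~\ref{Prop53} yields no contradiction. Hence a direct counterexample is required, and the plan is to reuse the ranking problems of Example~\ref{Examp6}.

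The observation to make explicit is that the two comparison multigraphs in Example~\ref{Examp6} coincide: in both $(N,T)$ and $(N,T')$ every pair of objects is compared exactly once, so both comparison multigraphs are the complete graph on four vertices. Therefore, written in the $(N,A,M)$ notation, Example~\ref{Examp6} consists of two ranking problems $(N,A,M)$ and $(N,A',M)$ sharing the \emph{same} matches matrix $M$, and their sum $(N,T'')=(N,T+T')$ equals $(N,A+A',M+M)=(N,A+A',2M)$ -- precisely the combination appearing in the definition of $RCS$. Next I would check that $\mathbf{fb}$, $\mathbf{dfb}$, and hence $\mathbf{Cfb}$ are well defined on all three problems, i.e.\ that $(N,T)$, $(N,T')$ and $(N,T'')$ are irreducible (Proposition~\ref{Prop23}); this is immediate for the two summands and trivial for $(N,T'')$, whose off-diagonal tournament entries are all strictly positive. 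It then remains only to read off Table~\ref{Table4}: $X_1 \sim^{\mathbf{Cfb}}_{(N,T)} X_4$ and $X_1 \sim^{\mathbf{Cfb}}_{(N,T')} X_4$, whereas $Cfb_1(N,T'') = 31/256 < 33/256 = Cfb_4(N,T'')$, so $X_1 \prec^{\mathbf{Cfb}}_{(N,T'')} X_4$. Thus $Cfb_1 \geq Cfb_4$ holds in both summands but fails in the sum, contradicting $RCS$.

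There is no serious computational obstacle, since Example~\ref{Examp6} already supplies all the numbers needed; the only points requiring care are the conceptual one noted above -- that the $SYM$-plus-$INV$ argument cannot be recycled -- and the verification that the two matches matrices genuinely coincide. Were one instead to search for a fresh witness, the difficulty would be that $\mathbf{Cfb}$ is homogeneous of degree zero in the tournament matrix, so the choice $A'=A$ collapses to $(N,2A,2M)$ with the same ranking as $(N,A,M)$; one must therefore exhibit a genuinely asymmetric pair of result matrices on a common comparison structure for which the two Perron-type computations behind $\mathbf{fb}$ and $\mathbf{dfb}$ conspire to reverse the order of two equally ranked objects.
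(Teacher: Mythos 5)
Your proof is correct. The key observation -- that the two problems of Example~\ref{Examp6} share the same round-robin matches matrix, so their sum is exactly the configuration $(N,A+A',2M)$ governed by $RCS$, and Table~\ref{Table4} then exhibits $Cfb_1 = Cfb_4$ in both summands but $Cfb_1(N,T'') = 31/256 < 33/256 = Cfb_4(N,T'')$ -- does establish the violation, and your opening remark about why the $SYM$-plus-$INV$ shortcut of Lemma~\ref{Lemma59} is unavailable for Copeland fair bets is exactly the reason the paper needs a separate proposition here. The paper, however, does not recycle Example~\ref{Examp6}; it builds a fresh three-object witness (Example~\ref{Examp7}, with $M=M'$) in which $X_1 \prec^{\mathbf{Cfb}} X_2$ holds \emph{strictly} in both summands yet $X_1 \succ^{\mathbf{Cfb}} X_2$ in the sum. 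The trade-off is this: your route costs no new computation and, since Example~\ref{Examp6} lives in $\mathcal{R}^R$, it would also deliver the round-robin violation of Proposition~\ref{Prop57} in one stroke; but it only breaks the weak-inequality clause of $RCS$ by turning a tie into a strict preference, which is literally the same failure already recorded against $EP$ and is vulnerable to the paper's own caveat that ``equal rating of two objects may occur accidentally.'' The paper's example buys a genuine rank reversal on the minimal domain $n=3$, a strictly stronger and more persuasive form of the violation that would survive even a weakening of $RCS$ restricted to strict preferences. Both arguments are valid proofs of the proposition as stated.
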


\begin{proof}

\begin{figure}[htbp]
\centering
\caption{Ranking problems of Example~\ref{Examp7}}
\label{Fig7}
  
\begin{subfigure}{.5\textwidth}
  \centering
  \caption{Ranking problem $(N,T)$}
  \label{Fig7a}
\begin{tikzpicture}[scale=1, auto=center, transform shape, >=triangle 45]
\tikzstyle{every node}=[draw,shape=rectangle];
  \node (n1) at (90:3) {$X_1$};
  \node (n2) at (0:3) {$X_2$};
  \node (n3) at (180:3) {$X_3$};

\draw[->,transform canvas={xshift=1ex}](n1) -- (n2);
\draw[->](n1) -- (n2);
\draw[->,transform canvas={xshift=-1ex}](n1) -- (n2);
\draw[->](n2) -- (n3);
\draw[->,transform canvas={xshift=1.5ex}](n3) -- (n1);
\draw[->,transform canvas={xshift=0.5ex}](n3) -- (n1);
\draw[->,transform canvas={xshift=-0.5ex}](n3) -- (n1);
\draw[->,transform canvas={xshift=-1.5ex}](n3) -- (n1);
\end{tikzpicture}
\end{subfigure}%\vspace{0.5cm}
\begin{subfigure}{.5\textwidth}
  \centering
  \caption{Ranking problem $(N,T')$}
  \label{Fig7b}
\begin{tikzpicture}[scale=1, auto=center, transform shape, >=triangle 45]
\tikzstyle{every node}=[draw,shape=rectangle];
  \node (n1) at (90:3) {$X_1$};
  \node (n2) at (0:3) {$X_2$};
  \node (n3) at (180:3) {$X_3$};

\draw[transform canvas={xshift=1ex}](n1) -- (n2);
\draw[->](n2) -- (n1);
\draw[transform canvas={xshift=-1ex}](n1) -- (n2);
\draw[->](n3) -- (n2);
\draw[transform canvas={xshift=1.5ex}](n3) -- (n1);
\draw[transform canvas={xshift=0.5ex}](n3) -- (n1);
\draw[transform canvas={xshift=-0.5ex}](n3) -- (n1);
\draw[transform canvas={xshift=-1.5ex}](n3) -- (n1);
\end{tikzpicture}
\end{subfigure}
\end{figure}
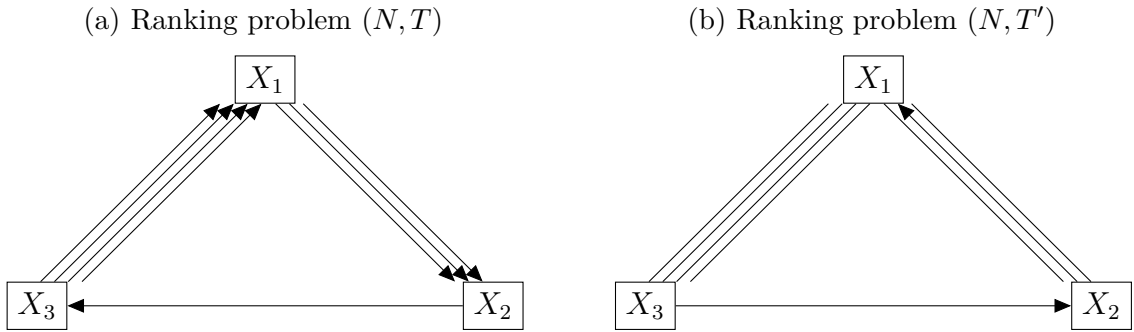

\begin{example} \label{Examp7}
Let $(N,T),(N,T') \in \mathcal{R}$ be the ranking problems in Figure~\ref{Fig7} with the set of objects $N = \{ X_1,X_2,X_3 \}$, tournament and matches matrices
\[
T =
\begin{pmatrix}
    0     & 3     & 0 \\
    0     & 0     & 1 \\
    4     & 0     & 0 \\
\end{pmatrix} ,\qquad
T' =
\begin{pmatrix}
    0     & 1     & 2 \\
    2     & 0     & 0 \\
    2     & 1     & 0 \\
\end{pmatrix} \qquad \text{and} \qquad
%with the same matches matrix
M = M' =
\begin{pmatrix}
    0     & 3     & 4 \\
    3     & 0     & 1 \\
    4     & 1     & 0 \\
\end{pmatrix}.
\]
Let $(N,T'') = (N,T+T') \in \mathcal{R}$ be the sum of these two ranking problems.
\end{example}

\begin{table}[htbp]
\centering
\caption{Fair bets and associated rating vectors of Example~\ref{Examp7}}
\label{Table5}
\begin{footnotesize}
%\noindent\makebox[\textwidth]{
    \begin{tabularx}{\textwidth}{C rrr rrr rrr} \toprule
    & $\mathbf{fb}(T)$     & $\mathbf{dfb}(T)$  & $\mathbf{Cfb}(T)$   & $\mathbf{fb}(T')$   & $\mathbf{dfb}(T')$   & $\mathbf{Cfb}(T')$ & $\mathbf{fb}(T'')$   & $\mathbf{dfb}(T'')$   & $\mathbf{Cfb}(T'')$ \\
    \midrule
    $X_1$  & $3/19$  & $-1/3$   & $-10/57$ & $2/7$   & $-6/15$  & $-12/105$ & $7/29$  & $-2/6$  & $-16/174$ \\
    $X_2$  & $4/19$  & $-1/3$   & $-7/57$  & $2/7$   & $-5/15$  & $-5/105$  & $6/29$  & $-3/6$  & $-51/174$ \\
    $X_3$  & $12/19$ & $-1/3$   & $17/57$  & $3/7$   & $-4/15$  	& $17/105$ 	& $16/29$ & $-1/6$  & $67/174$ \\
    \bottomrule
    \end{tabularx} %}
\end{footnotesize}
\end{table}

The rating vectors are given in Table~\ref{Table5}: $X_1 \prec_{(N,T)}^{\mathbf{Cfb}} X_2$ and $X_1 \prec_{(N,T')}^{\mathbf{Cfb}} X_2$, but $X_1 \succ_{(N,T'')}^{\mathbf{Cfb}} X_2$.
\end{proof}

Strengthening of flatness preservation in order to get $EP$ seems to be futile. It is not surprising since equal rating of two objects may occur accidentally.
On the other side, restricting consistency by filtering out the comparison structure proves to be fruitful, at least in the case of least squares and generalized row sum with a proper parameter choice. But it is not enough to achieve positive results even for Copeland fair bets, which violates result consistency still in the most simple instance of three objects.

\subsection{The round-robin case}

Another weakening of consistency is offered by restricting its domain to a properly chosen subset of ranking problems.
Now the special case of round-robin ranking problems is analysed, when all pairs of objects have the same number of comparisons, therefore a significant difficulty of paired-comparison based ranking is eliminated. Note that the set $\mathcal{R}^R$ is closed under summation.

\begin{lemma} \label{Lemma510}
The generalized row sum and least squares methods satisfy $CS$ (therefore $EP$ and $RCS$) on the set $\mathcal{R}^R$.
\end{lemma}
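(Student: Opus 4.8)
The plan is to show that on the round-robin domain $\mathcal{R}^R$ both methods collapse to a strictly positive scalar multiple of the score vector, so that consistency is simply inherited from the score (Lemma~\ref{Lemma51}), using the fact — already noted in the text — that $\mathcal{R}^R$ is closed under summation. First I would record two elementary facts. For a round-robin problem $(N,A,M)$ with common match count $m$, the comparison multigraph is the complete graph on $n$ vertices with every edge taken $m$ times, hence its Laplacian is $L = m\left( nI - \mathbf{e}\mathbf{e}^\top \right)$. Second, the score vector $\mathbf{s} = A\mathbf{e}$ always satisfies $\mathbf{e}^\top\mathbf{s} = \mathbf{e}^\top A\mathbf{e} = 0$ because $A$ is skew-symmetric, so $\mathbf{e}\mathbf{e}^\top\mathbf{s} = \mathbf{0}$.

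For generalized row sum, left-multiplying the defining equation $(I + \varepsilon L)\mathbf{x}(\varepsilon) = (1 + \varepsilon m n)\mathbf{s}$ by $\mathbf{e}^\top$ and using that $\mathbf{e}^\top L$ vanishes (zero column sums of any Laplacian) together with $\mathbf{e}^\top\mathbf{s} = 0$ gives $\mathbf{e}^\top\mathbf{x}(\varepsilon) = 0$, hence $\mathbf{e}\mathbf{e}^\top\mathbf{x}(\varepsilon) = \mathbf{0}$. Substituting $L = m(nI - \mathbf{e}\mathbf{e}^\top)$ into the defining equation and discarding the now-vanishing term $\varepsilon m\,\mathbf{e}\mathbf{e}^\top\mathbf{x}(\varepsilon)$ leaves $(1 + \varepsilon m n)\mathbf{x}(\varepsilon) = (1 + \varepsilon m n)\mathbf{s}$, i.e. $\mathbf{x}(\varepsilon)(N,A,M) = \mathbf{s}(N,A,M)$ for every fixed $\varepsilon > 0$. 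For least squares the analogous manipulation (or Lemma~\ref{Lemma22}, since then $\mathbf{x}(\varepsilon) \equiv \mathbf{s}$ forces $\mathbf{s} = mn\mathbf{q}$) gives $L\mathbf{q} = mn\mathbf{q} = \mathbf{s}$ using $\mathbf{e}^\top\mathbf{q} = 0$, hence $\mathbf{q}(N,A,M) = \mathbf{s}(N,A,M)/(mn)$.

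It then remains to observe that, since $\mathcal{R}^R$ is closed under addition and the value of either method on a round-robin problem is a strictly positive scalar multiple of its score vector (the scalar being $1$ for generalized row sum and $1/(mn)$ for least squares, in both cases depending only on $m$ and $n$), the weak order each method induces on any round-robin problem coincides exactly with the one induced by the score. Applying this to $(N,A,M)$, $(N,A',M')$ and their round-robin sum $(N,A+A',M+M')$, consistency transfers verbatim from Lemma~\ref{Lemma51}; the implications $EP$ and $RCS$ then follow from Corollaries~\ref{Col3} and~\ref{Col4}. The only point requiring genuine care is the identity $\mathbf{e}^\top\mathbf{x}(\varepsilon) = 0$ on the round-robin domain, which is precisely what makes the opponents' correction term in generalized row sum disappear; once that is in place the rest is routine linear algebra.
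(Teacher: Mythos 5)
Your proof is correct and follows the same route as the paper: reduce both methods to the score on $\mathcal{R}^R$ (using that this domain is closed under summation) and then invoke Lemma~\ref{Lemma51}. The only difference is that the paper simply cites the coincidence with the score (the \emph{agreement} property of \citet{Chebotarev1994} and \emph{score consistency} of \citet{Gonzalez-DiazHendrickxLohmann2013}), whereas you derive it directly from $L = m\left(nI - \mathbf{e}\mathbf{e}^\top\right)$ and $\mathbf{e}^\top\mathbf{s} = 0$, which is a valid and self-contained verification of the same key fact.
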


\begin{proof}
Due to axioms \emph{agreement} \citep[Property~3]{Chebotarev1994} and \emph{score consistency} \citep{Gonzalez-DiazHendrickxLohmann2013}, both the generalized row sum and least squares methods coincide with the score on this set of problems, so Lemma~\ref{Lemma51} holds.
\end{proof}

Lemma~\ref{Lemma510} shows that lack of additivity in Example~\ref{Examp5} is due to the different structure of the comparison multigraphs.

\begin{lemma} \label{Lemma512}
Fair bets, dual fair bets and Copeland fair bets methods violate $EP$ even on the set $\mathcal{R}^R$.
\end{lemma}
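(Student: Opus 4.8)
The plan is to recycle the counterexample already built in Example~\ref{Examp6} for Proposition~\ref{Prop52} and to observe that it in fact lives inside $\mathcal{R}^R$, so that nothing new has to be constructed. First I would compute the two matches matrices $M = T + T^\top$ and $M' = T' + T'^\top$ for the tournament matrices displayed in that example; a direct inspection shows that every off-diagonal entry of each equals $1$, so $(N,T)$ and $(N,T')$ are round-robin with $m = m' = 1$. Since the set $\mathcal{R}^R$ is closed under summation, as noted just before Lemma~\ref{Lemma510}, the sum $(N,T'') = (N,T+T')$ is again round-robin, with $m'' = 2$; one checks $M'' = T'' + T''^\top$ has all off-diagonal entries equal to $2$.

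Next I would make sure the ratings quoted in Table~\ref{Table4} are legitimate, i.e.\ that the fair bets (and hence dual and Copeland fair bets) vectors of all three problems are unique. The digraphs in Figure~\ref{Fig6} show that $(N,T)$ and $(N,T')$ are irreducible: from every object there is a directed path to every other. Irreducibility of $(N,T'')$ is then immediate because $t''_{ij} = t_{ij} + t'_{ij} \ge t_{ij} \ge 0$ entrywise, so every directed path of the irreducible problem $(N,T)$ also exists in $(N,T'')$; Proposition~\ref{Prop23} then guarantees uniqueness of $\mathbf{fb}$, and with it of $\mathbf{dfb}$ and $\mathbf{Cfb}$. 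With these checks done, the equalities $X_1 \sim_{(N,T)} X_4$ and $X_1 \sim_{(N,T')} X_4$ (for all three methods) together with $X_1 \succ_{(N,T'')}^{\mathbf{fb}} X_4$, $X_1 \prec_{(N,T'')}^{\mathbf{dfb}} X_4$ and $X_1 \prec_{(N,T'')}^{\mathbf{Cfb}} X_4$, read off from Table~\ref{Table4}, give exactly the violation of $EP$ required, and now on the restricted domain $\mathcal{R}^R$.

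The only real obstacle is bookkeeping rather than conceptual: one must verify that Example~\ref{Examp6} was genuinely assembled from round-robin problems (it was, all $m_{ij} = m'_{ij} = 1$) and that the rational entries of Table~\ref{Table4} solve the defining eigen-equations of Definitions~\ref{Def25}--\ref{Def27}. Had the example not already been round-robin, the proof would instead require searching for a fresh round-robin pair, which for fair bets is mildly delicate because all three ranking problems must be kept irreducible while still forcing a tie between two objects in both summands that is destroyed in the sum; but since the existing example serves, no such search is needed.
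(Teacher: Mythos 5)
Your proposal is correct and coincides with the paper's own (one-line) proof, which simply observes that both ranking problems of Example~\ref{Examp6} are round-robin, so the counterexample for Proposition~\ref{Prop52} already lives in $\mathcal{R}^R$. The additional checks you perform (closure of $\mathcal{R}^R$ under summation, irreducibility and hence uniqueness of the fair bets vectors) are sound but not part of the paper's argument.
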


\begin{proof}
Both $(N,T)$ and $(N,T')$ are round-robin ranking problems in Example~\ref{Examp6}.
\end{proof}

\begin{lemma} \label{Lemma514}
Fair bets and dual fair bets methods violate $RCS$ on the set $\mathcal{R}^R$.
\end{lemma}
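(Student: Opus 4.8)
The plan is to reuse the argument of Lemma~\ref{Lemma59}, checking that every step survives the restriction to $\mathcal{R}^R$. Recall the skeleton there: fair bets and dual fair bets satisfy $SYM$ (Lemma~\ref{Lemma32}) but violate $INV$ even on $\mathcal{R}^R$ (Lemma~\ref{Lemma34}), so by the contrapositive of Proposition~\ref{Prop53} they cannot satisfy $RCS$. To run this on the narrower domain, first observe that $\mathcal{R}^R$ is closed under the operations appearing in the proof of Proposition~\ref{Prop53}: if $(N,A,M) \in \mathcal{R}^R$, then $(N,-A,M) \in \mathcal{R}^R$ and $(N,O,2M) \in \mathcal{R}^R$, since taking the opposite of the results or doubling the matches matrix leaves the common value $m_{ij}$ (resp.\ $2 m_{ij}$) intact for all pairs of distinct objects. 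Hence the derivation ``$RCS$ and $SYM$ $\Rightarrow$ $INV$'' can be read verbatim with every ranking problem lying in $\mathcal{R}^R$.

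Second, for fair bets one must be slightly more careful, because the method is only well-defined on irreducible problems. The counterexample of Lemma~\ref{Lemma34} (from \citet[Example~4.4]{Gonzalez-DiazHendrickxLohmann2013}) is an irreducible round-robin problem, and the flat problem $(N,O,2M)$ built in the proof of Proposition~\ref{Prop53} has $T = M/2$ with all off-diagonal entries positive whenever the common number of comparisons is at least one, hence is irreducible as well. So fair bets (and, by transposition, dual fair bets) is defined at every problem the argument touches, and $SYM$ applies to $(N,O,2M)$ via Lemma~\ref{Lemma32}. Combining the two remarks: restricting Proposition~\ref{Prop53} to $\mathcal{R}^R$ shows that on $\mathcal{R}^R$ the conjunction of $RCS$ and $SYM$ still forces $INV$; since fair bets and dual fair bets meet $SYM$ on $\mathcal{R}^R$ but fail $INV$ there, they must fail $RCS$ on $\mathcal{R}^R$.

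The only genuine obstacle is the well-definedness bookkeeping for fair bets: one has to confirm that both the specific round-robin witness to the failure of $INV$ and the auxiliary flat problem $(N,O,2M)$ are irreducible, so that the rating vectors invoked in Proposition~\ref{Prop53} actually exist and are unique. Once this is settled, the statement is immediate.
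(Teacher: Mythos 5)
Your proposal is correct and follows exactly the paper's own argument: the proof of Lemma~\ref{Lemma514} simply notes that the reasoning of Lemma~\ref{Lemma59} (via Proposition~\ref{Prop53}) still applies because the $INV$ counterexample of Lemma~\ref{Lemma34} already lives in $\mathcal{R}^R$. Your additional checks that $\mathcal{R}^R$ is closed under $A \mapsto -A$ and $M \mapsto 2M$ and that the relevant problems are irreducible are sound (minor quibble: the flat problem $(N,O,2M)$ has $T = M$, not $M/2$) and only make explicit what the paper leaves implicit.
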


\begin{proof}
The argument of Lemma~\ref{Lemma59} is valid because they violate $INV$ on the set $\mathcal{R}^R$ according to Lemma~\ref{Lemma34}.
\end{proof}

\begin{proposition} \label{Prop57}
Copeland fair bets methods violate $RCS$ on the set $\mathcal{R}^R$.
\end{proposition}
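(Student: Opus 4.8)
The plan is to exhibit a counterexample lying entirely in $\mathcal{R}^R$. The matches matrix of Example~\ref{Examp7} is not constant off the diagonal, so that instance cannot be recycled; instead I would search for two round-robin ranking problems $(N,T),(N,T') \in \mathcal{R}^R$ that share a common matches matrix $M = M'$ with $m_{ij} = m$ for all $X_i \neq X_j$, both irreducible (so that fair bets, and hence dual fair bets and Copeland fair bets, are uniquely defined by Proposition~\ref{Prop23}), chosen so that some pair of objects, say $X_1$ and $X_2$, is ranked the same weak way by $\mathbf{Cfb}$ in both summands while the sum $(N,T+T',2M)$ strictly reverses that order. Keeping $n$ as small as possible, $n = 3$ already seems to have enough freedom: a round-robin tournament on three objects is described by the three numbers $t_{12},t_{13},t_{23}$, and the passage from these to $\mathbf{Cfb}$ is nonlinear, so a well-chosen pair of result patterns should break $RCS$; if three objects turn out to be too rigid, four certainly suffice.

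The concrete steps are: (i) fix $N = \{X_1,X_2,X_3\}$ and a constant off-diagonal matches matrix $M$ (for instance with $m = 4$); (ii) pick result assignments $T$ and $T'$ with that matches structure by a short search, aiming for a configuration in which $X_1$ benefits from the \emph{interaction} of the two result patterns in the combined problem in a way it does not benefit in either problem alone; (iii) for each of $(N,T)$, $(N,T')$ and $(N,T'') = (N,T+T')$ solve the defining fair bets eigen-system $T\mathbf{fb} = \mathrm{diag}(T^\top \mathbf{e})\,\mathbf{fb}$ together with $\mathbf{e}^\top\mathbf{fb} = 1$, and the analogous system defining $\mathbf{dfb}$ — both are $3\times 3$ linear problems with closed-form solutions — and set $\mathbf{Cfb} = \mathbf{fb} + \mathbf{dfb}$; (iv) display the nine resulting rating vectors in a table like Table~\ref{Table5} and read off that $X_1 \prec^{\mathbf{Cfb}} X_2$ in both $(N,T)$ and $(N,T')$ but $X_1 \succ^{\mathbf{Cfb}} X_2$ in $(N,T'')$, which contradicts $RCS$, the matches matrix of the sum being $2M$ and the whole construction being round-robin.

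The main obstacle is step (ii): since $\mathbf{fb}$ and $\mathbf{dfb}$ depend on $T$ through an eigenvector rather than linearly, one must locate an instance where this nonlinearity actually forces a reversal while simultaneously (a) the common matches matrix stays constant off the diagonal and (b) both $(N,T)$ and $(N,T')$ remain irreducible. Once such a pair is found, the remainder is routine $3 \times 3$ linear algebra, exactly as in the proof of Proposition~\ref{Prop56}. This strengthens the message of that proposition: Copeland fair bets fails even the weakest structural form of additivity on the most benign domain considered in the paper, confirming the strange behaviour of the fair-bets family noted above.
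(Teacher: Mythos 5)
Your strategy coincides with the paper's: since $RCS$ is a universally quantified property, its violation on $\mathcal{R}^R$ must be established by exhibiting a concrete pair of round-robin problems with a common matches matrix whose Copeland fair bets rankings agree on some pair of objects in both summands but are reversed in the sum. You also correctly note that Example~\ref{Examp7} cannot be recycled because its matches matrix is not constant off the diagonal. The genuine gap is that you never actually produce the counterexample: steps (ii)--(iv) describe a search you have not carried out, and you yourself identify step (ii) as ``the main obstacle.'' For a proposition whose entire proof content is one explicit instance, ``a well-chosen pair of result patterns should break $RCS$'' is a conjecture, not an argument. Nothing in your proposal rules out the possibility that the three-object round-robin case is too rigid; note in particular that the cheap route used for fair bets and dual fair bets in Lemma~\ref{Lemma514} (deduce the failure of $RCS$ from the failure of $INV$ via Proposition~\ref{Prop53}) is closed here, because Copeland fair bets \emph{does} satisfy $INV$ by Lemma~\ref{Lemma35}, so the violation has to come from a genuinely nonlinear interaction that must be displayed, not presumed. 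The assertion that ``four certainly suffice'' is likewise unsupported until an instance is verified.

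For comparison, the paper completes exactly this plan with $n=4$ and $m=1$ (Example~\ref{Examp8}): two single round-robin tournaments on $\{X_1,X_2,X_3,X_4\}$ sharing the all-ones off-diagonal matches matrix, with $X_1 \prec^{\mathbf{Cfb}} X_3$ in both $(N,T)$ and $(N,T')$ but $X_1 \succ^{\mathbf{Cfb}} X_3$ in $(N,T+T')$, the nine rating vectors being computed explicitly in Table~\ref{Table6}. Until you supply and verify such a pair $(T,T')$ --- or prove abstractly that one exists --- the proposition is not established.
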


\begin{proof}

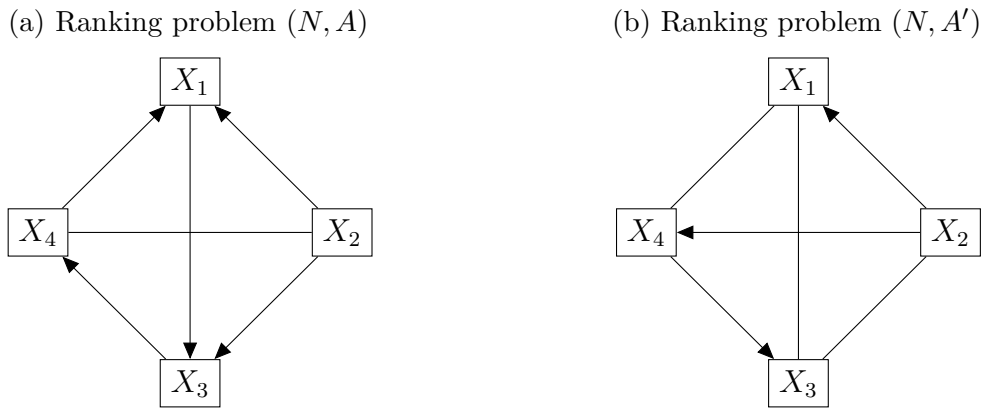
\begin{figure}[htbp]
\centering
\caption{Ranking problems of Example~\ref{Examp8}}
\label{Fig8}
  
\begin{subfigure}{.5\textwidth}
  \centering
  \caption{Ranking problem $(N,A)$}
  \label{Fig8a}
\begin{tikzpicture}[scale=1, auto=center, transform shape, >=triangle 45]
\tikzstyle{every node}=[draw,shape=rectangle];
  \node (n1) at (90:2) {$X_1$};
  \node (n2) at (0:2)  {$X_2$};
  \node (n3) at (270:2) {$X_3$};
  \node (n4) at (180:2)  {$X_4$};

\draw[->](n2) -- (n1);
\draw[->](n1) -- (n3);
\draw[->](n4) -- (n1);
\draw[->](n2) -- (n3);
\draw(n2) -- (n4);
\draw[->](n3) -- (n4);
\end{tikzpicture}
\end{subfigure}%\vspace{0.5cm}
\begin{subfigure}{.5\textwidth}
  \centering
  \caption{Ranking problem $(N,A')$}
  \label{Fig8b}
\begin{tikzpicture}[scale=1, auto=center, transform shape, >=triangle 45]
\tikzstyle{every node}=[draw,shape=rectangle];
  \node (n1) at (90:2) {$X_1$};
  \node (n2) at (0:2)  {$X_2$};
  \node (n3) at (270:2) {$X_3$};
  \node (n4) at (180:2)  {$X_4$};

\draw[->](n2) -- (n1);
\draw(n1) -- (n3);
\draw(n4) -- (n1);
\draw(n2) -- (n3);
\draw[->](n2) -- (n4);
\draw[->](n4) -- (n3);
\end{tikzpicture}
\end{subfigure}
\end{figure}

\begin{example} \label{Examp8}
Let $(N,T),(N,T') \in \mathcal{R}$ be the ranking problems in Figure~\ref{Fig8} with the set of objects $N = \{ X_1,X_2,X_3,X_4 \}$, tournament and matches matrices
\[
T =
\begin{pmatrix}
    0     & 0     & 1     & 0 \\
    1     & 0     & 1     & 0.5 \\
    0     & 0     & 0     & 1 \\
    1     & 0.5   & 0     & 0 \\
\end{pmatrix}, \quad
T' =
\begin{pmatrix}
    0     & 0     & 0.5   & 0.5 \\
    1     & 0     & 0.5   & 1 \\
    0.5   & 0.5   & 0     & 0 \\
    0.5   & 0     & 1     & 0 \\
\end{pmatrix} \quad \text{and} \quad
%with the same matches matrix
M = M' =
\begin{pmatrix}
    0     & 1     & 1     & 1 \\
    1     & 0     & 1     & 1 \\
    1     & 1     & 0     & 1 \\
    1     & 1     & 1     & 0 \\
\end{pmatrix}.
\]
Let $(N,T'') = (N,T+T') \in \mathcal{R}^R$ be the sum of these two ranking problems.
\end{example}

\begin{table}[htbp]
\centering
\caption{Fair bets and associated rating vectors of Example \ref{Examp8}}
\label{Table6}
\begin{footnotesize}
\noindent\makebox[\textwidth]{
    \begin{tabularx}{1\textwidth}{C rrr rrr rrr} \toprule
    & $\mathbf{fb}(T)$     & $\mathbf{dfb}(T)$  & $\mathbf{Cfb}(T)$   & $\mathbf{fb}(T')$   & $\mathbf{dfb}(T')$   & $\mathbf{Cfb}(T')$ & $\mathbf{fb}(T'')$   & $\mathbf{dfb}(T'')$   & $\mathbf{Cfb}(T'')$ \\
    \midrule
    $X_1$  & $1/17$  & $-6/19$  & $-83/323$ &    $5/64$  &   $-23/64$  & $-9/32$  & $17/236$  & $-79/244$ & $- 906/3599$ \\
    $X_2$  & $10/17$ & $-1/19$  & $173/323$ &   $39/64$  &    $-5/64$  & $17/32$  & $145/236$ & $-15/244$ & $1990/3599$ \\
    $X_3$  & $2/17$  & $-7/19$  & $-81/323$ &   $11/64$  &   $-25/64$  & $-7/32$  & $31/236$  & $-97/244$ & $-958/3599$ \\
    $X_4$  & $4/17$  & $-5/19$  & $-9/323$  &    $9/64$  &   $-11/64$  & $-1/32$  & $43/236$  & $-53/244$ & $-126/3599$ \\
    \bottomrule
    \end{tabularx} }
\end{footnotesize}
\end{table}

The rating vectors are given in Table~\ref{Table6}: $X_1 \prec_{(N,T)}^{\mathbf{Cfb}} X_3$ and $X_1 \prec_{(N,T')}^{\mathbf{Cfb}} X_3$, but $X_1 \succ_{(N,T'')}^{\mathbf{Cfb}} X_3$.
\end{proof}

\begin{lemma} \label{Lemma513}
Fair bets, dual fair bets and Copeland fair bets methods violate $CS$ on the set $\mathcal{R}^R$.
\end{lemma}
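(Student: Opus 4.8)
The plan is to obtain this statement for free from results already proved, by exploiting the logical hierarchy of additivity axioms rather than constructing a new example. The key observation is that $CS$ implies $EP$ (Corollary~\ref{Col3}), and this implication is purely definitional: it is established by restricting the hypotheses of $CS$ to the special case where $f_i = f_j$ in both ranking problems, so it remains valid after intersecting the domain with any subclass, in particular with $\mathcal{R}^R$. Hence if any of fair bets, dual fair bets or Copeland fair bets satisfied $CS$ on $\mathcal{R}^R$, it would also satisfy $EP$ on $\mathcal{R}^R$.

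The steps I would carry out are therefore: (i) recall that $\mathcal{R}^R$ is closed under summation, so that ``$CS$ on $\mathcal{R}^R$'' and ``$EP$ on $\mathcal{R}^R$'' are well-posed notions; (ii) recall Corollary~\ref{Col3} and note that it specialises verbatim to the restricted domain; (iii) invoke Lemma~\ref{Lemma512}, whose witness Example~\ref{Examp6} consists of two round-robin ranking problems $(N,T),(N,T')\in\mathcal{R}^R$ on which all three methods fail to preserve the equality $X_1\sim X_4$; (iv) conclude by contraposition that none of the three methods can be consistent on $\mathcal{R}^R$. An equivalent route goes through $RCS$ instead of $EP$, using Corollary~\ref{Col4} together with Lemma~\ref{Lemma514} (for fair bets and dual fair bets) and Proposition~\ref{Prop57} (for Copeland fair bets); both paths deliver the same conclusion.

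I do not expect any real obstacle: the entire computational content has already been discharged in Example~\ref{Examp6} (and, for the alternative route, in Examples~\ref{Examp7}--\ref{Examp8}). The only point that deserves a word of care is the claim that the chain of implications $CS\Rightarrow EP$ and $CS\Rightarrow RCS$ does not tacitly rely on the richness of the full domain $\mathcal{R}$; but since each implication is proved by simply reading off a special case of the premises of $CS$, this is immediate, and the proof reduces to a single sentence citing Lemma~\ref{Lemma512} and Corollary~\ref{Col3}.
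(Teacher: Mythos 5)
Your proof is correct and follows essentially the same approach as the paper: deduce the failure of $CS$ on $\mathcal{R}^R$ by contraposition from the failure of an implied weaker axiom whose counterexamples live entirely inside $\mathcal{R}^R$. The paper's own proof takes your ``alternative route'' (Corollary~\ref{Col4} with Lemma~\ref{Lemma514} and Proposition~\ref{Prop57}), but your primary route through Corollary~\ref{Col3} and Lemma~\ref{Lemma512} is equally valid, since Example~\ref{Examp6} consists of round-robin problems and $\mathcal{R}^R$ is closed under summation.
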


\begin{proof}
It comes from Lemma~\ref{Lemma514} and Proposition~\ref{Prop57} together with Corollary~\ref{Col4}.
\end{proof}

In the case of round-robin ranking problems, generalized row sum and least squares coincide with the score, so they have a 'perfect' performance regarding additivity.
Rankings according to fair bets, dual fair bets and Copeland fair bets may be reversed by adding two round-robin ranking problems even if there are only four objects, despite the latter satisfies inversion.

\section{Additivity and irrelevant comparisons} \label{Sec4}

From the viewpoint of additivity, score method seems to be flawless. However, consistency may have some unintended consequences, which are difficult to accept. This section deals with the connection of additivity with other axioms.

\subsection{Independence of irrelevant matches and results} \label{Sec41}

\begin{definition} \label{Def61}
\emph{Independence of irrelevant matches} ($IIM$):
Let $(N,T) \in \mathcal{R}$ be a ranking problem and $X_i, X_j, X_k, X_\ell \in N$ be four different objects.
Let $f: \mathcal{R} \to \mathbb{R}^n$ be a scoring procedure such that $f_i(N,T) \geq f_j(N,T)$ and $(N,T') \in \mathcal{R}$ be a ranking problem identical to $(N,T)$ except for $t'_{k \ell} \neq t_{k \ell}$.
$f$ is called \emph{independent of irrelevant matches} if $f_i(N,T') \geq f_j(N,T')$.
\end{definition}

\begin{remark} \label{Rem6}
Property $IIM$ has a meaning if $n \geq 4$.
\end{remark}

Sequential application of independence of irrelevant matches can lead to any ranking problem $(N,T') \in \mathcal{R}$, for which $t_{gh}' = t_{gh}$ if $\{ X_g,X_h \} \cap \{ X_i, X_j \} \neq \emptyset$, but all other paired comparisons are arbitrary.
$IIM$ means that all comparisons not involving the two objects chosen are irrelevant from the perspective of their relative ranking.

This property appears as \emph{independence} in \citet[Axiom~III]{Rubinstein1980} and \citet[Axiom~5]{NitzanRubinstein1981} in the case of round-robin ranking problems.
The name independence of irrelevant matches was introduced by \citet{Gonzalez-DiazHendrickxLohmann2013}., 
\citet[Definition~8.4]{AltmanTennenholtz2008} use a stronger axiom called \emph{Arrow's independence of irrelevant alternatives} by permitting modifications of comparisons involving $X_i$ and $X_j$ if $t_{ih} - t_{ih}' = t_{jh} - t_{jh}'$ holds for all $X_h \in N \setminus \{ X_i, X_j \}$.

Decomposition of the tournament matrix into the results matrix $A$ and matches matrix $M$ makes possible to weaken $IIM$.

\begin{definition} \label{Def62}
\emph{Independence of irrelevant results} ($IIR$):
Let $(N,A,M) \in \mathcal{R}$ be a ranking problem and $X_i, X_j, X_k, X_\ell \in N$ be four different objects.
Let $f: \mathcal{R} \to \mathbb{R}^n$ be a scoring procedure such that $f_i(N,A,M) \geq f_j(N,A,M)$ and $(N,A',M) \in \mathcal{R}$ be a ranking problem identical to $(N,A,M)$ except for the result $a'_{k \ell} \neq a_{k \ell}$.
$f$ is called \emph{independent of irrelevant results} if $f_i(N,A',M) \geq f_j(N,A',M)$.
\end{definition}

Sequential application of independence of irrelevant matches can result in any ranking problem $(N,A',M) \in \mathcal{R}$, for which $a_{gh}' = a_{gh}$ if $\{ X_g,X_h \} \cap \{ X_i, X_j \} \neq \emptyset$, but all other paired comparisons are arbitrary.
However, this axiom does not allow for a change in the number of matches between two objects (in the case of $IIM$, $t'_{k \ell} \neq t_{k \ell}$ means that $a'_{k \ell} \neq a_{k \ell}$ and $m'_{k \ell} \neq m_{k \ell}$ may occur).

Note also that $IIR$ does not affect the connectedness of the ranking problem, however, it may influence irreducibility.

\begin{corollary} \label{Col7}
$IIM$ implies $IIR$.
\end{corollary}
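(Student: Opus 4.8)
The plan is to treat $IIR$ as nothing more than a restriction of $IIM$, and to read off the implication directly from the definitions and from the ``sequential application'' remark following Definition~\ref{Def61}. Suppose $f$ satisfies $IIM$. Take two ranking problems $(N,A,M),(N,A',M)\in\mathcal{R}$ and four distinct objects $X_i,X_j,X_k,X_\ell$ as in Definition~\ref{Def62}, with $f_i(N,A,M)\geq f_j(N,A,M)$ and $a'_{k\ell}\neq a_{k\ell}$, all other results and the whole matches matrix $M$ unchanged.

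First I would pass to the tournament-matrix representation. Writing $T=(A+M)/2$ and $T'=(A'+M)/2$, the only entries in which $T$ and $T'$ can differ are $(k,\ell)$ and $(\ell,k)$, because $M$ is the same and $A,A'$ coincide outside the pair $\{X_k,X_\ell\}$ (recall $a'_{\ell k}=-a'_{k\ell}$). In particular $t'_{gh}=t_{gh}$ whenever $\{X_g,X_h\}\cap\{X_i,X_j\}\neq\emptyset$, since $X_k$ and $X_\ell$ are both distinct from $X_i$ and $X_j$.

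Now I would invoke the sequential form of $IIM$ already recorded after Definition~\ref{Def61}: starting from $(N,T)$, a finite sequence of single-entry modifications, none of them involving $X_i$ or $X_j$ (here at most the two entries $t_{k\ell}$ and $t_{\ell k}$ need to be adjusted), reaches $(N,T')$ while preserving $f_i\geq f_j$ at every step. Hence $f_i(N,T')\geq f_j(N,T')$, that is, $f_i(N,A',M)\geq f_j(N,A',M)$, which is exactly the conclusion required by $IIR$.

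There is essentially no obstacle here: the only point worth checking is that the modification permitted by $IIR$ -- altering a single result while holding the number of matches fixed -- changes the tournament matrix only on the pair $\{X_k,X_\ell\}$, and this is immediate from $M'=M$ together with the antisymmetry of $A$. The argument uses no property of $f$ beyond $IIM$ itself, and no computation.
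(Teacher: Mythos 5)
Your argument is correct and takes essentially the same route as the paper, whose proof of Corollary~\ref{Col7} is simply that the implication follows from Definitions~\ref{Def61} and \ref{Def62}: an $IIR$-modification (changing $a_{k\ell}$ while keeping $M$ fixed) alters the tournament matrix only in the entries $t_{k\ell}$ and $t_{\ell k}$, neither of which involves $X_i$ or $X_j$, so it is just a special case of the modifications already permitted by $IIM$. Your translation via $T=(A+M)/2$ and the sequential-application remark merely spells out this observation.
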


\begin{proof}
It follows from Definitions~\ref{Def61} and \ref{Def62}.
\end{proof}

\begin{remark} \label{Rem}
$IIM$ and $IIR$ coincide on the set of round-robin ranking problems $\mathcal{R}^R$.
\end{remark}

\begin{lemma} \label{Lemma61}
The score method satisfies $IIM$.
\end{lemma}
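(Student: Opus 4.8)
The plan is to verify directly from the definition of the score method (Definition~\ref{Def21}) that modifying a single tournament entry $t_{k\ell}$ with $k,\ell \notin \{i,j\}$ leaves the relative ranking of $X_i$ and $X_j$ unchanged. First I would recall that the score vector is $\mathbf{s}(N,A,M) = A\mathbf{e}$, so the score of any object $X_g$ is $s_g = \sum_{h=1}^n a_{gh}$, which depends only on the $g$-th row of the results matrix $A$.

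Next, I would compare the ranking problem $(N,T)$ with $(N,T')$, where $T'$ differs from $T$ only in the entry $t'_{k\ell} \neq t_{k\ell}$ for four distinct objects $X_i, X_j, X_k, X_\ell$. Passing to results matrices via $A = T - T^\top$, the only entries of $A$ that can change are $a_{k\ell}$ and $a_{\ell k}$ (and $a_{kk}, a_{\ell\ell}$ remain zero). Since $k \neq i$ and $\ell \neq i$, neither of these lies in row $i$ of $A$, hence $s_i(N,A',M) = \sum_{h} a'_{ih} = \sum_h a_{ih} = s_i(N,A,M)$; likewise $s_j(N,A',M) = s_j(N,A,M)$ because $k \neq j$ and $\ell \neq j$. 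Therefore $s_i(N,T') \geq s_j(N,T') \Leftrightarrow s_i(N,T) \geq s_j(N,T)$, and in particular $s_i(N,T) \geq s_j(N,T)$ implies $s_i(N,T') \geq s_j(N,T')$, which is exactly $IIM$.

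This argument is entirely routine and presents no real obstacle; the only point worth noting is that one must work with the results matrix (not the tournament matrix directly) so that the ``does not touch row $i$ or row $j$'' observation is clean, and that the hypothesis of \emph{four different} objects (Remark~\ref{Rem6}) is what guarantees $\{k,\ell\}$ is disjoint from $\{i,j\}$. The proof in the paper is presumably a one-line appeal to $\mathbf{s} = A\mathbf{e}$ depending only on the relevant rows.
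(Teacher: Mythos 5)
Your proof is correct and is precisely the argument the paper compresses into its one-line appeal to Definition~\ref{Def21}: since $s_g = \sum_h a_{gh}$ depends only on row $g$ of $A$, and a change to $t_{k\ell}$ with $\{k,\ell\}\cap\{i,j\}=\emptyset$ only alters entries in rows $k$ and $\ell$, the scores $s_i$ and $s_j$ are untouched. No gap; same approach, just spelled out.
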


\begin{proof}
It follows from Definition~\ref{Def21}.
\end{proof}

\begin{proposition} \label{Prop61}
The generalized row sum, least squares, fair bets, dual fair bets and Copeland fair bets methods violate $IIR$.
\end{proposition}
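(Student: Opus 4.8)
The plan is to exhibit, for each of the five methods, a single small ranking problem together with a one-match modification that reverses (or breaks) the relative ranking of two fixed objects $X_i, X_j$. Since $IIR$ must have bite, I need $n \geq 4$ objects, and the modified result $a'_{k\ell}$ must lie strictly between $-m_{k\ell}$ and $m_{k\ell}$ so that the constraint $|a_{k\ell}| \leq m_{k\ell}$ is preserved; the cleanest setup is to work with $n = 4$ and $m_{ij} = 1$ throughout (a round-robin problem), changing one arc from a draw to a decisive result. On $\mathcal{R}^R$ the axioms $IIM$ and $IIR$ coincide (Remark~\ref{Rem}), so a round-robin counterexample is doubly economical and also disposes of the round-robin subdomain simultaneously.

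First I would handle generalized row sum and least squares together. Pick a round-robin problem on $N = \{X_1,X_2,X_3,X_4\}$ in which $X_i$ and $X_j$ have identical results against everyone (so $f_i = f_j$ by $NEU$, cf.\ Remark~\ref{Rem2}), then flip a single arc among the other two objects, say $a_{k\ell}: 0 \mapsto 1$. Because generalized row sum propagates the score of $X_\ell$ (now lowered) and $X_k$ (now raised) along paths through the comparison graph, and because $X_i$ and $X_j$ sit asymmetrically relative to $\{X_k, X_\ell\}$ — e.g.\ $X_i$ beat $X_k$ while $X_j$ beat $X_\ell$ — the adjustment terms $x_i(\varepsilon)$ and $x_j(\varepsilon)$ move in opposite directions, giving $x_i(\varepsilon) > x_j(\varepsilon)$ for every $\varepsilon > 0$. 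I would display the two rating vectors (before and after) computed from $(I + \varepsilon L)\mathbf{x}(\varepsilon) = (1 + \varepsilon mn)\mathbf{s}$; the least squares violation then follows for free by the limit $\varepsilon \to \infty$ of Lemma~\ref{Lemma22}, or alternatively from $L\mathbf{q} = \mathbf{s}$ with the same matrices. One reusable example serves both methods.

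Next, fair bets, dual fair bets, and Copeland fair bets. Here I would reuse (or lightly adapt) Example~\ref{Examp6}: that construction already has $X_1 \sim X_4$ under all three methods for a round-robin tournament, and it shows that local perturbations shift the fair bets recursion $F^{-1}T\,\mathbf{fb} = \mathbf{fb}$ in a way that separates $X_1$ from $X_4$. The key point is that the fair bets stationary distribution is genuinely global — it depends on the whole loss structure — so changing $t_{k\ell}$ for $\{X_k,X_\ell\}$ disjoint from $\{X_1,X_4\}$ alters the number of losses of $X_k$ and $X_\ell$, which feeds back through the Markov chain to $fb_1$ and $fb_4$ asymmetrically (note $X_4$ is reachable from $X_k$ or $X_\ell$ differently than $X_1$ is). I would present one round-robin base tournament on four objects with $X_i \sim X_j$, flip one arc among the other two, and tabulate $\mathbf{fb}$, $\mathbf{dfb} = -\mathbf{dfb}^{\ast}$, and $\mathbf{Cfb} = \mathbf{fb} + \mathbf{dfb}$ before and after, reading off the reversal for each. (If one arc-flip happens to leave $\mathbf{Cfb}$ tied by cancellation of the two components, a different arc or a second flip via sequential $IIR$ fixes it.)

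The main obstacle is arithmetic calibration rather than any conceptual difficulty: for each of the three fair-bets variants I must verify that the chosen base problem really gives $f_i = f_j$ (this is guaranteed by $NEU$ if $X_i, X_j$ are symmetric in the base, so I should build that symmetry in deliberately) and that the single perturbation breaks the tie in the required direction for \emph{all three} simultaneously — dual fair bets uses the transpose, so an arc-flip that helps $X_i$ under $\mathbf{fb}$ may help $X_j$ under $\mathbf{dfb}$, and I need the $\mathbf{Cfb}$ sum not to cancel. The remedy, if a single instance is awkward, is to allow two or three separate small examples (one per method), which the axiom permits. I expect the whole proof to be about a page of matrices and rating tables, entirely in the style of Examples~\ref{Examp5}--\ref{Examp8} already in the paper, with the generalized row sum / least squares part leaning on a symbolic $\varepsilon$-computation and the fair-bets part on exact fractions.
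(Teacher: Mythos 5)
There is a genuine gap in the part of your plan covering generalized row sum and least squares. You propose to build the counterexample inside a round-robin problem ($n=4$, $m_{ij}=1$ throughout), arguing that this is ``doubly economical'' because $IIM$ and $IIR$ coincide on $\mathcal{R}^R$. But on $\mathcal{R}^R$ the generalized row sum and least squares methods coincide with the score method (the \emph{agreement} and \emph{score consistency} properties; this is exactly Lemma~\ref{Lemma63} of the paper), and the score satisfies $IIM$. So no round-robin counterexample for these two methods can exist: if $s_i = s_j$ before the arc flip and the flip only touches $a_{k\ell}$ with $\{X_k,X_\ell\}\cap\{X_i,X_j\}=\emptyset$, then $s_i = s_j$ afterwards as well, and hence $x_i(\varepsilon)=x_j(\varepsilon)$ and $q_i=q_j$ both before and after. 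Your intuition that ``the adjustment terms move in opposite directions'' because $X_i$ beat $X_k$ while $X_j$ beat $X_\ell$ is false here: the asymmetry that generalized row sum exploits lives in the \emph{matches} matrix (who was compared with whom), not in the results, and a round-robin matches matrix is completely symmetric. The fix is to use an incomplete comparison multigraph in which $X_i$ and $X_j$ are adjacent to \emph{different} members of $\{X_k,X_\ell\}$; the paper takes a $4$-cycle $X_1$--$X_2$--$X_3$--$X_4$--$X_1$ with all results drawn except the one between $X_3$ and $X_4$, so that flipping $a_{34}$ swaps the ratings of $X_1$ and $X_2$ by symmetry.

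The fair-bets part of your plan is viable in spirit (a round-robin counterexample does exist for fair bets, dual fair bets and Copeland fair bets --- the paper gives one separately in Proposition~\ref{Prop62}), but reusing Example~\ref{Examp6} directly does not work: that example is about summing two ranking problems ($EP$), not about modifying a single result $a_{k\ell}$ while keeping $m_{k\ell}$ fixed, so you would still have to construct and verify a fresh tournament with the required tie and tie-break. The paper avoids all of this calibration by using the same non-round-robin $4$-cycle for all five methods at once: the two problems $(N,A,M)$ and $(N,A',M)$ are images of each other under the permutation swapping $X_1 \leftrightarrow X_2$ and $X_3 \leftrightarrow X_4$, so strict inequality in one problem forces the reversed strict inequality in the other for every neutral method, and only the numerical sign of $f_1 - f_2$ in the first problem needs to be checked per method.
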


\begin{proof}

\begin{figure}[htbp]
\centering
\caption{Ranking problems of Example~\ref{Examp9}}
\label{Fig9}
  
\begin{subfigure}{.5\textwidth}
  \centering
  \subcaption{Ranking problem $(N,A,M)$}
  \label{Fig9a}
\begin{tikzpicture}[scale=1, auto=center, transform shape, >=triangle 45]
\tikzstyle{every node}=[draw,shape=rectangle]; 
  \node (n1) at (90:2) {$X_1$};
  \node (n2) at (0:2)  {$X_2$};
  \node (n3) at (270:2) {$X_3$};
  \node (n4) at (180:2)  {$X_4$};

  \foreach \from/\to in {n1/n2,n1/n4,n2/n3}
    \draw (\from) -- (\to);
  \draw [->] (n4) -- (n3);
\end{tikzpicture}
\end{subfigure}%\vspace{0.5cm}
\begin{subfigure}{.5\textwidth}
  \centering
  \subcaption{Ranking problem $(N,A',M)$}
  \label{Fig9b}
\begin{tikzpicture}[scale=1, auto=center, transform shape, >=triangle 45]
\tikzstyle{every node}=[draw,shape=rectangle];
  \node (n1) at (90:2) {$X_1$};
  \node (n2) at (0:2)  {$X_2$};
  \node (n3) at (270:2) {$X_3$};
  \node (n4) at (180:2)  {$X_4$};

  \foreach \from/\to in {n1/n2,n1/n4,n2/n3}
    \draw (\from) -- (\to);
  \draw [->] (n3) -- (n4);
\end{tikzpicture}
\end{subfigure}
\end{figure}

\begin{example} \label{Examp9}
Let $(N,A,M),(N,A',M) \in \mathcal{R}$ be the ranking problems in Figure~\ref{Fig9} with set of objects $N = \{ X_1,X_2,X_3,X_4 \}$, tournament and matches matrices
\[
T =
\begin{pmatrix}
    0     & 0.5   & 0     & 0.5 \\
    0.5   & 0     & 0.5   & 0   \\
    0     & 0.5   & 0     & 0   \\
    0.5   & 0     & 1     & 0   \\
\end{pmatrix},\, %\quad
T' =
\begin{pmatrix}
    0     & 0.5   & 0     & 0.5 \\
    0.5   & 0     & 0.5   & 0   \\
    0     & 0.5   & 0     & 1   \\
    0.5   & 0     & 0     & 0   \\
\end{pmatrix} \text{ and }
M = M' =
\begin{pmatrix}
    0     & 1     & 0     & 1 \\
    1     & 0     & 1     & 0   \\
    0     & 1     & 0     & 1   \\
    1     & 0     & 0     & 0   \\
\end{pmatrix},
\]
where $a_{34}' \neq a_{34}$ but $m_{34}' = m_{34}$.
\end{example}

$IIM$ requires that $f_1(N,A,M) \geq f_2(N,A,M) \Leftrightarrow f_1(N,A',M) \geq f_2(N,A',M)$. Let $\mathbf{x}(\varepsilon)(N,A,M) = \mathbf{x}(\varepsilon)$, $\mathbf{x}(\varepsilon)(N,A',M') = \mathbf{x}(\varepsilon)'$ and $\mathbf{q}(N,A,M) = \mathbf{q}$, $\mathbf{q}(N,A',M') = \mathbf{q}'$.
Here $m=1$ and $n=4$, therefore
\[
x_1(\varepsilon) = x_2(\varepsilon)' = (1 + \varepsilon m n) \frac{\varepsilon}{(1 + 2\varepsilon)(1 + 4\varepsilon)} = \frac{\varepsilon}{1 + 2\varepsilon} \text{ and}
\]
\[
x_1(\varepsilon)' = x_2(\varepsilon) = (1 + \varepsilon m n) \frac{-\varepsilon}{(1 + 2\varepsilon)(1 + 4\varepsilon)} = \frac{-\varepsilon}{1 + 2\varepsilon},
\]
that is, $X_1 \succ_{(N,A,M)}^{\mathbf{x}(\varepsilon)} X_2$ but $X_1 \prec_{(N,A',M)}^{\mathbf{x}(\varepsilon)} X_2$.

For the least squares method, on the basis of Lemma~\ref{Lemma22}:
\[
q_1 = \frac{\lim_{\varepsilon \to \infty} x_1(\varepsilon)}{m n} = q_2' = \frac{\lim_{\varepsilon \to \infty} x_2(\varepsilon)'}{m n} = \frac{1}{2} \cdot \frac{1}{1 \cdot 4} = \frac{1}{8} \text{ and}
\]
\[
q_1' = \frac{\lim_{\varepsilon \to \infty} x_1(\varepsilon)'}{m n} = q_2 = \frac{\lim_{\varepsilon \to \infty} x_2(\varepsilon)}{m n} = -\frac{1}{2} \cdot \frac{1}{1 \cdot 4} = -\frac{1}{8}.
\]
Hence $X_1 \succ_{(N,A,M)}^{\mathbf{q}} X_2$ but $X_1 \prec_{(N,A',M)}^{\mathbf{q}} X_2$.

\begin{table}[htbp]
\centering
\caption{Fair bets and associated rating vectors of Example~\ref{Examp9}}
\label{Table7}
\begin{footnotesize}
%\noindent\makebox[\textwidth]{
    \begin{tabularx}{0.8\textwidth}{C rrr rrr} \toprule
    & $\mathbf{fb}(N,T)$     & $\mathbf{dfb}(N,T)$  & $\mathbf{Cfb}(N,T)$   & $\mathbf{fb}(N,T')$   & $\mathbf{dfb}(N,T')$   & $\mathbf{Cfb}(N,T')$ \\
    \midrule
    $X_1$   & $5/16$  & $-3/16$  & $1/8$   & $3/16$  & $-5/16$  & $-1/8$  \\
    $X_2$   & $3/16$  & $-5/16$  & $-1/8$  & $5/16$  & $-3/16$  & $1/8$   \\
    $X_3$   & $1/16$  & $-7/16$  & $-3/8$  & $7/16$  & $-1/16$  & $3/8$   \\
    $X_4$   & $7/16$  & $-1/16$  & $3/8$   & $1/16$  & $-7/16$  & $-3/8$  \\
    \bottomrule
    \end{tabularx} %}
\end{footnotesize}
\end{table}

The other three rating vectors are given in Table~\ref{Table7}: $X_1 \succ_{(N,T)} X_2$ and $X_1 \prec_{(N,T')} X_2$ for the three methods.
\end{proof}

\begin{remark}
The two ranking problems of Example~\ref{Examp9} coincide with the permutation $\sigma(X_1) = X_2$ and $\sigma(X_3) = X_4$. Then independence of irrelevant matches demands that $f_1(N,A,M) = f_2(N,A,M)$, violated by all ranking methods discussed except for the score.
\end{remark}

\begin{lemma} \label{Lemma62}
The generalized row sum, least squares, fair bets, dual fair bets and Copeland fair bets methods violate $IIM$.
\end{lemma}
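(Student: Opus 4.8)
The plan is simply to invoke the logical relationship between the two independence axioms already established. By Corollary~\ref{Col7}, $IIM$ implies $IIR$; equivalently, any scoring procedure that violates $IIR$ must also violate $IIM$. Proposition~\ref{Prop61} states precisely that the generalized row sum, least squares, fair bets, dual fair bets and Copeland fair bets methods all violate $IIR$. Applying the contrapositive of Corollary~\ref{Col7} to each of these five methods yields the claim.

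There is no real obstacle here: the counterexample (Example~\ref{Examp9}) has already done the work in the proof of Proposition~\ref{Prop61}, and since that example keeps the matches matrix fixed ($m_{34}' = m_{34}$), it is in particular a valid instance of the more permissive modification allowed by $IIM$. So the same data directly witness the violation of $IIM$ for all five methods, but the cleanest argument is the one-line deduction from Proposition~\ref{Prop61} and Corollary~\ref{Col7}.
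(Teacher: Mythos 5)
Your argument is exactly the paper's: Lemma~\ref{Lemma62} is deduced from Proposition~\ref{Prop61} via the contrapositive of Corollary~\ref{Col7}. The proposal is correct and matches the paper's proof.
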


\begin{proof}
It comes from Proposition~\ref{Prop61} and Corollary~\ref{Col7}.
\end{proof}

\begin{lemma} \label{Lemma63}
The generalized row sum and least squares methods satisfy $IIM$ on the set $\mathcal{R}^R$.
\end{lemma}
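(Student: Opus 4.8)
The generalized row sum and least squares methods satisfy $IIM$ on the set $\mathcal{R}^R$.

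The plan is to reduce the statement to the corresponding property of the score, in the same spirit as the proof of Lemma~\ref{Lemma510}. The decisive fact will be that on $\mathcal{R}^R$ both the generalized row sum and the least squares rankings coincide with the score ranking, so that the validity of $IIM$ for the score (Lemma~\ref{Lemma61}) transfers to them.

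First I would pass from $IIM$ to $IIR$: by Remark~\ref{Rem} the two axioms coincide on $\mathcal{R}^R$, so it is enough to take $(N,A,M),(N,A',M)\in\mathcal{R}^R$ differing only in the result $a_{k\ell}$ (hence, $A$ being skew-symmetric, also in $a_{\ell k}=-a_{k\ell}$), with $M$ the common matches matrix, $m_{ij}=m$ for $i\neq j$. In such a problem $L=m\bigl(nI-\mathbf{e}\mathbf{e}^\top\bigr)$, and since $\mathbf{e}^\top\mathbf{s}=\mathbf{e}^\top A\mathbf{e}=0$, substituting $L$ into Definition~\ref{Def22} gives $\mathbf{x}(\varepsilon)(N,A,M)=\mathbf{s}(N,A,M)$, and substituting into Definition~\ref{Def24} gives $\mathbf{q}(N,A,M)=\mathbf{s}(N,A,M)/(mn)$; both are order-equivalent to $\mathbf{s}(N,A,M)$, and the same identities hold at $(N,A',M)$.

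It then remains to combine these observations: the score satisfies $IIM$ (Lemma~\ref{Lemma61}), hence $IIR$ (Corollary~\ref{Col7}), so $s_i(N,A,M)\geq s_j(N,A,M)$ implies $s_i(N,A',M)\geq s_j(N,A',M)$, and transporting this equivalence through the equalities above yields $x_i(\varepsilon)(N,A,M)\geq x_j(\varepsilon)(N,A,M)\Rightarrow x_i(\varepsilon)(N,A',M)\geq x_j(\varepsilon)(N,A',M)$ together with the analogous implication for $\mathbf{q}$. The only step that needs care is the first one — checking that, within $\mathcal{R}^R$, an $IIM$-modification is forced to keep $M$ fixed and is therefore a pure change of results, so that Remark~\ref{Rem} applies; after that the argument is mechanical.
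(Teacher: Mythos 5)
Your proof is correct and follows essentially the same route as the paper: the paper's argument is precisely that on $\mathcal{R}^R$ the generalized row sum and least squares methods coincide with the score (citing the axioms \emph{agreement} and \emph{score consistency}), so Lemma~\ref{Lemma61} applies. The only difference is that you verify the coincidence directly from $L=m\bigl(nI-\mathbf{e}\mathbf{e}^\top\bigr)$ and $\mathbf{e}^\top\mathbf{s}=0$ instead of citing it, which makes the argument more self-contained but not substantively different.
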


\begin{proof}
Due to axioms \emph{agreement} \citep[Property~3]{Chebotarev1994} and \emph{score consistency} \citep{Gonzalez-DiazHendrickxLohmann2013}, both the generalized row sum and least squares methods coincide with the score on this set of problems, so Lemma~\ref{Lemma61} holds.
\end{proof}

\begin{proposition} \label{Prop62}
Fair bets, dual fair bets and Copeland fair bets methods violate $IIR$ ($IIM$) even on the set $\mathcal{R}^R$.
\end{proposition}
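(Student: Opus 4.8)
The plan is to produce an explicit counterexample on the round-robin domain, and in fact a single one that handles all three methods at once. First, recall from the Remark that $IIM$ and $IIR$ coincide on $\mathcal{R}^R$, and that $IIM$ implies $IIR$ anyway (Corollary~\ref{Col7}); so it is enough to exhibit, for each of the three methods, a round-robin ranking problem together with a one-result modification that reverses the relative ranking of two objects. I would reuse the neutrality device already used after Example~\ref{Examp9}: if the modified problem is a relabelling of the original one by a permutation that transposes the two objects in question, then neutrality forces their ratings to coincide in the original problem unless $IIR$ is violated.

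Concretely, take $N=\{X_1,X_2,X_3,X_4\}$ with $m=1$ (a binary round-robin tournament, so $M$ has all off-diagonal entries equal to $1$ and stays fixed throughout), and choose the results so that $X_1$ and $X_2$ draw while $a_{13}=a_{24}$ and $a_{14}=a_{23}$; a convenient choice is
\[
T=\begin{pmatrix} 0 & 1/2 & 1 & 1/2 \\ 1/2 & 0 & 1/2 & 1 \\ 0 & 1/2 & 0 & 1 \\ 1/2 & 0 & 0 & 0 \end{pmatrix}.
\]
Let $(N,T')$ be obtained from $(N,T)$ by reversing only the comparison between $X_3$ and $X_4$ (so $t'_{34}=0$, $t'_{43}=1$, all other entries unchanged); this is exactly the modification allowed in $IIR$ with $k=3$, $\ell=4$. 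The point of the symmetric choice is that $(N,T')=\sigma(N,T)$ for the permutation $\sigma$ that swaps $X_1\leftrightarrow X_2$ and $X_3\leftrightarrow X_4$.

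Then I would solve the defining linear systems. The tournament $(N,T)$ is irreducible, so fair bets, dual fair bets and Copeland fair bets are all well-defined on it. Solving $F^{-1}Tv=v$, $\mathbf{e}^\top v=1$ yields $\mathbf{fb}(N,T)=\tfrac1{64}(25,23,11,5)^\top$; solving the corresponding system for dual fair bets yields $\mathbf{dfb}(N,T)=-\tfrac1{64}(11,5,9,39)^\top$; hence $\mathbf{Cfb}(N,T)=\tfrac1{64}(14,18,2,-34)^\top$. In each of the three vectors the first two coordinates differ: $fb_1>fb_2$, $dfb_1<dfb_2$, $Cfb_1<Cfb_2$. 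Since all three methods satisfy $NEU$ (Lemma~\ref{Lemma31}) and $(N,T')=\sigma(N,T)$ with $\sigma$ transposing $X_1$ and $X_2$, the relative ranking of $X_1$ and $X_2$ is reversed in $(N,T')$; as $(N,T')$ differs from $(N,T)$ only in the comparison between $X_3$ and $X_4$, this contradicts $IIR$ for each method. Because $(N,T),(N,T')\in\mathcal{R}^R$, the violation holds on $\mathcal{R}^R$, and a fortiori $IIM$ is violated there.

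The main obstacle is the choice of the free entries. The permutation device works automatically once one imposes $a_{12}=0$, $a_{13}=a_{24}$, $a_{14}=a_{23}$, but one still has to pick values that (i) keep the tournament irreducible, so that fair bets is defined (Proposition~\ref{Prop23}), and (ii) actually break the tie $f_1=f_2$ for all three methods simultaneously. Condition (ii) is the delicate one for Copeland fair bets, which inherits extra symmetry from satisfying $INV$ (Lemma~\ref{Lemma35}): several natural symmetric choices (for instance $a_{13}=a_{24}=1$, $a_{14}=a_{23}=-1$, $a_{34}=1$) make $\mathbf{fb}$ and $\mathbf{dfb}$ separate $X_1$ and $X_2$ in opposite directions by equal amounts, so that $Cfb_1=Cfb_2$ and the argument collapses for Copeland fair bets. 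A brief search over the finitely many binary instances is needed to land on one, such as the matrix above, for which the asymmetry survives the summation $\mathbf{fb}+\mathbf{dfb}$.
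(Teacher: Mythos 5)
Your proposal is correct and takes essentially the same approach as the paper: an explicit four-object round-robin counterexample in which only the $X_3$--$X_4$ result is reversed (the paper's Example~\ref{Examp10} does exactly this, starting from a fully symmetric cycle where all three methods tie every object, then breaking the tie). Your variant, which instead arranges $(N,T')=\sigma(N,T)$ for the transposition $\sigma$ of $X_1\leftrightarrow X_2$, $X_3\leftrightarrow X_4$ and invokes neutrality so that only one set of ratings need be computed, is a legitimate shortcut, and your vectors $\mathbf{fb}=\tfrac1{64}(25,23,11,5)^\top$, $\mathbf{dfb}=-\tfrac1{64}(11,5,9,39)^\top$, $\mathbf{Cfb}=\tfrac1{64}(14,18,2,-34)^\top$ check out.
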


\begin{proof}

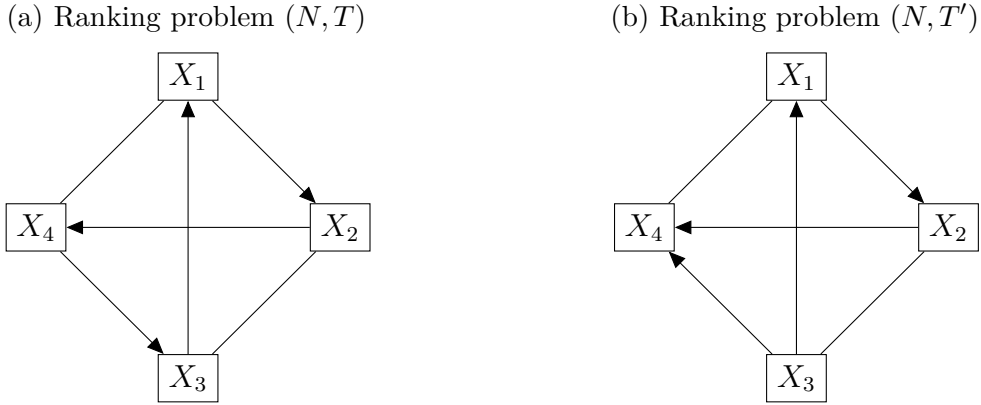
\begin{figure}[htbp]
\centering
\caption{Ranking problems of Example~\ref{Examp10}}
\label{Fig10}
  
\begin{subfigure}{.5\textwidth}
  \centering
  \caption{Ranking problem $(N,T)$}
  \label{Fig10a}
\begin{tikzpicture}[scale=1, auto=center, transform shape, >=triangle 45]
\tikzstyle{every node}=[draw,shape=rectangle];
  \node (n1) at (90:2) {$X_1$};
  \node (n2) at (0:2)  {$X_2$};
  \node (n3) at (270:2) {$X_3$};
  \node (n4) at (180:2)  {$X_4$};

\draw[->](n1) -- (n2);
\draw[->](n3) -- (n1);
\draw(n1) -- (n4);
\draw(n2) -- (n3);
\draw[->](n2) -- (n4);
\draw[->](n4) -- (n3);
\end{tikzpicture}
\end{subfigure}%\vspace{0.5cm}
\begin{subfigure}{.5\textwidth}
  \centering
  \caption{Ranking problem $(N,T')$}
  \label{Fig10b}
\begin{tikzpicture}[scale=1, auto=center, transform shape, >=triangle 45]
\tikzstyle{every node}=[draw,shape=rectangle];
  \node (n1) at (90:2) {$X_1$};
  \node (n2) at (0:2)  {$X_2$};
  \node (n3) at (270:2) {$X_3$};
  \node (n4) at (180:2)  {$X_4$};

\draw[->](n1) -- (n2);
\draw[->](n3) -- (n1);
\draw(n1) -- (n4);
\draw(n2) -- (n3);
\draw[->](n2) -- (n4);
\draw[->](n3) -- (n4);
\end{tikzpicture}
\end{subfigure}
\end{figure}

\begin{example} \label{Examp10}
Let $(N,T),(N,T') \in \mathcal{R}^R$ be the round-robin ranking problems in Figure~\ref{Fig10} with the set of objects $N = \{ X_1,X_2,X_3,X_4 \}$, tournament and matches matrices
\[
T =
\begin{pmatrix}
    0     & 1     & 0     & 0.5 \\
    0     & 0     & 0.5   & 1 \\
    1     & 0.5   & 0     & 0 \\
    0.5   & 0     & 1     & 0 \\
\end{pmatrix}, \,
T' =
\begin{pmatrix}
    0     & 1     & 0     & 0.5 \\
    0     & 0     & 0.5   & 1 \\
    1     & 0.5   & 0     & 1 \\
    0.5   & 0     & 0     & 0 \\
\end{pmatrix} \text{ and }
M = M' =
\begin{pmatrix}
    0     & 1     & 1     & 1 \\
    1     & 0     & 1     & 1   \\
    1     & 1     & 1     & 1   \\
    1     & 1     & 1     & 0   \\
\end{pmatrix},
\]
where $a_{34}' \neq a_{34}$ but $m_{34}' \neq m_{34}$.
\end{example}

\begin{table}[htbp]
\centering
\caption{Fair bets and associated rating vectors of Example~\ref{Examp10}}
\label{Table8}
\begin{footnotesize}
%\noindent\makebox[\textwidth]{
    \begin{tabularx}{0.8\textwidth}{C rrr rrr} \toprule
    & $\mathbf{fb}(N,T)$     & $\mathbf{dfb}(N,T)$  & $\mathbf{Cfb}(N,T)$   & $\mathbf{fb}(N,T')$   & $\mathbf{dfb}(N,T')$   & $\mathbf{Cfb}(N,T')$ \\
    \midrule
    $X_1$   & $1/4$  & $-1/4$  & $0$  & $5/32$  & $-7/32$  & $-1/16$  \\
    $X_2$   & $1/4$  & $-1/4$  & $0$  & $7/32$  & $-5/32$  & $1/16$   \\
    $X_3$   & $1/4$  & $-1/4$  & $0$  & $19/32$ & $-1/32$  & $9/16$   \\
    $X_4$   & $1/4$  & $-1/4$  & $0$  & $1/32$  & $-19/32$ & $-9/16$  \\
    \bottomrule
    \end{tabularx} %}
\end{footnotesize}
\end{table}

$IIR$ requires that $f_1(N,A,M) \geq f_2(N,A,M) \Leftrightarrow f_1(N,A',M) \geq f_2(N,A',M)$.
The rating vectors are given in Table~\ref{Table8}: $X_1 \succeq_{(N,T)} X_2$ and $X_1 \prec_{(N,T')} X_2$ for the three methods.
\end{proof}

Hence, similarly to consistency, generalized row sum and least squares satisfy $IIR$ on the set of round-robin ranking problems, while fair bets, dual fair bets and Copeland fair bets break it even on this restricted domain.

\subsection{Connection to additivity} \label{Sec42}

Take a look at Example~\ref{Examp9} (Figure~\ref{Fig9}). It seems strange to require that objects $X_1$ and $X_2$ have the same rank in both problems, which is an implication of $IIM$.
Therefore, \citet[p.~165]{Gonzalez-DiazHendrickxLohmann2013} consider independence of irrelevant matches to be a drawback of the score method because outside the subdomain of round-robin ranking problems, it makes sense if the scoring procedure is responsive to the strength of the opponents.
However, it turns out that $IIM$ is closely linked to additivity.

\begin{theorem} \label{Theo61}
$NEU$, $SYM$ and $CS$ imply $IIM$.
\end{theorem}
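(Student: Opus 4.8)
The plan is to argue by contradiction, exploiting that the modification permitted in $IIM$ changes only a comparison among objects different from $X_i$ and $X_j$, so it is invisible to the transposition $\sigma$ that exchanges $X_i$ and $X_j$.

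First I would isolate the one structural fact needed: if a ranking problem $(N,B) \in \mathcal{R}$ is invariant under $\sigma = (X_i\ X_j)$, that is, $\sigma(N,B) = (N,B)$, then $NEU$ forces $f_i(N,B) = f_j(N,B)$. Indeed, applying $NEU$ to the pair $(X_i,X_j)$ together with $\sigma i = j$, $\sigma j = i$ and $\sigma(N,B) = (N,B)$ yields $f_i(N,B) \geq f_j(N,B) \Leftrightarrow f_j(N,B) \geq f_i(N,B)$, so both inequalities hold, hence equality.

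Now let $(N,T), (N,T') \in \mathcal{R}$ be as in the statement, with $f_i(N,T) \geq f_j(N,T)$, where $T'$ differs from $T$ only in the entry $t_{k\ell}$ and $X_k, X_\ell \notin \{X_i,X_j\}$. Put $D = T' - T$; this matrix is supported on a single off-diagonal entry lying in neither row nor column $i$ nor $j$, and since $\sigma$ fixes $X_k$ and $X_\ell$ we get $\sigma(D) = D$, whence $\sigma(N,T') = (N, \sigma(T) + D)$. Assume, for contradiction, that $f_i(N,T') < f_j(N,T')$. By $NEU$ applied to $(N,T')$ and $\sigma$ this gives $f_i[\sigma(N,T')] > f_j[\sigma(N,T')]$. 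Apply $CS$ to the problems $(N,T)$ and $\sigma(N,T')$: the first satisfies $f_i \geq f_j$ and the second $f_i > f_j$, so $f_i > f_j$ holds strictly in their sum $(N,B)$, where $B = T + \sigma(T) + D$. But $\sigma(B) = \sigma(T) + T + \sigma(D) = T + \sigma(T) + D = B$, so $(N,B)$ is $\sigma$-invariant and the structural fact gives $f_i(N,B) = f_j(N,B)$, a contradiction. Therefore $f_i(N,T') \geq f_j(N,T')$, which is $IIM$; the ``sequential'' version follows at once, since a modification supported entirely off rows and columns $i$ and $j$ still satisfies $\sigma(D) = D$.

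I expect the crux to be spotting the right combination: adding $\sigma(N,T')$ — rather than $\sigma(N,T)$ or $(N,T')$ itself — to $(N,T)$ produces a problem that $CS$ declares strict while the $\sigma$-symmetry of $D$ makes it flat on $\{X_i,X_j\}$, and the collision of these two facts is the whole proof. Checking that all problems involved stay inside the working domain is routine: $B \geq T$ entrywise, so $B$ inherits connectedness (and, for fair bets, irreducibility) from $T$; one should also keep the strict/non-strict bookkeeping in $CS$ straight. Beyond that the argument rests on $NEU$ and $CS$, so $SYM$ would enter, if at all, only as a substitute for the transposition-invariance fact in an alternative route.
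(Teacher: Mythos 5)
Your proof is correct, and it takes a genuinely different route from the paper's. The paper first invokes Corollary~\ref{Col6} to extract $INV$ from $SYM$ and $CS$, then builds a sum of \emph{four} problems, $\sigma(N,A,M) + \sigma(N,-A',M') + (N,-A,M) + (N,A',M')$, arranged so that the total results matrix is $O$ (because $A-A'$ is supported on the $\sigma$-fixed entry $(k,\ell)$); each summand satisfies $f_i \leq f_j$ with at least one strict, so $CS$ forces $f_i < f_j$ on the sum while $SYM$ forces equality -- contradiction. You instead sum only \emph{two} problems, $(N,T) + \sigma(N,T')$, and observe that this sum is invariant under the transposition $\sigma = (X_i\ X_j)$ because $\sigma(T') = \sigma(T) + D$ with $\sigma(D)=D$; your preliminary observation that $NEU$ alone forces $f_i = f_j$ on any $\sigma$-invariant problem then collides with the strict inequality delivered by $CS$. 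The payoff of your version is twofold: it is shorter (no negated problems, no detour through $INV$), and it never uses $SYM$ at all, so it actually establishes the stronger implication that $NEU$ and $CS$ alone imply $IIM$. The bookkeeping you flag (strict versus weak inequalities in $CS$, validity of the summed problem, preservation of connectedness/irreducibility) all checks out, as does the derivation $f_i(N,T') < f_j(N,T') \Rightarrow f_i[\sigma(N,T')] > f_j[\sigma(N,T')]$ from the biconditional in Definition~\ref{Def31} with $\sigma i = j$, $\sigma j = i$.
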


\begin{proof}
For the round-robin case, see \citet[Lemma~3]{NitzanRubinstein1981}.

Assume to the contrary, and let $(N,A,M) \in \mathcal{R}$ be a ranking problem, $X_i, X_j, X_k, X_\ell \in N$ be four different objects such that $f_i(N,A,M) \geq f_j(N,A,M)$, and $(N,A',M') \in \mathcal{R}$ is identical to $(N,A,M)$ except for the result $a'_{k \ell} \neq a_{k \ell}$ and match $m'_{k \ell} \neq m_{k \ell}$ such that $f_i(N,A',M') < f_j(N,A',M')$.

Corollary~\ref{Col6} implies that a symmetric and consistent scoring procedure satisfies $INV$, hence $f_i(N,-A,M) \leq f_j(N,-A,M)$.
Denote by $\sigma: N \rightarrow N$ the permutation $\sigma(X_i) = X_j$, $\sigma(X_j) = X_i$, and $\sigma(X_k) = X_k$ for all $X_k \in N \setminus \{ X_i,X_j \}$. Neutrality implies $f_i \left[ \sigma(N,A,M) \right] \leq f_j \left[ \sigma(N,A,M) \right]$, and $f_i \left[ \sigma(N,-A',M') \right] < f_j \left[ \sigma(N,-A',M') \right]$ due to inversion and Remark~\ref{Rem3}. With the definitions $A'' = \sigma(A) - \sigma(A') - A + A' = O$ and $M'' = \sigma(M) + \sigma(M') + M + M'$,
\[
(N,A'',M'') = \sigma(N,A,M) + \sigma(N,-A',M') - (N,A,M) + (N,A',M').
\]
Symmetry implies $f_i(N,A'',M'') = f_j(N,A'',M'')$ since $A'' = 0$, but $f_i(N,A'',M'') < f_j(N,A'',M'')$ from consistency, which is a contradiction. 
\end{proof}

\begin{remark}
$NEU$, $SYM$ and $RCS$ \emph{do not} imply $IIR$ despite that the proof of Theorem~\ref{Theo61} can almost be followed. According to Proposition~\ref{Prop53}, a symmetric and result consistent scoring procedure also satisfies $INV$, but result consistency \emph{does not} provide $f_i(N,A'',M'') < f_j(N,A'',M'')$ even if $M = M'$ (guaranteed in the case of $IIR$) due to $M \neq \sigma(M)$ in general.

Note that $M = \sigma(M)$ is equivalent to $m_{ik} = m_{jk}$ for all $X_k \in N \setminus \{ X_i, X_j \}$. Then $NEU$, $SYM$ and $RCS$ still imply $IIR$, so generalized row sum and least squares should satisfy independence of irrelevant results with respect to such objects $X_i$ and $X_j$. In fact, according to the property \emph{homogeneous treatment of victories} \citep{Gonzalez-DiazHendrickxLohmann2013}, in this case they result in $X_i \succeq X_j$ if and only if $s_i(N,A,M) \geq s_j(N,A,M)$: when two objects have the same number of comparisons against all the other objects, they are ranked according to their scores.\footnote{~Formally, \citet{Gonzalez-DiazHendrickxLohmann2013} prove homogeneous treatment of victories only for generalized row sum with $\varepsilon = 1 / \left[ m (n-2) \right]$, but it remains valid for any $\varepsilon > 0$.}
As $m_{ik} = m_{jk}$ for all $X_k \in N \setminus \{ X_i, X_j \}$ holds for any $X_i, X_j \in N$ in round-robin ranking problems, it highlights that generalized row sum and least squares satisfy $IIM$ on the domain of $\mathcal{R}^R$.
\end{remark}

Axioms $NEU$ and $SYM$ are difficult to debate, therefore Theorem~\ref{Theo61} implies $CS$ is a property one would rather not have in the general case. It reinforces the significance of Section~\ref{Sec3}; weakening of consistency seems to be desirable in order to avoid independence of irrelevant matches (results). 

\section{Conclusions} \label{Sec5}

\begin{table}[htbp]
\centering
\caption{Axiomatic properties of ranking methods}
\label{Table9}
\begin{ThreePartTable}
\noindent\makebox[\textwidth]{
	\begin{tabularx}{1\textwidth}{l CCCCCC} \toprule
    Property & Score$^{\dag}$ & Generalized row sum$^{\ddag}$ & Least squares & Fair bets / dual fair bets$^\ast$ & Copeland fair bets \\ \midrule
    ($NEU$) & (\textcolor{PineGreen}{\ding{52}}) & (\textcolor{PineGreen}{\ding{52}}) & (\textcolor{PineGreen}{\ding{52}}) & (\textcolor{PineGreen}{\ding{52}}) & \textcolor{PineGreen}{\ding{52}} \\
    ($SYM$) & (\textcolor{PineGreen}{\ding{52}}) & (\textcolor{PineGreen}{\ding{52}}) & (\textcolor{PineGreen}{\ding{52}}) & (\textcolor{PineGreen}{\ding{52}}) & \textcolor{PineGreen}{\ding{52}} \\
    ($INV$) & (\textcolor{PineGreen}{\ding{52}}) & (\textcolor{PineGreen}{\ding{52}}) & (\textcolor{PineGreen}{\ding{52}}) & (\textcolor{BrickRed}{\ding{55}}) & \textcolor{PineGreen}{\ding{52}} \\ \midrule
    ($CS$) & (\textcolor{PineGreen}{\ding{52}}) & (\textcolor{BrickRed}{\ding{55}}) & (\textcolor{BrickRed}{\ding{55}}) & (\textcolor{BrickRed}{\ding{55}}) & \textcolor{BrickRed}{\ding{55}} \\
    ($FP$) & (\textcolor{PineGreen}{\ding{52}}) & (\textcolor{PineGreen}{\ding{52}}) & (\textcolor{PineGreen}{\ding{52}}) & (\textcolor{PineGreen}{\ding{52}}) & \textcolor{PineGreen}{\ding{52}} \\
    $EP$ & \textcolor{PineGreen}{\ding{52}} & \textcolor{BrickRed}{\ding{55}} & \textcolor{BrickRed}{\ding{55}} & \textcolor{BrickRed}{\ding{55}} & \textcolor{BrickRed}{\ding{55}} \\
    $RCS$ & \textcolor{PineGreen}{\ding{52}} & \textcolor{PineGreen}{\ding{52}}\textcolor{BrickRed}{\ding{55}}$^\diamond$ & \textcolor{PineGreen}{\ding{52}} & \textcolor{BrickRed}{\ding{55}} & \textcolor{BrickRed}{\ding{55}} \\ \midrule
    ($IIM$) & (\textcolor{PineGreen}{\ding{52}}) & (\textcolor{BrickRed}{\ding{55}}) & (\textcolor{BrickRed}{\ding{55}}) & (\textcolor{BrickRed}{\ding{55}}) & \textcolor{BrickRed}{\ding{55}} \\
    $IIR$ & \textcolor{PineGreen}{\ding{52}} & \textcolor{BrickRed}{\ding{55}} & \textcolor{BrickRed}{\ding{55}} & \textcolor{BrickRed}{\ding{55}} & \textcolor{BrickRed}{\ding{55}} \\ \bottomrule
    \end{tabularx} }
    \vspace{0.15cm}
    \begin{tablenotes}
        \footnotesize
        %[$^\ast$]
        \item Axioms introduced in the literature and known results are in parenthesis (see the text for references); others are our contribution
        \item[$^{\dag}$]\citet{Gonzalez-DiazHendrickxLohmann2013} define the score method differently; their findings are in parenthesis
        \item[$^{\ddag}$]\citet{Gonzalez-DiazHendrickxLohmann2013} discuss generalized row sum only for $\varepsilon = \left[ 1 / m(n-2) \right]$; their findings are in parenthesis
        \item[$^\ast$]\citet{Gonzalez-DiazHendrickxLohmann2013} do not analyse dual fair bets; their findings are in parenthesis
        \item[$^\diamond$] Depends on the choice of $\varepsilon$; the answer is positive if the parameter is inversely proportional to the number of added ranking problems
    \end{tablenotes}
\end{ThreePartTable}
\end{table}

Our results concerning the connection of the axioms and ranking methods are summarized in Table~\ref{Table9}. Score satisfies all properties, however, $IIM$ is not favourable in the presence of missing and multiple comparisons. The findings recommend to use generalized row sum with a parameter somewhat proportional to the number of matches, for example, the upper bound of reasonable choice $1 / \left[ m(n-2) \right]$. It is not surprising given the statistical background of the method \citep{Chebotarev1994}. Then generalized row sum and least squares cannot be distinguished with respect to the properties examined.\footnote{~Some of their differences are highlighted by \citet{Gonzalez-DiazHendrickxLohmann2013}.}

A drawback of fair bets (and its dual) was eliminated by the introduction of Copeland fair bets, but it does not affect other axioms. \citet{ChebotarevShamis1999}'s analysis of \emph{self-consistent monotonicity} confirm that 'manipulation' with win-loss combining scoring procedures is not able to correct some inherent features of this class.

It has been investigated whether the ranking methods meet the properties on the restricted domain of round-robin tournaments. Since generalized row sum and least squares coincide with the score on this set, they perform perfectly -- in this case it is difficult to debate $IIM$ ($IIR$). However, the behaviour of fair bets and its peers remain unchanged even on this narrow subset, so a rank reversal may occur after adding two simple round-robin ranking problems. It seems to be a strong argument against their application.\footnote{~\citet{Gonzalez-DiazHendrickxLohmann2013} does not mention it as a drawback.}

We have also aspired to give simple counterexamples, minimal with respect to the number of objects and matches. It shows that the violation of these properties remains an issue still in the case of relatively small problems.

\begin{figure}[htbp]
\centering
\caption[Connections among the axioms]{Connections among the axioms

\vspace{0.25cm}
\footnotesize{Arrows sign implication. In certain cases some axioms together determine another such as $NEU + SYM + CS \Rightarrow IIM$. Nodes with dashed, red lines are properties introduced by us; continuous, blue lines are our results; dashed, green lines are trivial relationships.}
}
\label{Fig11}
%\vspace{0.5cm}
\begin{tikzpicture}[scale=1,auto=center, transform shape, >=triangle 45]
\tikzstyle{every node}=[draw,shape=circle];
  \node (NEU) at (6,2) {$NEU$};
  \node (CS) at (0,0) {$\quad CS \quad$};
  \node (FP) at (-3,4) {$FP$};
  \node [red,dashed,thick] (EP) at (-3,0) {$EP$};
  \node [red,dashed,thick] (RCS) at (0,4) {$RCS$};
  \node (SYM) at (1,2) {$SYM$};
  \node (INV) at (4,2) {$INV$};
  \node (IIM) at (9,0) {$IIM$};  
  \node [red,dashed,thick] (IIR) at (9,4) {$IIR$};
          
  \foreach \from/\to in {CS/EP,CS/RCS,IIM/IIR,INV/SYM,EP/FP}
    \draw [->,green,dashed] (\from) -- (\to);
    
% RCS+SYM->INV
\draw[blue](RCS) -- (2,3);
\draw[blue](SYM) -- (2,3);
\draw[->,blue,thick](2,3) -- (INV);
% NEU+CS+SYM->IIM
\draw[blue](NEU) -- (7,0);
\draw[blue](CS) -- (7,0);
\draw[blue](SYM) -- (7,0);
\draw[->,blue,ultra thick](7,0) -- (IIM);
\end{tikzpicture}
\end{figure}

Figure~\ref{Fig11} gives a comprehensive picture about the axioms investigated. Three novel properties were introduced. $EP$ is between two extreme additivity requirements, the severe $CS$ and the weak $FP$. However, our methods show the same behaviour against equality preservation as against consistency. The other direction of mitigating $CS$, result consistency ($RCS$) -- made possible by the differentiation of results and matches matrices -- yields more success. The new setting is also responsible for the introductione of independence of irrelevant results, a weak form of independence of irrelevant matches already defined.
Relationships among the axioms shed light on some discoveries of Table~\ref{Table9}: the strong connection of $IIM$ and $CS$ justifies the violation of both properties, the violation of $INV$ by fair bets implies that it does not satisfy $RCS$.

At least two main directions of future research emerge. The first is to extend the scope of the analysis with other scoring procedures. For example, \citet{SlikkerBormvandenBrink2012} define a general framework for ranking the nodes of directed graphs, resulting in fair bets as a limit. Positional power \citep{HeringsvanderLaanTalman2005} is also worth to analyse since it is similar to least squares from a graph-theoretic point of view \citep{Csato2015a}.
The second course is to get some characterization results, an intended end goal of any axiomatic analysis.
%Nevertheless, it turns out to be a difficult road in the case of general ranking problems.

%\bibliographystyle{apalike}
%\bibliography{All_references}

\end{document}